\newtheorem{theorem}{Theorem}
\newtheorem{lemma}{Lemma}
\newtheorem{property}{Property}
\newtheorem{definition}{Definition}
\newcommand{\planar}[1]{${#1}$-planar\xspace}
\newcommand{\quasi}[1]{${#1}$-quasiplanar\xspace}
\newcommand{\kplanar}{\planar{k}}
\newcommand{\kquasi}{\quasi{k}}
\newcommand{\kplusquasi}{\quasi{(k+1)}}
\newcommand{\stg}{simple topological graph\xspace}
\newcommand{\kplanarstg}{\planar{k} \stg}
\newcommand{\twoplanarstg}{\planar{2} \stg}
\newcommand{\kplusquasistg}{\quasi{(k+1)} \stg}
\newcommand{\f}{f}
\newcommand{\g}{g}
\newcommand{\inte}{\ensuremath{\operatorname{int}}}
\newcommand{\exte}{\ensuremath{\operatorname{ext}}}
\newcommand{\cs}{\ensuremath{\mathcal{C}_{\mathrm{s}}}}
\newcommand{\cns}{\ensuremath{\mathcal{C}_{\mathrm{ns}}}}
\newcommand{\vns}{\ensuremath{V_{\mathrm{ns}}}}
\newcommand{\E}[1]{\ensuremath{\operatorname{E}(#1)}}
\newcommand{\V}[1]{\ensuremath{\operatorname{V}(#1)}}
\newcommand{\VV}[1]{\ensuremath{V_{#1}}}
\newcommand{\EE}[1]{\ensuremath{#1}}
\newcommand{\R}[1]{\ensuremath{\operatorname{R}(#1)}}
\newcommand{\D}[1]{\ensuremath{\operatorname{D}(#1)}}
\newcommand{\edge}[2]{\ensuremath{#1#2}}
\title{Simple $k$-Planar Graphs are Simple $(k+1)$-Quasiplanar\thanks{Preliminary versions of the results presented in this paper appeared at WG 2017~\cite{DBLP:conf/wg/AngeliniBBLBDLM17} and MFCS 2017~\cite{DBLP:conf/mfcs/0001T17}.}}
\author{
Patrizio~Angelini\footnotemark[2] \and
Michael~A.~ Bekos\footnotemark[2] \and
Franz~J.~ Brandenburg\footnotemark[3] \and
Giordano~{Da~Lozzo}\footnotemark[4] \and
Giuseppe~{Di~Battista}\footnotemark[4] \and
Walter~Didimo\footnotemark[5] \and
Michael~Hoffmann\footnotemark[6]\and
Giuseppe~Liotta\footnotemark[5] \and
Fabrizio~Montecchiani\footnotemark[5] \and
Ignaz~Rutter\footnotemark[3] \hfil 
Csaba~D.~T\'oth\footnotemark[7]~~~\footnotemark[8]
\\[0.2in]
\footnotemark[2]~~~Universit\"at T\"ubingen, T\"ubingen, Germany\\\texttt{{\{angelini,bekos\}@informatik.uni-tuebingen.de}}
\\[0.1in]
\footnotemark[3]~~~University of Passau, Passau, Germany\\\texttt{{\{brandenb,rutter\}@fim.uni-passau.de}}
\\[0.1in]
\footnotemark[4]~~~Roma Tre University, Rome, Italy\\\texttt{{\{giordano.dalozzo,gdb\}@uniroma3.it}}
\\[0.1in]
\footnotemark[5]~~~Universit\'a degli Studi di Perugia, Perugia, Italy\\\texttt{{\{walter.didimo,giuseppe.liotta,fabrizio.montecchiani\}@unipg.it}}
\\[0.1in]
\footnotemark[6]~~~Department of Computer Science, ETH Z\"urich, Z\"urich, Switzerland\\\texttt{{hoffmann@inf.ethz.ch}}
\\[0.1in]
\footnotemark[7]~~~California State University Northridge, Los Angeles, CA, USA\\\texttt{{cdtoth@acm.org}}
\\[0.1in]
\footnotemark[8]~~~Tufts University, Medford, MA, USA
}
\date{} 
\begin{document}

\maketitle

\begin{abstract}
A simple topological graph is $k$-quasiplanar ($k\geq 2$) if it contains no $k$ pairwise crossing edges, and $k$-planar if no edge is crossed more than $k$ times. In this paper, we explore the relationship between $k$-planarity and $k$-quasiplanarity to show that, for $k \geq 2$, every $k$-planar simple topological graph can be transformed into a $(k+1)$-quasiplanar simple topological graph.
\end{abstract}

\clearpage

\section{Introduction}

A \emph{topological graph} is a graph drawn in the plane
such that vertices are mapped to distinct points and
each edge is mapped to a Jordan arc between its endpoints without passing through any other vertex.
In this paper, we only deal with topological graphs containing neither multi-edges nor self-loops.
We do not distinguish between the vertices (resp., edges) and the points (resp., arcs) they are mapped to.
A topological graph is \emph{simple} if any two edges intersect in at most one point,
which is either a common endpoint or a proper crossing.

A topological graph is \emph{$k$-planar}, for $k \geq 0$, if each
edge is crossed at most $k$ times, and \emph{$k$-quasiplanar}, for
$k \geq 2$, if there are no $k$ pairwise crossing edges.
A graph is \emph{$k$-planar} (\emph{$k$-quasiplanar}) if it is isomorphic to
the underlying abstract graph of a $k$-planar ($k$-quasiplanar) topological graph.

A graph is \emph{simple $k$-planar} (\emph{simple $k$-quasiplanar}) if it is isomorphic to
the underlying abstract graph of a simple $k$-planar ($k$-quasiplanar) topological graph.
By definition, a graph is planar if and only if it is $0$-planar ($2$-quasiplanar). Note that
$3$-quasiplanar graphs are also called \emph{quasiplanar}.
Refer to \figurename~\ref{fi:intro1} for examples.

\begin{figure}[t]
	\centering
    \begin{minipage}[b]{.2\textwidth}
        \centering
        \includegraphics[page=1,width=\textwidth]{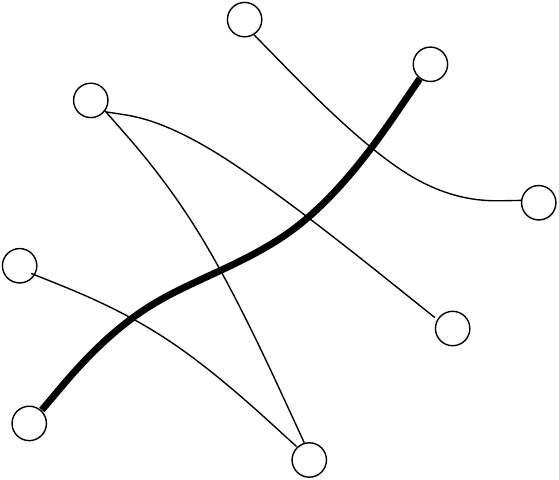}
        \subcaption{~}\label{fi:intro1-a}
    \end{minipage}
	\hfil
	\begin{minipage}[b]{.2\textwidth}
		\centering
		\includegraphics[page=1,width=\textwidth]{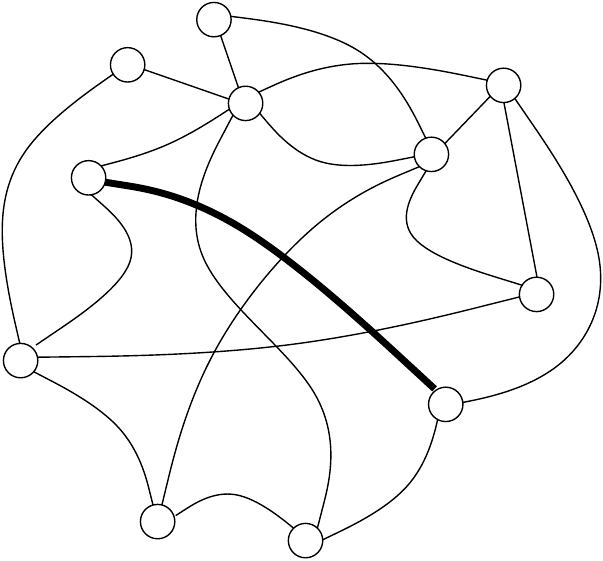}
		\subcaption{~}\label{fi:intro1-b}
	\end{minipage}
	\hfil
	\begin{minipage}[b]{.2\textwidth}
		\centering
		\includegraphics[page=1,width=\textwidth]{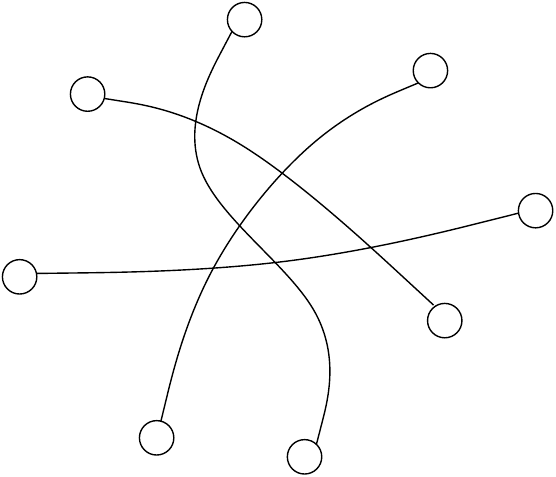}
		\subcaption{~}\label{fi:intro1-c}
	\end{minipage}
	\hfil
	\begin{minipage}[b]{.2\textwidth}
		\centering
		\includegraphics[page=1,width=\textwidth]{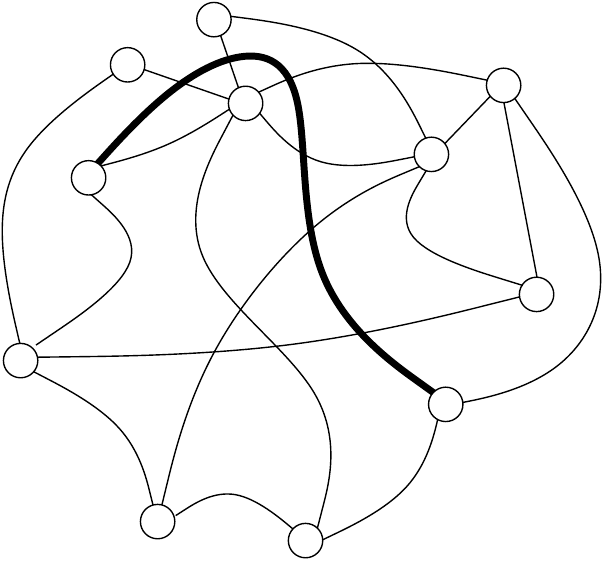}
		\subcaption{~}\label{fi:intro1-d}
	\end{minipage}
	\caption{%
	(a) A crossing configuration that is forbidden in a $3$-planar topological graph.
	(b) A $3$-planar topological graph.
	(c) A crossing configuration that is forbidden in a $4$-quasiplanar topological graph.
	(d) A $4$-quasiplanar topological graph obtained from the one of \figurename~(b) by suitably rerouting the thick edge.}
	\label{fi:intro1}
\end{figure}

The $k$-planar and $k$-quasiplanar graphs are part of a family of
classes of topological graphs that are defined by restrictions on
crossings. Informally, these classes are called \emph{beyond planar}, we refer the interested reader to~\cite{DBLP:journals/csur/DidimoLM19}. Further popular classes are defined by
the exclusion of (natural or radial) grids~\cite{DBLP:journals/comgeo/AckermanFPS14,DBLP:journals/gc/PachPST05}, including fan-planar~\cite{DBLP:journals/corr/KaufmannU14} and fan-crossing free
graphs~\cite{DBLP:journals/algorithmica/CheongHKK15}.

Primarily, the edge density of graphs has been studied for these classes.
By Euler's polyhedron formula, every planar graph on $n\geq 3$ vertices has at most $3n-6$ edges, and this bound is tight.
In fact, for constant $k\geq 0$, every $k$-planar graph is sparse.
Pach and T\'oth~\cite{DBLP:journals/combinatorica/PachT97} proved that a $k$-planar graph with $n$ vertices has at most $4.108\sqrt{k}\,n$ edges\footnote{The upper bound was stated for $k$-planar \emph{simple} topological graphs in~\cite{DBLP:journals/combinatorica/PachT97}, but the proof extends verbatim to all $k$-planar topological graphs.}.
For simple $k$-planar graphs, where $k \leq 4$, Pach and T\'oth~\cite{DBLP:journals/combinatorica/PachT97} also established a finer bound of $(k+3)(n-2)$, and proved that this bound is tight for $k \leq 2$.
For $k=3$ and for $k=4$, the best known upper bounds on the number of edges are $5.5n - 11$ and $6n-12$, respectively, which are tight up to small additive constants~\cite{DBLP:journals/corr/Ackerman15,DBLP:conf/gd/Bekos0R16,DBLP:journals/dcg/PachRTT06}. A consequence of the result in~\cite{DBLP:journals/corr/Ackerman15} is that the upper bound for $k$-planar graphs can be improved to $3.81\sqrt{k}\,n$.

Concerning $k$-quasiplanar graphs, a 20-year-old conjecture by Pach, Shah\-rokhi, and Szegedy asserts that for every $k\geq 2$ there is a constant $c_k$ such that every $k$-quasiplanar graph with $n$ vertices has at most $c_k n$ edges~\cite{DBLP:journals/algorithmica/PachSS96}. However, the conjecture has only been settled for $k=2,3,4$. Agarwal et al.~\cite{DBLP:journals/combinatorica/AgarwalAPPS97} were the first to prove that simple $3$-quasiplanar  graphs have a linear number of edges. This was generalized by Pach et al.~\cite{DBLP:conf/jcdcg/PachRT02}, who proved that \emph{every} $3$-quasiplanar graph on~$n$ vertices has at most $65n$ edges. This bound was further improved to $8n-O(1)$ by Ackerman and Tardos~\cite{DBLP:journals/jct/AckermanT07}. For simple $3$-quasiplanar graphs they also proved a bound of $6.5n-20$, which is tight up to an additive constant. Ackerman~\cite{DBLP:journals/dcg/Ackerman09} proved that $4$-quasiplanar graphs have at most a linear number of edges.
For $k \geq 5$, several authors have shown super-linear upper bounds on the number of edges in $k$-quasiplanar graphs (see, e.g.,~\cite{DBLP:journals/jct/CapoyleasP92,DBLP:conf/compgeom/FoxP08,DBLP:journals/siamdm/FoxPS13,DBLP:journals/algorithmica/PachSS96,DBLP:journals/dcg/Valtr98}). The most recent results are due to Suk and Walczak~\cite{DBLP:journals/comgeo/SukW15}, who proved that every $k$-quasiplanar simple topological graph on $n$ vertices has at most $c_k' n \log n$ edges, where $c_k'$ depends only on $k$.  For $k$-quasiplanar topological graphs where two edges can cross in at most $t$ points, they give an upper bound of $2^{\alpha(n)^c} n \log n$, where $\alpha(n)$ is the inverse of the Ackermann function, and $c$ depends only~on~$k$~and~$t$.

Note that every simple $k$-planar graph is simple $(k+1)$-planar, and every simple $k$-quasiplanar graph is simple $(k+1)$-quasiplanar, by definition.
It is not difficult to see that the converse is false in both cases.
This naturally defines a hierarchy of $k$-planarity and a hierarchy of \mbox{$k$-quasiplanarity}.
However, the relation between the hierarchies is not fully understood.
For every $k\geq 3$, there are infinitely many simple $3$-quasiplanar graphs that are not simple $k$-planar~\cite{BAE2018}.
Also, it is easy to see that, for $k \geq 1$, every $k$-planar simple topological graph is $(k+2)$-quasiplanar. Indeed, if a $k$-planar simple topological graph $G$ were not  $(k+2)$-quasiplanar, it would have $k+2$ pairwise crossing edges,
each of which crosses at least $k+1$ other edges, thus contradicting the hypothesis that $G$ is $k$-planar.

\medskip

\noindent{\bf Contribution.} In this paper we focus on simple topological graphs and prove a notable inclusion relationship between the $k$-planarity and the $k$-quasiplanarity hierarchies. The proof is constructive and sheds a new light on the structure of $k$-planar and $k$-quasiplanar simple topological graphs.
We show that every simple $k$-planar graph is simple $(k+1)$-quasiplanar for every $k \geq 2$.
More precisely, we show that a $k$-planar simple topological graph, with $k \ge 2$, can be transformed into an isomorphic simple topological graph that contains no $k+1$ pairwise crossing edges (although an edge may be crossed more than $k$ times). For example, the simple topological graph in \figurename~\ref{fi:intro1-b} is $3$-planar but not $4$-quasiplanar. By rerouting an edge, we obtain the simple topological graph in \figurename~\ref{fi:intro1-d}, which is $4$-quasiplanar (but not $3$-planar).
Note that this result cannot be extended to the case $k=1$, as a $2$-quasiplanar graph is planar.

The proof of our result is based on the following novel methods:
\emph{(i)} A general-purpose technique to ``untangle'' a set of pairwise crossing edges.
More precisely, we show how to reroute the edges of a $k$-planar simple topological graph in
such a way that all vertices of a set of $k+1$ pairwise crossing edges lie in
the same connected region of the plane (that is, a face of the arrangement induced
by the edges). \emph{(ii)} A global edge rerouting technique, whose main ingredients are a matching argument and a systematic study of the cycles in an auxiliary ``conflict'' graph, used to remove all forbidden configurations of $(k+1)$ pairwise crossing edges from a $k$-planar simple topological graph, provided that these edges are ``untangled.''

\medskip

\noindent{\bf Paper organization.} The remainder of the paper is structured as follows. In Section~\ref{sec:basic} we give some basic terminology, we describe the untangling procedure in \emph{(i)}, and we prove important properties of the resulting topological graphs. Section~\ref{se:rerouting-strategy} outlines our general proof strategy. Section~\ref{sec:function-f} shows how to compute a global rerouting as described in \emph{(ii)} that results in a $(k+1)$-quasiplanar topological graph, for $k \ge 3$. Section~\ref{se:properties-rerouted} proves properties of a rerouted topological graph that are useful to prove both $(k+1)$-quasiplanarity when $k=2$ and simplicity in Sections~\ref{se:kquasiplanarity} and~\ref{se:simplicity}, respectively.
In particular, Section~\ref{se:kquasiplanarity} also contains a more sophisticated argument to compute a suitable global rerouting when $k=2$. Conclusions and open problems are in Section~\ref{sec:conclusions}.

\section{Basic Tools and Properties}\label{sec:basic}

We first state further basic definitions and notation that will be used throughout the paper.
As already stated, we only consider graphs with neither parallel edges nor self-loops. Also, we
assume our graphs to be connected, as our results immediately carry over to disconnected graphs.
In notation and terminology, we do not distinguish between the vertices (edges) of a topological graph
and the points (Jordan arcs) representing them.
Recall that a topological graph is simple if any two edges share at most one point, which is either a common endpoint or a proper crossing.
A topological graph is \emph{almost simple} if any two edges share at most one internal point, which is a proper crossing
(i.e., pairs of adjacent edges may cross but at most once).
For a topological graph~$G$, the set $\mathbb{R}^2\setminus G$ is open, and its connected components are called \emph{faces}.
The unique unbounded face is the \emph{outer face}, any bounded face is an \emph{inner face}.
 Note that the boundary of a face can contain vertices of the graph and crossing points between edges.

For two graphs or topological graphs, $G$ and $G'$, we write $G \simeq G'$ if they are isomorphic
or their underlying abstract graphs are isomorphic.
A graph $G$ is \kplanar (\kquasi) if there exists a \kplanar (\kquasi) topological graph $G'$ such that $G\simeq G'$.

\begin{figure}[t]
	\centering
    \begin{minipage}[b]{.25\textwidth}
		\centering
		\includegraphics[page=2,width=\textwidth]{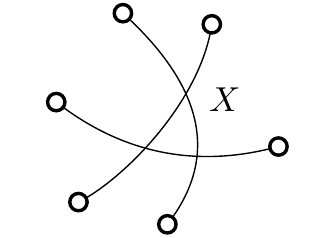}
		\subcaption{~}\label{fi:bad-bundle}
	\end{minipage}
	\hfil
	\begin{minipage}[b]{.25\textwidth}
        \centering
        \includegraphics[page=1,width=\textwidth]{bundle-2.pdf}
        \subcaption{~}\label{fi:good-bundle}
    \end{minipage}
    \hfil
	\begin{minipage}[b]{.25\textwidth}
        \centering
        \includegraphics[page=3,width=\textwidth]{bundle-2.pdf}
        \subcaption{~}\label{fi:k-gon}
    \end{minipage}
	\caption{%
	(a)~A tangled $3$-crossing; the circled vertices and the solid vertices belong to different faces of the arrangement.
	(b)~An untangled $3$-crossing; all vertices belong to the same face of the arrangement (the outer face).
	(c)~The $6$-gon spanned by the $3$-crossing in (b). }
	\label{fi:bundles}
\end{figure}
Let $G=(V,E)$ be a simple topological graph and let $k\geq 2$ be an integer.
A \emph{fan} of $G$ is a set of edges that share a common endpoint.
A set $X\subset E$ of $k$ pairwise crossing edges is called a \emph{$k$-crossing}.
Note that the edges in $X$ are pairwise non adjacent since $G$ is a simple topological graph.
For a $k$-crossing $X$, denote by $\V{X}$ the set of $2k$ endpoints of the $k$ edges in $X$.
The \emph{arrangement of $X$}, denoted by $A_X$, is the arrangement of the Jordan arcs in $\EE{X}$.
A \emph{node} of $A_X$ is either a vertex or a crossing point of two edges in $\EE{X}$.
A \emph{segment} of $A_X$ is a part of an arc in $\EE{X}$ between two consecutive nodes
(i.e., a maximal uncrossed part of an edge in $\EE{X}$).
A $k$-crossing $X$ is \emph{untangled} if in the arrangement $A_X$ all $2k$ vertices in $\V{X}$ are incident to a common face. Otherwise, it is \emph{tangled}. For example, the $3$-crossing in \figurename~\ref{fi:bad-bundle} is tangled, whereas the $3$-crossing in \figurename~\ref{fi:good-bundle} is untangled. We observe the following.

\begin{property}\label{pr:k-crossings-disjoint}
Let $G=(V,E)$ be a \kplanarstg topological graph and let $X$ be a $(k+1)$-crossing in $G$.
An edge in $\EE{X}$ cannot be crossed by any other edge in $E \setminus \EE{X}$.
Consequently, for any two distinct $(k+1)$-crossings $X$ and $Y$ in $G$, we have $\EE{X} \cap \EE{Y} = \emptyset$.
\end{property}
\begin{proof}
Each edge $e$ in a $(k+1)$-crossing $X$ crosses each of the remaining $k$ edges in $\EE{X}$.
Since graph $G$ is $k$-planar, edge $e$ is not crossed by any other edge in $ E \setminus \EE{X}$.
\end{proof}

In the next subsection we show that tangled $(k+1)$-crossings can always be removed.

\subsection{Eliminating tangled $(k+1)$-crossings}\label{ssec:untangling}

The proof of the next lemma describes how to ``untangle'' all $(k+1)$-crossings in a \kplanarstg.
This method is of general interest, as it gives more insights on the structure of {\kplanarstg}s.

\begin{lemma}\label{lem:removing-tangled}
Let $G$ be a \kplanarstg. There exists a \kplanarstg $G'$, $G' \simeq G$, without tangled $(k+1)$-crossings.
\end{lemma}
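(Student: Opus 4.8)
The plan is to take a $k$-planar simple topological graph $G$ that contains at least one tangled $(k+1)$-crossing and show how to reroute edges so as to strictly decrease some potential, while keeping the graph simple, $k$-planar, and isomorphic to $G$; iterating then yields $G'$. The natural potential to track is the total number of crossings, or a lexicographic combination of (number of tangled $(k+1)$-crossings, total number of crossings). The key observation from Property~\ref{pr:k-crossings-disjoint} is that the edges of a $(k+1)$-crossing $X$ form an isolated system: no edge outside $\EE{X}$ crosses any edge of $\EE{X}$. So, locally around $X$, the drawing looks exactly like $k+1$ pairwise crossing Jordan arcs, and the arrangement $A_X$ partitions the relevant part of the plane into faces whose boundaries consist only of segments and nodes of $A_X$. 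Tangled means the $2(k+1)$ endpoints in $\V{X}$ are not all on one face of $A_X$.

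First I would analyze the combinatorial structure of $A_X$ for a set of $k+1$ pairwise crossing arcs with distinct endpoints. Each arc $e \in \EE{X}$ is divided by its $k$ crossing points into $k+1$ segments; the two extreme segments are incident to the two endpoints of $e$. I would argue that there is a face $F^\ast$ of $A_X$ incident to many of the endpoints — indeed, walking along a single arc $e$ from one endpoint to the other, the endpoint of $e$ lies on two faces, and by choosing $e$ and the side appropriately one can find a face incident to both endpoints of $e$ plus, potentially, endpoints of the arcs crossing $e$ near its ends. The real content is: if $X$ is tangled, some endpoint $v$ of some edge $e\in\EE{X}$ lies on a face $F_v$ that does not contain all of $\V{X}$; I want to reroute $e$ near $v$ so that $v$ is brought onto a larger face, without creating new crossings. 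Concretely, I would replace the initial segment of $e$ at $v$ by a curve that hugs the boundary of $F_v$: start at $v$, travel just inside $F_v$ alongside the old segment, and at the first crossing node re-enter the arrangement on the correct side — effectively sliding $v$'s attachment point of $e$ around the face. Because we route inside a single face of $A_X$ and parallel to existing arcs, this introduces no crossings with $\EE{X}$ (it can be made to cross the same arcs in the same order, or fewer of them), and by Property~\ref{pr:k-crossings-disjoint} there is nothing else nearby to cross, so the graph stays $k$-planar; simplicity is maintained because the rerouted $e$ meets each other edge in at most one point.

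The main obstacle I anticipate is making the rerouting argument precise and ensuring termination. One must show that after finitely many such local rerouts the crossing $X$ becomes untangled, and — crucially — that untangling $X$ does not recreate tangled crossings elsewhere or blow up the crossing count unboundedly. Since each reroute is designed not to increase the number of crossings in $G$, and since a reroute that genuinely moves an endpoint onto a strictly larger face can be set up to strictly decrease it (by bypassing at least one crossing), the total crossing number serves as a monotone decreasing integer potential; when it can no longer decrease, no tangled $(k+1)$-crossing remains. A subtlety is that a single $(k+1)$-crossing might need several rerouts, and one must check that the intermediate graphs are still simple and $k$-planar — this is where Property~\ref{pr:k-crossings-disjoint}, reapplied to the modified graph, does the heavy lifting, since any $(k+1)$-crossing in the new graph is again an isolated system. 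I would therefore structure the proof as: (1) describe the one-step rerouting operation on a tangled crossing and prove it preserves $G\simeq G'$, simplicity, and $k$-planarity while not increasing the crossing number; (2) prove that whenever a $(k+1)$-crossing is tangled, some such operation strictly decreases the crossing number; (3) conclude by induction on the crossing number that a tangle-free $G'$ is reached.
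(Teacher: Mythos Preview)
There are two genuine gaps. First, you misread Property~\ref{pr:k-crossings-disjoint}: it guarantees that no edge of $E\setminus\EE{X}$ \emph{crosses} an edge of $\EE{X}$, not that the faces of $A_X$ are empty of~$G$. Any face $F_v$ of $A_X$, inner or outer, may contain an arbitrary subgraph of $G$---vertices in its interior, edges that stay inside it, and edges leaving through the vertices of $\V{X}$ that lie on $\partial F_v$. Rerouting $e$ through $F_v$, even hugging $\partial F_v$, can therefore introduce crossings with such edges and destroy both simplicity and $k$-planarity; the phrase ``there is nothing else nearby to cross'' is simply false. Second, your potential does not do what you need: a tangled and an untangled $(k+1)$-crossing each contribute exactly $\binom{k+1}{2}$ crossings (every pair in $X$ crosses once, since $G$ is simple), so untangling by itself cannot strictly decrease the crossing number. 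To force a strict drop your local move would have to make some pair in $X$ disjoint, and you give no argument that this is always possible.

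The paper's proof sidesteps both issues with a single global redrawing rather than incremental local reroutes. It uses Property~\ref{pr:k-crossings-disjoint} in the right direction---to conclude that $(V,E\setminus\EE{X})$ decomposes as the disjoint union of subgraphs $G_f$, one per face $f$ of $A_X$---then places a circle $C$ in the plane, maps each $G_f$ by a homeomorphism to its own region outside $C$ with $V_f$ landing on $C$, and redraws the $k+1$ edges of $X$ as straight chords inside $C$. Two edges cross in the new drawing only if they already crossed in $G$, so the result is simple and $k$-planar, $X$ is now untangled (or no longer a $(k+1)$-crossing at all), and the number of tangled $(k+1)$-crossings strictly decreases.
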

\begin{proof}
We first show how to untangle a $(k+1)$-crossing $X$ in a \kplanarstg~$G$ while neither creating new $(k+1)$-crossings nor introducing new crossings.

Let $X$ be a tangled $(k+1)$-crossing and let $A_X$ be its arrangement. For each face $f$ of $A_X$, denote by $\VV{f}$ the set of nodes in $\V{X}$ incident to $f$. Since every vertex in $\VV{X}$ is incident precisely to one arc in $X$, the set $\V{X}$ is partitioned into subsets $\VV{f}$ over all faces $f$ of $A_X$.

\begin{figure}[t]
	\centering
    \begin{minipage}[b]{.4\textwidth}
        \centering
        \includegraphics[page=1,width=0.9\textwidth]{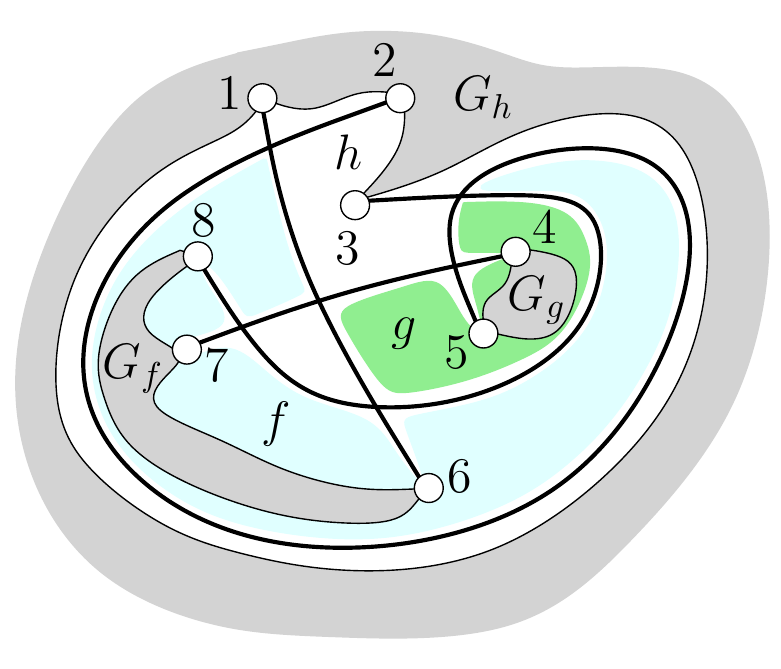}
        \subcaption{~}\label{fi:untangling-a}
    \end{minipage}
	\hfil
	\begin{minipage}[b]{.35\textwidth}
		\centering
		\includegraphics[page=1,width=0.9\textwidth]{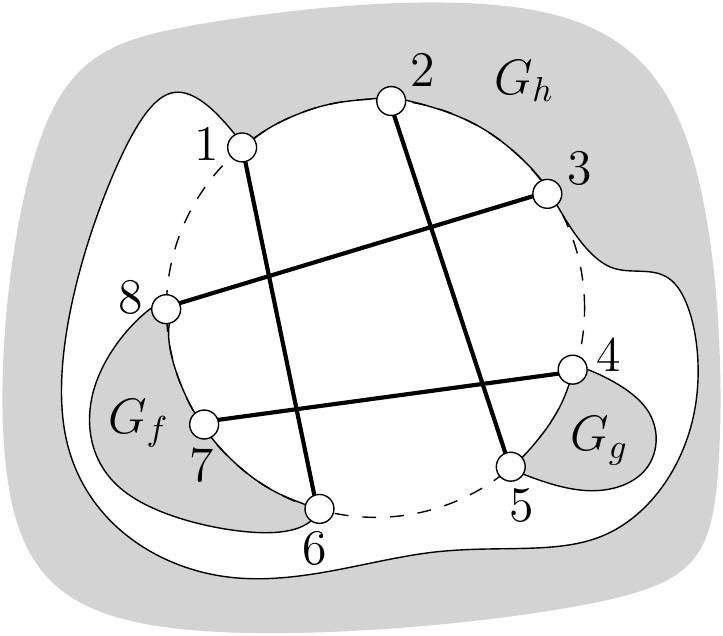}
		\subcaption{~}\label{fi:untangling-b}
	\end{minipage}
	\caption{%
	Illustration of the untangling procedure in the proof of Lemma~\ref{lem:removing-tangled}:
	(a)~A $3$-planar simple topological graph with a $4$-crossing $X$ (thicker edges).
	(b)~The topological graph resulting from the procedure that untangles~$X$.}
	\label{fi:untangling}
\end{figure}

For every face $f$ of $A_X$, denote by $G_f$ the subgraph of $G$ consisting of the vertices of $\VV{f}$, and of all vertices and edges of $G$ that lie in the interior of $f$. Refer to \figurename~\ref{fi:untangling-a} for an illustration.
%
%
By Property~\ref{pr:k-crossings-disjoint}, every edge in $E\setminus \EE{X}$ lies in a face of $A_X$.
Consequently, the topological graph $(V,E\setminus \EE{X})$ is the disjoint union of the graphs $G_f$,
each of which is $k$-planar.

For every inner face $f$, there is a region $D_f\subset f$ homeomorphic to an open disk such that its boundary contains $\VV{f}$, and all other vertices and edges in $G_f$ lie in $\inte{D_f}$. For the outer face $h$, there is a region $D_h$ homeomorphic to the complement of a closed disk such that its boundary contains $V_h$, and all other vertices and edges lie in $\inte{D_h}$.

We construct a topological graph $G'$, $G'\simeq G$, as follows (see \figurename~\ref{fi:untangling-b}). Let $C$ be a circle in the plane.
For every face $f$ of $A_X$ (including the outer face), apply a homeomorphism that maps the region $D_f$ to some region in the exterior of $C$ 
such that a Jordan arc in $\partial D_f$ that contains $V_f$ is mapped into $C$, and resulting regions are pairwise disjoin. Draw the $k+1$ edges in $X$ as straight-line segments in the interior of $C$. Each subgraph $G_f$ of $G$ is mapped to a $k$-planar topological graph $G_f'$, $G_f'\simeq G_f$. The $(k+1)$-crossing $X$ is mapped to a set $X'$ of $k+1$ edges that is either not a $(k+1)$-crossing or an untangled $(k+1)$-crossing. Since two edges in $G'$ cross only if the corresponding edges cross in $G$, the topological graph $G'$ is simple and $k$-planar, and no new $(k+1)$-crossing is created.

Successively apply the above transformation as long as it contains a tangled $(k+1)$-crossing. Since the number of tangled $(k+1)$-crossings decreases, we eventually obtain a $k$-planar topological graph $G'$, $G'\simeq G$, without tangled $(k+1)$-crossings.
\end{proof}

\subsection{Properties of untangled $(k+1)$-crossings}\label{ssec:properties-untangled}

Let $G_0$ be a simple $k$-planar graph. We wish to show that $G_0$ is simple $(k+1)$-quasiplanar.
We may assume that $G_0$ is \emph{edge-maximal} (i.e., the addition of any edge would yield a graph that is not simple $k$-planar).
Let $G$ be a simple $k$-planar topological graph such that $G\simeq G_0$. We may assume that $G$ is \emph{crossing minimal} (i.e., $G$ has the minimum number of edge crossings over all \kplanarstg{s} isomorphic to $G_0$), and that every $(k+1)$-crossing is untangled by Lemma~\ref{lem:removing-tangled}. We may assume, by applying a projective transformation if necessary, that for every $(k+1)$-crossing $X$,
all vertices of $V(X)$ are incident to the outer face of $A_X$.

Then every (untangled) $(k+1)$-crossing $X$ in $G$ spans a (topological) \emph{$2(k+1)$-gon} in the following sense. All $2(k+1)$ vertices of $\V{X}$ lie on a face $f_X$ of the arrangement $A_X$ induced by the edges of $X$ as drawn in $G$. Any two vertices of $\V{X}$ that are consecutive along the boundary of $f_X$ can be connected by a Jordan arc that closely follows the boundary of $f_X$ and does not cross any edge in $G$; see \figurename~\ref{fi:k-gon}. Together these arcs form a closed Jordan curve, which partitions the plane into two connected regions: let $\R{X}$ denote the closed region homeomorphic to a disk that contains the edges of $X$, and let $\partial\R{X}$ denote the boundary of $\R{X}$. We think of $\partial\R{X}$ as both a closed Jordan curve and as a topological graph that is a $2(k+1)$-cycle. Let $\mathcal X$ be the set of all $(k+1)$-crossings of $G$. By Property~\ref{pr:k-crossings-disjoint}, we may assume that for every $X,X'\in \mathcal{X}$, $X\neq X'$, the regions $\R{X}$ and $\R{X'}$ do not share any interior point. The following observation holds.

\begin{property}\label{pr:gonedge}
For each $X \in \mathcal X$, every pair of consecutive vertices of the $2(k+1)$-cycle $\partial\R{X}$ are connected by an edge in $G$, which is crossing-free.
\end{property}
\begin{proof}
Let $X \in \mathcal X$, and let $u,v\in V$ be two consecutive vertices of the $2(k+1)$-cycle $\partial\R{X}$.  We show that $\edge{u}{v}$ is an edge in $G$. Indeed, if $\edge{u}{v}$ is not an edge of $G$, we can augment $G$ by drawing this edge  as a crossing-free Jordan arc along $\partial\R{X}$ (without violating $k$-planarity). This contradicts our assumption that $G$ is edge-maximal, and thus proves that $\edge{u}{v}$ is an~edge~in~$G$.

We then show that $\edge{u}{v}$ is crossing free in $G$. Indeed, if it crossed any other edge in $G$, we could redraw it as a crossing-free Jordan arc along $\partial\R{X}$, obtaining a topological graph that is still $k$-planar but with fewer crossings than $G$. This  contradicts our assumption that $G$ is crossing minimal.
\end{proof}

By Property~\ref{pr:gonedge} any two consecutive vertices along the boundary $\partial\R{X}$ of a $2(k+1)$-gon $\R{X}$ are connected
by an edge $e$ in $G$. Note that this does not necessarily imply that $e$ is drawn along $\partial\R{X}$. It is possible that the cycle formed by the edge $e$ in $G$ and the portion of $\partial\R{X}$ connecting the endpoints of $e$ contains other parts of the graph.

\begin{property}\label{pr:kgon}
\begin{enumerate}[(a)]
\item[]
\item\label{obs:kgon:1} Let $X_1,X_2\in \mathcal X$ such that $X_1 \neq
  X_2$. Then $\V{X_1}$ and $\V{X_2}$ share at most $2k+1$
  vertices.
\item\label{obs:kgon:2} Let $X_1$, $X_2$, and $X_3$ be three pairwise distinct
  $(k+1)$-crossings in $\mathcal X$. Then $\V{X_1}$, $\V{X_2}$,
  and $\V{X_3}$ share at most two vertices.
\end{enumerate}
\end{property}
\begin{proof}
  \eqref{obs:kgon:1} Suppose that $\partial\R{X_1}$ and $\partial\R{X_2}$ share $2k+2$ vertices. Since $\R{X_1}$ and $\R{X_2}$ are contractible and interior-disjoint, the counterclockwise order of the vertices along $\partial\R{X_1}$ and  $\partial\R{X_2}$, respectively, are reverse to each other. Every edge in $X_i$, for $i\in\{1,2\}$, connects antipodal points along  $\partial\R{X_i}$. Antipodal pairs are invariant under reversal, consequently   every edge in $X_1$ is present in $X_2$, contradicting our assumption that $G$ is a simple graph.

  \eqref{obs:kgon:2} Suppose that $\partial\R{X_1}$, $\partial\R{X_2}$, and $\partial\R{X_3}$ share three distinct vertices $v_1,v_2,v_3$. We obtain a plane drawing of $K_{3,3}$ as follows: Place points $p_1,p_2,p_3$ inside $\R{X_1}$, $\R{X_2}$, $\R{X_3}$, respectively, and connect each of $p_1,p_2,p_3$ to all of  $v_1,v_2,v_3$. All edges incident to $p_i$, for $i\in\{1,2,3\}$, are drawn as  a plane star inside $\R{X_i}$. As the regions $\R{X_1}$, $\R{X_2}$, and $\R{X_3}$ are interior-disjoint, no two edges cross. As $K_{3,3}$ is nonplanar, we obtain a contradiction.
\end{proof}

\section{Edge Rerouting Operations and Proof Strategy}\label{se:rerouting-strategy}

We introduce an edge rerouting operation that will be crucial for our proof strategy.
Let $G$ be a \kplanarstg in which all $(k+1)$-crossings are untangled (Lemma~\ref{lem:removing-tangled}).
Let $X$ be a $(k+1)$-crossing in $G$.
Without loss of generality, the vertices in $\V{X}$ lie in the outer face of $A_X$.

Let $e = \edge{u}{v} \in \EE{X}$ and let $w \in \V{X} \setminus \{u,v\}$ such that $u$ and $w$ are consecutive along $\partial \R{
X}$. Let $\D{X}\subset \R{X}$ be a region homeomorphic to a disk that encloses all crossing points of $X$ and such that each edge in $\EE{X}$ crosses the boundary $\partial \D{X}$ of $\D{X}$ exactly twice. Let $w'$ be the other endpoint of the unique edge in $\EE{X}$ that is incident to $w$. Note that edge $\edge{u}{w}$ is in $E$ by Property~\ref{pr:gonedge}.

The operation \emph{rerouting $e=\edge{u}{v}$ around $w$} consists of redrawing $e$ as follows.
Refer to \figurename~\ref{fig:k-crossing-reroute-nonadjacent} for an illustration.
Starting from vertex $v$, follow the edge $e$ until reaching the first crossing with $\D{X}$; we call this part a \emph{tip} of $e$. Then, follow the shortest path along $\partial \D{X}$ to the crossing of $\edge{w}{w'}$ with $\partial \D{X}$ closer to $w$ (without crossing $\edge{w}{w'}$). Then, follow edge $\edge{w}{w'}$ until vertex $w$, and go around $w$ until reaching edge $\edge{u}{w}$, (in an orientation that avoids crossing $\edge{w}{w'}$); we call this part the \emph{hook} of $e$. Finally, complete the new drawing of $e$ by following $\edge{u}{w}$ to $u$; this part is called a \emph{tip} of $e$ (hence edge $e$ has two tips).

\begin{figure}[b]
	\centering
    \begin{minipage}[b]{.3\textwidth}
        \centering
        \includegraphics[page=4]{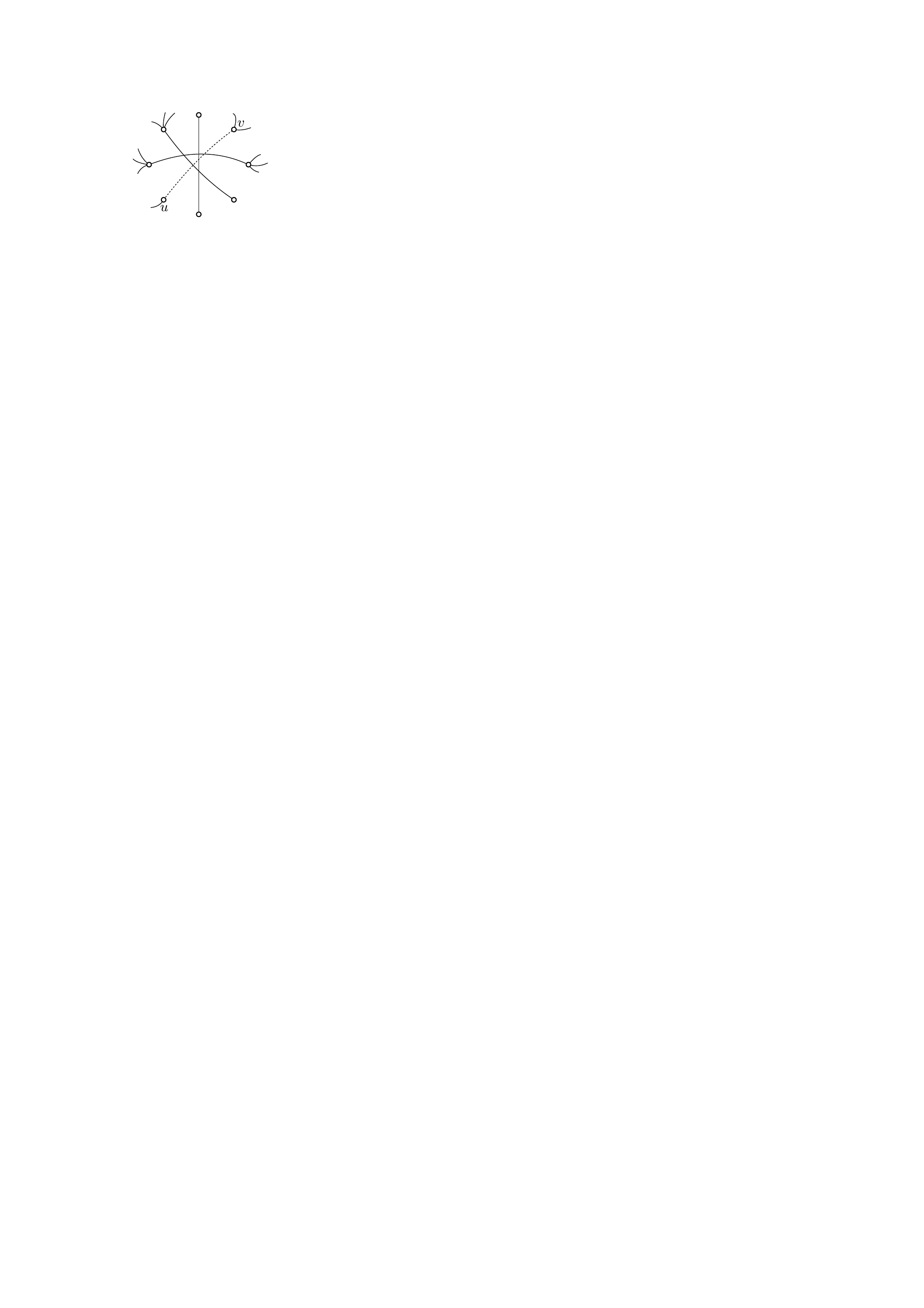}
        \subcaption{~}\label{fig:k-crossing-curve}
    \end{minipage}
	\hfil
	\begin{minipage}[b]{.3\textwidth}
		\centering
		\includegraphics[page=3]{kcrossing.pdf}
		\subcaption{~}\label{fig:k-crossing-reroute-nonadjacent}
	\end{minipage}
	\hfil
	\begin{minipage}[b]{.3\textwidth}
		\centering
		\includegraphics[page=5]{kcrossing.pdf}
		\subcaption{~}\label{fig:k-crossing-reroute-home}
	\end{minipage}
	\caption{%
	The rerouting operation for dissolving untangled $(k+1)$-crossings.
	(a)~An untangled $(k+1)$-crossing $X$.
	(b)~The rerouting of edge $\edge{u}{v}$ around the marked vertex $w$.
	(c)~The additional rerouting of $\edge{v}{w}$.}
	\label{fig:k-crossing}
\end{figure}

\begin{lemma}\label{lem:reroute-properties}
Let $G$ be a \kplanarstg and let $X$ be an untangled $(k+1)$-crossing in $G$.
Let $G'$, $G' \simeq G$, be the topological graph obtained from $G$ by rerouting an edge $e=\edge{u}{v} \in \EE{X}$ around a vertex $w \in \V{X} \setminus \{u,v\}$ such that $u$ and $w$ are consecutive along $\partial\R{X}$.
$G'$ has the following properties:
\begin{enumerate}[(i)]\itemsep -2pt
\item edges $e$ and $\edge{w}{w'}$ do not cross;
\item the edges that are crossed by $e$ in $G'$ but not in $G$ form a fan at $w$;
\item\label{reroute-properties-iii} edge $e$ does not cross any edge more than once.
\end{enumerate}
\end{lemma}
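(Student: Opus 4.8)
The plan is to split the new drawing of $e$ into its five constituent arcs and bound the crossings of each one separately. In order, these arcs are: a tip $t_v$, equal to the sub-arc of the old drawing of $e$ from $v$ to the first point $p_v$ where $e$ meets $\partial\D{X}$; a \emph{detour} $a$ running close to $\partial\D{X}$ from $p_v$ to the crossing point $q_w$ of $\edge{w}{w'}$ with $\partial\D{X}$ closer to $w$, along the arc of $\partial\D{X}$ that does not contain the other crossing point $q_{w'}$ of $\edge{w}{w'}$ with $\partial\D{X}$ (this is the shorter arc, and the one that does not force a crossing with $\edge{w}{w'}$); an arc $b$ running close to the sub-arc of $\edge{w}{w'}$ from $q_w$ to $w$; the hook $h$, a small loop around $w$ joining the side of $\edge{w}{w'}$ to the side of $\edge{u}{w}$; and a tip $t_u$ running close to $\edge{u}{w}$ from $w$ to $u$. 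Recall that $\edge{u}{w}\in E$ and is crossing-free by Property~\ref{pr:gonedge}. I will use that, since $e,\edge{w}{w'}\in\EE{X}$, Property~\ref{pr:k-crossings-disjoint} implies that in $G$ each of $e$ and $\edge{w}{w'}$ is crossed only by the other $k$ edges of $\EE{X}$, with all these crossings inside $\inte{\D{X}}$, and that $\edge{u}{w}$ and $\edge{w}{w'}$ share the vertex $w$, hence do not cross. Since the finitely many crossing points of $X$ lie on no vertex and on no edge of $E\setminus\EE{X}$, I may shrink $\D{X}$ so that it contains no vertex of $G$ and meets no edge other than those of $\EE{X}$, and I draw $a$, $b$, $t_u$ close enough to the curves they follow that each crosses exactly the edges crossed by the portion of $G$ it follows, the hook $h$ being small enough to meet only edges incident to $w$.

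To establish (i), I check that none of the five arcs crosses $\edge{w}{w'}$: $t_v$ is a sub-arc of $e$ lying outside $\D{X}$, while the unique crossing of $e$ with $\edge{w}{w'}$ lies inside $\D{X}$; $a$ avoids $\edge{w}{w'}$ because it follows the arc of $\partial\D{X}$ not containing $q_{w'}$ and stops at $q_w$; $b$ runs parallel to $\edge{w}{w'}$; $h$ is oriented so as not to cross $\edge{w}{w'}$; and $t_u$ runs parallel to $\edge{u}{w}$, which does not cross $\edge{w}{w'}$. For (ii), I list all edges met by the new drawing of $e$: $t_v$ meets no edge, since every crossing on $e$ lies inside $\D{X}$; $a$ meets only edges of $\EE{X}$, each of which $e$ already crosses in $G$; $b$ runs parallel to the sub-arc of $\edge{w}{w'}$ from $q_w$ to $w$, which lies outside $\D{X}$ and is therefore crossing-free, so $b$ meets no edge; $t_u$ runs parallel to the crossing-free edge $\edge{u}{w}$, so it meets no edge. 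Hence the edges crossed by $e$ in $G'$ but not in $G$ are exactly those crossed by $h$, namely the edges incident to $w$ in the angular sector swept by $h$ around $w$, which form a fan at $w$.

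For (iii), I argue that no edge $g\neq e$ is crossed more than once by the new drawing of $e$. If $g\in\{\edge{w}{w'},\edge{u}{w}\}$, then $g$ is not crossed at all, by (i) and by crossing-freeness. If $g\in\EE{X}\setminus\{e,\edge{w}{w'}\}$, then $g$ is met only by $a$ (the tips and $b$ meet no edge, and $h$ meets only edges incident to $w$, whereas $g$ is not), and $a$ crosses $g$ at most once: outside $\D{X}$ the edges of $\EE{X}$ are pairwise non-crossing, so the $2(k+1)$ crossing points of these edges with $\partial\D{X}$ occur in the same cyclic order as their endpoints along $\partial\R{X}$; by the antipodality argument in the proof of Property~\ref{pr:kgon} these endpoints come in antipodal pairs, hence the two crossing points of each edge of $\EE{X}$ with $\partial\D{X}$ are antipodal in this cyclic order; and since $u$ and $w$ are consecutive along $\partial\R{X}$, the arc of $\partial\D{X}$ followed by $a$ spans only $k+1$ consecutive positions of this order and so contains at most one point of each antipodal pair. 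If $g$ is incident to $w$ but $g\notin\{\edge{w}{w'},\edge{u}{w}\}$, then $g\notin\EE{X}$ (each vertex is incident to exactly one edge of $\EE{X}$), so $g$ is met neither by $a$ nor by the tips nor by $b$, and is therefore crossed only by $h$, which meets each edge incident to $w$ at most once. Every remaining edge is crossed by no arc of the new drawing of $e$, so the claim follows.

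I expect the main obstacle to be twofold. The first part is the ``small and close enough'' genericity: that $\D{X}$ can be shrunk so that a thin neighbourhood of $\partial\D{X}$ meets only edges of $\EE{X}$, and that the parallel arcs $a$, $b$, $t_u$ together with the hook $h$ can be drawn so as to inherit exactly the crossings of the curves they follow (and so that the resulting arc is simple). The second, more interesting, part is the cyclic-order/antipodality analysis of the crossing points on $\partial\D{X}$: this is precisely what makes the detour $a$ -- taken along the \emph{shorter} side of $\partial\D{X}$ -- cross each edge of $\EE{X}$ at most once and, in particular, never cross $\edge{w}{w'}$.
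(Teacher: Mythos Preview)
Your proof is correct and follows essentially the same approach as the paper: decompose the rerouted edge into its constituent arcs and analyze the crossings of each separately. Your five-arc decomposition refines the paper's four parts (you split their ``hook'' into your $b$ and $h$), and your antipodality argument on $\partial\D{X}$ is a more explicit version of the paper's one-line claim that ``the two pairs of crossing points of any two edges of $X$ with $\partial\D{X}$ alternate around $\partial\D{X}$''---indeed, your argument actually justifies why the detour from $p_v$ to $q_w$ crosses each edge of $X\setminus\{e,\edge{w}{w'}\}$ exactly once, whereas the paper leaves this step to the reader.
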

\begin{proof}
Properties~(i) and~(ii) immediately follow from the definition of the rerouting operation.
We prove property~(iii). First note that the tip of $e$ incident to $v$ does not cross any edge in $G'$, since $X$ is a $(k+1)$-crossing and all the crossings of edge $e$ before the rerouting lie in the interior of $\D{X}$. The part of $e$ that follows $\partial \D{X}$ crosses all other edges in $X$, except for $\edge{w}{w'}$, exactly once, since the two pairs of crossing points of any two edges of $X$ with $\partial \D{X}$ alternate around $\partial \D{X}$. The hook of $e$ crosses edges that form a fan at $w$, and thus do not belong to $X$, since the only edge of $X$ incident to $w$ is $\edge{w}{w'}$, which is not crossed by $e$. Since these crossings are located in a neighborhood of $w$, none of the edges incident to $w$ is crossed twice by $e$. Finally, the tip of $e$ incident to $u$ follows edge $\edge{u}{w}$, which is crossing-free by Property~\ref{pr:gonedge}. This concludes the proof.
\end{proof}

\paragraph{Homes and home rerouting} In addition, we observe that edge $\edge{v}{w}$,
if it exists, can always be (re)drawn inside $\R{X}$ so that it crosses
neither $\edge{u}{v}$ nor $\edge{w}{w'}$ and thus with at most $k-1$ crossings, by
following the first three parts of the edge $\edge{u}{v}$, i.e., the tip of $\edge{u}{v}$
incident to~$v$, the part of $\edge{u}{v}$ that follows $\partial \D{X}$, and part of the hook
of $\edge{u}{v}$ until $w$; see \figurename~\ref{fig:k-crossing-reroute-home}.
Specifically, if we have rerouted edge $\edge{u}{v}$ around $w$, and edge $\edge{v}{w}$ exists,
we say that $X$ is a \emph{home} for the edge $\edge{v}{w}$; and we call
\emph{home rerouting} the redrawing operation described above.

\paragraph{Global rerouting and full rerouting} In the following we describe our general strategy
for transforming a \kplanarstg $G$ into a \stg $G'$, $G' \simeq G$, that is
\kplusquasi. The idea is to appropriately define two injective functions
$\f : \mathcal{X} \rightarrow V$ and $\g : \mathcal{X} \rightarrow E$, which
associate every $(k+1)$-crossing $X$ in $G$ with a vertex $\f(X) \in \V{X}$ and
with an edge $\g(X) \in X$, respectively, such that an endpoint of $\g(X)$ and
$\f(X)$ are consecutive along~$\partial \R{X}$. Then, we apply
the rerouting operation for all pairs $(\g(X),\f(X))$, i.e., rerouting $\g(X)$
around $\f(X)$. This operation, which we call \emph{global rerouting}, and
denote by $(\g,\f)$, is well-defined since the $(k+1)$-crossings are pairwise
edge-disjoint by Property~\ref{pr:k-crossings-disjoint}.

After this operation, for each edge $e \in E$ that has a home, we perform
a home rerouting operation. Note that every $(k+1)$-crossing in $\mathcal{X}$
is a home for at most one edge, but an edge can have up to two homes (one for each endpoint,
no more because $\f$ is injective). If an edge has two homes, we pick one of them arbitrarily
for the home rerouting operation.

The combined operation consisting of a global rerouting $(\g(X),\f(X))$ and all possible home reroutings will be called \emph{full rerouting} in the following.

\paragraph{Challenges} There are, however, two potential problems
that have to be addressed in order for a global rerouting to eliminate $(k+1)$-crossings.
First, Lemma~\ref{lem:reroute-properties} does not guarantee that the topological graph
obtained by rerouting a single edge $e=(u,v)$ around a vertex $w$ is simple.
Indeed, assuming that $u$ and $w$ are consecutive along $\partial \R{X}$,
if the edge $\edge{v}{w}$ were present in $G$, then the rerouted
edge $e=\edge{u}{v}$ would cross such an edge. This crossing between adjacent edges
may be solved by a home rerouting operation for $\edge{v}{w}$ inside $\R{X}$. However, 
if $\edge{v}{w}$ has a home other than $X$ and we picked this other home to reroute
$\edge{v}{w}$ inside, this crossing would not be avoided. Furthermore, rerouting many
edges simultaneously may create new $(k+1)$-crossings. Both problems
can be solved by suitably choosing functions $\f$ and $\g$.

\paragraph{Outlook} In the next section we start by proving that function $\f$
can be chosen to be injective. Note that, given a function $\f$, choosing $\g$
to be injective is trivial (by Property~\ref{pr:k-crossings-disjoint}). We will
prove in Section~\ref{se:kquasiplanarity} that the injectivity of $\f$ and $\g$
is sufficient to guarantee that the resulting topological graph does not contain
$k+1$ mutually crossing edges, for $k \ge 3$. The case $k=2$ is more challenging
as new $3$-crossings may appear after rerouting the edges of $\g$ around the
vertices of $\f$. To avoid these situations, we modify function $\f$ and more
carefully define function~$\g$, as discussed in
Section~\ref{se:kquasiplanarity}. Finally, we show how to avoid crossings
between adjacent edges so as to obtain a simple topological drawing in
Section~\ref{se:simplicity}.

\section{Computing an injective function $\f$}\label{sec:function-f}

In this section we show the existence of a global rerouting such that no two edges of a $k$-planar topological graph $G$, $k\geq 2$, are rerouted around the same vertex (Lemma~\ref{lem:global}), that is, $\f$ is injective. Note that this condition is also necessary for simplicity; see \figurename~\ref{fig:nonsimple}. We start by defining a bipartite graph composed of the vertices of $G$ and of its $(k+1)$-crossings, and by showing that a matching covering all the $(k+1)$-crossings exists. A bipartite graph with vertex sets $A$ and $B$ is denoted by $H=(A \cup B,\widehat{E})$, where $\widehat{E} \subseteq A \times B)$.  A \emph{matching from} $A$ {\em into} $B$ is a set $M \subseteq E$ such that each vertex in $A$ is incident to exactly one edge in $M$ and each vertex in $B$ is incident to at most one edge in $M$. For a subset $A' \subseteq A$, we denote by $N(A')$ the set of all vertices in $B$ that are adjacent to a vertex in $A'$. We recall that, by Hall's theorem, graph $H$ has a matching from $A$ into $B$ if and only if $|N(A')| \ge |A'|$ for every set $A' \subseteq A$.

\begin{figure}[b]
	\centering
\begin{minipage}[b]{.24\textwidth}
        \centering
        \includegraphics[page=1,width=\textwidth]{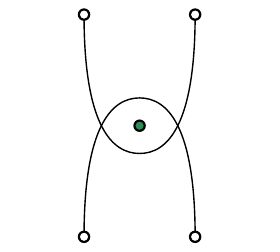}
        \subcaption{~}\label{fig:nonsimple}
    \end{minipage}
    \hfil
    \begin{minipage}[b]{.24\textwidth}
        \centering
        \includegraphics[page=3,width=\textwidth]{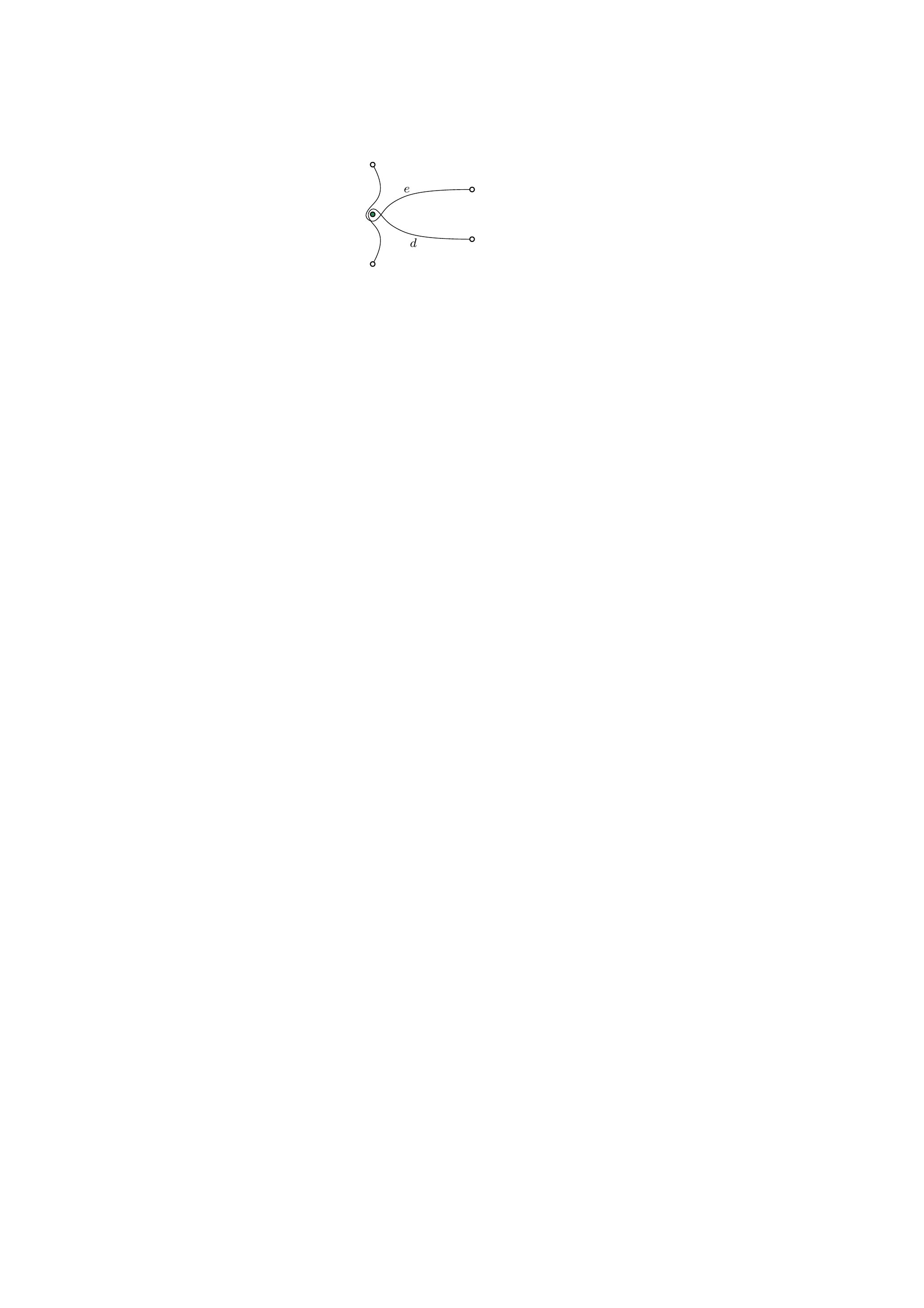}
        \subcaption{~}\label{fig:newcrossing-2}
    \end{minipage}
	\hfil
	\begin{minipage}[b]{.24\textwidth}
		\centering
		\includegraphics[page=2,width=\textwidth]{simple-newcrossing.pdf}
		\subcaption{~}\label{fig:newcrossing-3}
	\end{minipage}
	
	\caption{%
(a) Two edges rerouted around the same vertex. (b)--(c) Two possible cases in which two edges do not cross before a global rerouting operation but cross afterwards. The vertices used for rerouting are filled green.
}
	\label{fig:newcrossing}
\end{figure}

Let $G$ be a \kplanarstg and let $\mathcal{X}$ be the set of $(k+1)$-crossings of $G$. We define a bipartite graph $H = (A \cup B,\widehat{E})$ as follows. For each $(k+1)$-crossing $X \in \mathcal{X}$, set $A$ contains a vertex $v(X)$ and set $B$ contains the endpoints of $\EE{X}$ (that is, $B = \bigcup_{X \in \mathcal{X}} \V{X}$). Also,~$\widehat{E}$ contains an edge between a vertex $v(X) \in A$ and a vertex $u \in B$ if and only if $u \in \V{X}$. We have the following.

\begin{lemma}\label{lem:graphh}
Graph $H = (A \cup B, \widehat{E})$ is a simple bipartite planar graph. Also, each vertex in $A$ has degree $2k+2$.
\end{lemma}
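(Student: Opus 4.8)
The plan is to verify the two claims separately: first that $H$ has maximum degree $2k+2$ on the $A$-side (this is essentially immediate), and then that $H$ is simple and planar (this is the substantive part). For the degree claim, observe that for each $(k+1)$-crossing $X \in \mathcal{X}$ the vertex $v(X) \in A$ is, by construction, adjacent exactly to the vertices of $\V{X}$. Since $X$ consists of $k+1$ pairwise crossing edges and $G$ is simple (so these edges are pairwise non-adjacent), $|\V{X}| = 2(k+1) = 2k+2$, and all these endpoints are distinct. Hence $\deg_H(v(X)) = 2k+2$.

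For simplicity of $H$: there are no self-loops since $A$ and $B$ are disjoint, and there are no parallel edges because, for a fixed pair $v(X), u$, the condition ``$u \in \V{X}$'' either holds or not — $\widehat E$ is defined as a set. (Strictly, distinct crossings $X \neq X'$ give distinct vertices $v(X) \neq v(X')$ in $A$, so no two $A$-vertices collapse.) The main work is planarity. The natural approach is to exhibit a plane drawing of $H$ directly, piggybacking on the drawing of $G$: for each $X \in \mathcal{X}$, use the $2(k+1)$-gon $\R{X}$ guaranteed by the untangling setup (Property~\ref{pr:gonedge}), place $v(X)$ at an interior point of $\R{X}$, and draw the $2k+2$ edges from $v(X)$ to the vertices of $\V{X}$ as a plane star inside $\R{X}$ — this is possible precisely because all of $\V{X}$ lies on the boundary $\partial\R{X}$, which bounds a topological disk. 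By Property~\ref{pr:k-crossings-disjoint} (and the assumption recorded just before Property~\ref{pr:gonedge}) the regions $\R{X}$, $X \in \mathcal{X}$, are pairwise interior-disjoint, so stars belonging to different crossings do not cross each other; and since each star lies in the interior of its region except for the endpoints on the boundary, the stars do not cross the vertices of $G$ either. This yields a plane drawing of $H$ in the plane (with the $B$-vertices at the corresponding vertices of $G$ and the $A$-vertices inside the respective $2(k+1)$-gons), proving $H$ is planar.

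The step I expect to require the most care is making rigorous the claim that the stars of different crossings are pairwise non-crossing. The regions $\R{X}$ are only interior-disjoint, so two regions may share boundary vertices or boundary arcs; I would argue that since each edge $v(X)u$ of the star is drawn strictly in $\inte{\R{X}}$ except for its endpoint $u$, and two interior-disjoint disks can intersect only along their boundaries, an edge of the star of $X$ and an edge of the star of $X'$ can meet only at a common endpoint $u \in \V{X} \cap \V{X'}$, which is exactly a shared vertex of $H$ and not a crossing. (This is essentially the same interior-disjointness argument used in the proof of Property~\ref{pr:kgon}\eqref{obs:kgon:2}, where a plane $K_{3,3}$ was built from three such stars; here we build a plane drawing of all of $H$ at once.) A minor point to address is that a vertex $u \in B$ may lie on $\partial\R{X}$ for several crossings $X$; this is fine, as several star-edges are simply incident to $u$ from inside the various regions, and one can order them around $u$ consistently with the cyclic order of the incident regions.
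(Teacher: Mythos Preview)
Your proposal is correct and takes essentially the same approach as the paper: exhibit a plane drawing of $H$ by placing each $v(X)$ in an interior-disjoint region associated with $X$ and drawing a plane star to the boundary vertices $\V{X}$. The only cosmetic difference is that the paper uses the smaller disk $\D{X}$ (reusing the uncrossed tips of the edges of $X$ outside $\D{X}$ and connecting them to $v(X)$ inside), whereas you use the full $2(k+1)$-gon $\R{X}$; both work because of the interior-disjointness guaranteed by Property~\ref{pr:k-crossings-disjoint}.
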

\begin{proof}
    The graph is simple and bipartite by construction. Also, for each $(k+1)$-crossing $X$, vertex $v(X) \in A$ is incident to the $2k+2$ vertices in $B$ belonging to~$\V{X}$.
	We prove that $H$ is also planar by showing that a planar embedding of $H$ can be obtained from $G$ as follows. First, we remove from $G$ all the vertices and edges that are not in any $(k+1)$-crossing. Then, for each $(k+1)$-crossing $X$ of $G$, we remove the portion of $G$ in the interior of $\D{X}$ that encloses all crossings among the edges in $X$ and such that each edge in $\EE{X}$ crosses the boundary of $\D{X}$ exactly twice (as defined in Section~\ref{se:rerouting-strategy}) and add vertex $v(X)$ inside $\D{X}$. Finally, for each vertex $v\in \V{X}$, let $e_v$ be the edge in $X$ incident to $v$ and let $p_v$ be the intersection point between $\partial \D{X}$ and $e_v$ closer to $v$. We complete the drawing of edge $\edge{v(X)}{v}$ by adding an arc between~$v(X)$ and $p_{v}$ in the interior of $\D{X}$ without introducing any crossing. The resulting topological graph is crossing-free.
\end{proof}

\begin{lemma}\label{lem:matching2}
For every nonempty subset $A' \subseteq A$, we have $|N(A')| \ge |A'| + 2k+1$.
\end{lemma}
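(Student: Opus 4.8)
The plan is to exploit the planarity of $H$ (Lemma~\ref{lem:graphh}) together with the fact that every vertex of $A$ has degree exactly $2k+2$, and to analyze the bipartite subgraph $H'$ induced by $A'$ and $N(A')$. Let $a = |A'|$ and $b = |N(A')|$. Since $H$ is simple, bipartite, and planar, so is $H'$, and since $H'$ has no cycle of length less than $4$, a standard Euler-formula argument gives $|\widehat E(H')| \le 2(a+b) - 4$ whenever $a + b \ge 3$. On the other hand, every vertex of $A'$ has degree $2k+2$ in $H$ and all its neighbors lie in $N(A')$ by definition, so $|\widehat E(H')| = (2k+2)\,a$. Combining, $(2k+2)\,a \le 2(a+b) - 4$, which rearranges to $b \ge (k)\,a + 2 \ge a + 2k + 1$ precisely when $a \ge 2$ (using $k \ge 2$: indeed $ka + 2 \ge a + 2k + 1 \iff (k-1)(a-2) \ge 1$, true for $a \ge 3$; the case $a=2$ needs to be checked separately, see below). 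So the main work is to handle the small cases $a = 1$ and $a = 2$ by hand, where the generic bipartite-planar edge bound is either not available or not strong enough.

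For $a = 1$, say $A' = \{v(X)\}$, we have $N(A') = \V{X}$, which has exactly $2k+2$ vertices, so $|N(A')| = 2k+2 = |A'| + 2k + 1$, with equality; this case is immediate from Lemma~\ref{lem:graphh} and needs no planarity. For $a = 2$, say $A' = \{v(X_1), v(X_2)\}$ with $X_1 \ne X_2$, we have $|N(A')| = |\V{X_1} \cup \V{X_2}| = |\V{X_1}| + |\V{X_2}| - |\V{X_1} \cap \V{X_2}| = (4k+4) - |\V{X_1} \cap \V{X_2}|$. By Property~\ref{pr:kgon}\eqref{obs:kgon:1}, $|\V{X_1} \cap \V{X_2}| \le 2k+1$, hence $|N(A')| \ge (4k+4) - (2k+1) = 2k+3 = |A'| + 2k + 1$. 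So the case $a = 2$ follows directly from Property~\ref{pr:kgon}, which is exactly why that property was proved earlier.

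For $a \ge 3$, I would run the Euler-formula argument cleanly: $H'$ is bipartite, simple, and planar on $a + b$ vertices, so it has at most $2(a+b) - 4$ edges; counting edges from the $A'$ side gives exactly $(2k+2)a$ edges (no edge of $H'$ leaves $N(A')$, since all of $v(X_i)$'s neighbors are in $N(A')$ by the definition of $N$). Thus $(2k+2)a \le 2(a+b) - 4$, i.e., $2ka \le 2b - 4$, i.e., $b \ge ka + 2$. It remains to check $ka + 2 \ge a + 2k + 1$, i.e., $(k-1)(a-2) \ge 1$; for $k \ge 2$ and $a \ge 3$ this holds since $(k-1)(a-2) \ge 1 \cdot 1 = 1$. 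Hence $|N(A')| \ge |A'| + 2k+1$ in all cases.

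The main obstacle I anticipate is not any single hard step but making sure the small-case analysis is airtight: the clean Euler bound only kicks in for $a \ge 3$, so the argument genuinely relies on Lemma~\ref{lem:graphh} for $a = 1$ and on Property~\ref{pr:kgon}\eqref{obs:kgon:1} for $a = 2$, and one must be careful that the claimed inequality $(k-1)(a-2)\ge 1$ really does hold at the boundary $a = 3$, $k = 2$ (it gives exactly $1 \ge 1$). One should also double-check the direction of the edge count — that every edge incident to a vertex in $A'$ has its other endpoint in $N(A')$, which is true by the very definition of the neighborhood — so that $(2k+2)a$ is an exact count of $|\widehat E(H')|$ rather than merely a lower bound; this makes the inequality tight enough.
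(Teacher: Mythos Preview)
Your proposal is correct and follows essentially the same approach as the paper: handle $|A'|=1$ directly, use Property~\ref{pr:kgon}\eqref{obs:kgon:1} for $|A'|=2$, and for $|A'|\ge 3$ combine the exact edge count $(2k+2)|A'|$ with the bipartite planar bound $|\widehat E(H')|\le 2(|A'|+|N(A')|)-4$ to get $|N(A')|\ge k|A'|+2\ge |A'|+2k+1$. The only difference is cosmetic algebra in the final step.
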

\begin{proof}
If $|A'|=1$, then $|A'|+2k+1=2k+2$ and a single $(k+1)$-crossing has $2k+2$ vertices. If $|A'| = 2$, then $|A'|+2k+1 =2k+3$; and two distinct $(k+1)$-crossings jointly have at least $2k+3$ vertices by Property~\ref{pr:kgon}\eqref{obs:kgon:1}.
Hence, in both cases the statement holds. Consider now the case $|A'| \ge 3$. Let $H'$ be the subgraph of $H$ induced by $A'\cup N(A')$. Since every vertex in $A$ has degree $2k+2$, by Lemma~\ref{lem:graphh} we have $|\E{H'}| = (2k+2) |A'|$. Also, since $H$ (and thus $H'$) is bipartite planar, by Lemma~\ref{lem:graphh} we have $|\E{H'}| \le 2 (|A'| + N(A')) - 4$. Thus, $|N(A')| \ge k |A'| + 2 = |A'| + (k-1)|A'|+2 \ge |A'| + 3k - 3 + 2 \ge |A'| + 2k+1$, and the statement follows.
\end{proof}

We can now exploit Lemma~\ref{lem:matching2} and Hall's theorem in order to define $\f$ as an injective function, which implies that any corresponding global rerouting is such that no two edges are rerouted around the same vertex.

\begin{lemma}\label{lem:global}
Let $G=(V,E)$ be a \kplanarstg, and let $\mathcal{X}$ be the set of $(k+1)$-crossings of $G$.
It is possible to define a global rerouting $(\g,\f)$ such that $\f$ is injective.
\end{lemma}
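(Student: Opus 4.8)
The plan is to feed the expansion estimate of Lemma~\ref{lem:matching2} into Hall's theorem and then read off $\f$ from the resulting matching. Concretely: since $k\ge 2$, Lemma~\ref{lem:matching2} gives $|N(A')|\ge |A'|+2k+1>|A'|$ for every nonempty $A'\subseteq A$, so Hall's condition $|N(A')|\ge |A'|$ holds for every $A'\subseteq A$ (the case $A'=\emptyset$ being trivial). Hence $H=(A\cup B,\widehat E)$ admits a matching $M$ from $A$ into $B$. I would then define $\f:\mathcal X\to V$ by letting $\f(X)$ be the unique vertex of $B$ matched to $v(X)$ by $M$. Because $M$ is a matching, distinct vertices $v(X),v(X')$ are matched to distinct vertices of $B$, so $\f$ is injective; and since the matched edge $v(X)\f(X)$ belongs to $\widehat E$, by the definition of $\widehat E$ we have $\f(X)\in\V{X}$.

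Next I would define the companion function $\g:\mathcal X\to E$ so that the pair $(\g,\f)$ is a legitimate global rerouting, i.e., so that an endpoint of $\g(X)$ is consecutive with $\f(X)$ along $\partial\R{X}$. Fix $X\in\mathcal X$. The vertex $\f(X)$ is one of the $2(k+1)$ vertices of the cycle $\partial\R{X}$; let $w$ be either of its two neighbours along this cycle. As noted in the proof of Property~\ref{pr:kgon}, each edge of $X$ connects a pair of antipodal vertices of $\partial\R{X}$, so the edge of $X$ incident to $\f(X)$ has its second endpoint antipodal to $\f(X)$ and hence distinct from $w$; therefore $w$ is an endpoint of some \emph{other} edge of $X$, and I set $\g(X)$ to be that edge. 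By construction $w$ is an endpoint of $\g(X)$ consecutive with $\f(X)$ along $\partial\R{X}$, exactly as the rerouting operation of Section~\ref{se:rerouting-strategy} requires. Moreover $\g$ is injective: by Property~\ref{pr:k-crossings-disjoint} distinct $(k+1)$-crossings share no edge, so $\g(X)\ne\g(X')$ whenever $X\ne X'$. This yields the desired global rerouting $(\g,\f)$ with $\f$ injective.

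I do not expect a real obstacle here: the substantive work — the planarity of the auxiliary graph $H$ (Lemma~\ref{lem:graphh}) and the counting Property~\ref{pr:kgon} that together force the strong expansion bound of Lemma~\ref{lem:matching2} — is already in place, so this lemma is essentially an assembly step. The only point that warrants a moment's care is matching the vertex produced by $M$ with an edge of the same $(k+1)$-crossing so that the consecutivity condition of the rerouting operation is satisfied; this is handled cleanly by the antipodal structure of the $2(k+1)$-gon $\partial\R{X}$.
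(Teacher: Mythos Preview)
Your proposal is correct and follows essentially the same route as the paper: apply Hall's theorem via Lemma~\ref{lem:matching2} to obtain a matching in $H$, read off $\f$ from the matching, and then pick $\g(X)$ to be an edge of $X$ with an endpoint adjacent to $\f(X)$ along $\partial\R{X}$. Your treatment is in fact slightly more detailed than the paper's, spelling out the antipodal structure of $\partial\R{X}$ to justify that the chosen edge is not incident to $\f(X)$ and noting explicitly why $\g$ is injective.
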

\begin{proof}
By Lemma~\ref{lem:matching2} and Hall's theorem, the graph $H$ defined above admits a matching from $A$ into $B$.
For every $(k+1)$-crossing $X\in \mathcal X$, let $\f(X) \in \V{X}$ be the vertex matched to $v(X)$.
It follows that $f:\mathcal{X}\rightarrow V$ is injective. The statement follows by choosing $\g(X)$, for each $(k+1)$-crossing $X$, as one of the two edges in $X$ not incident to $\f(X)$ and with an endpoint adjacent to~$\f(X)$ along $\partial \R{X}$.
\end{proof}

\section{Properties of a rerouted topological graph}\label{se:properties-rerouted}

Let $G'$ be the topological graph obtained from a \kplanarstg $G$ after applying
a full rerouting operation (in which the functions $\f$ and $\g$ are injective).
We study properties of $G'$. In particular, the edges of $G'$ fall into three categories,
depending on how they are represented in $G'$ with respect to $G$:
(1) \emph{nonrerouted edges} have not been rerouted and remain the same as in $G$;
(2) edges that have been rerouted in home rerouting operation we call \emph{safe}
(regardless of whether or not they have also been rerouted in the global rerouting
operation); and
(3) edges that have been rerouted in the global rerouting operation but not in a
home rerouting are \emph{critical}. An edge is \emph{rerouted} if it is either safe
or critical. Let us start by classifying the new crossings that are introduced
by the rerouting algorithm.

\begin{lemma}\label{lem:cross}
Consider two edges $e_1$ and $e_2$ that cross each other in $G'$ but not in $G$. After possibly exchanging the roles of $e_1$ and $e_2$, 
one of the following holds:
\begin{itemize}\itemsep -2pt
\item[(a)] $e_1$ is safe and $e_2$ is a nonrerouted edge of the home $X$ of $e_1$ in which $e_1$ has been rerouted; 
\item[(b)] $e_1$ is critical and rerouted around an endpoint of $e_2$.
\end{itemize}
\end{lemma}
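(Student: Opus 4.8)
The plan is to classify the crossing $\{e_1,e_2\}$ according to which of the three edge categories $e_1$ and $e_2$ fall into, and to rule out every combination except (a) and (b). First I would observe that at least one of $e_1,e_2$ must be rerouted: if both were nonrerouted, they would be drawn exactly as in $G$, so a crossing in $G'$ would already be a crossing in $G$, contradicting the hypothesis. So after possibly swapping names, assume $e_1$ is rerouted, i.e.\ safe or critical. The heart of the argument is then a careful reading of the geometry of the rerouting operation (Section~\ref{se:rerouting-strategy}) together with Lemma~\ref{lem:reroute-properties}: when an edge $e=\edge{u}{v}$ is globally rerouted around a vertex $w$, its new drawing consists of two tips (following the crossing-free edge $\edge{u}{w}$ and the initial uncrossed portion of $e$ inside $X$) and a middle part running along $\partial\D{X}$ and around $w$; by Lemma~\ref{lem:reroute-properties}(ii) the only genuinely \emph{new} crossings it creates are with a fan of edges at $w$. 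A home rerouting of an edge $\edge{v}{w}$ inside $\R{X}$ analogously follows the first three parts of $\edge{u}{v}$, so its new crossings lie inside $\R{X}$ and are with edges that were already present (nonrerouted) there.

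The main case distinction I would carry out is on $e_1$:

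\textbf{Case $e_1$ is safe.} Then $e_1=\edge{v}{w}$ has been home-rerouted inside some $(k+1)$-crossing $X$ that is a home for it, following the tip–boundary–hook route of its globally rerouted neighbor. The portion of $e_1$ that newly crosses something lies in the interior of $\R{X}$, and by construction it can only meet edges that lie in $\inte\R{X}$ as drawn in $G$, which are nonrerouted (the edges of $X$ itself are avoided, and by Property~\ref{pr:k-crossings-disjoint} the regions $\R{X}$ are interior-disjoint across distinct crossings, so no other critical/safe segment can be there). Hence $e_2$ must be a nonrerouted edge of $X$, giving outcome (a). One should also check the sub-case where $e_1$ is safe \emph{and} was also globally rerouted; but then the home rerouting completely overrides the global drawing in a neighborhood where the conflict could occur, so the same conclusion applies — this is a detail worth spelling out.

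\textbf{Case $e_1$ is critical.} Then $e_1=\g(X)=\edge{u}{v}$ was rerouted around $w=\f(X)$ and was \emph{not} subsequently home-rerouted. By Lemma~\ref{lem:reroute-properties}(iii) $e_1$ crosses nothing more than once, and its new crossings form a fan at $w$ (Lemma~\ref{lem:reroute-properties}(ii)); moreover its two tips follow the crossing-free edge $\edge{u}{w}$ and an uncrossed part of $e$, so they contribute no crossing at all, and its boundary part along $\partial\D{X}$ only meets the other edges of $X$, which were already crossed in $G$. Thus any crossing of $e_1$ that is new in $G'$ is a hook crossing, hence with an edge incident to $w$, i.e.\ $w$ is an endpoint of $e_2$ — outcome (b). The only thing to rule out here is that $e_2$ is not itself rerouted in a way that "moves it out of" the fan at $w$: since $e_2$ is incident to $w$ and the new crossing is confined to a small neighborhood of $w$, the stub of $e_2$ at $w$ is unchanged by any rerouting (rerouting never changes the germ of an edge at its endpoints), so the crossing is genuinely between $e_1$'s hook and $e_2$'s stub at $w$ regardless of $e_2$'s category.

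\textbf{Main obstacle.} The delicate point is making precise the claim that the \emph{only} crossings new in $G'$ are the ones described — i.e.\ bookkeeping which parts of a rerouted edge can possibly meet which parts of another rerouted edge, using interior-disjointness of the regions $\R{X}$ (Property~\ref{pr:k-crossings-disjoint}) and the locality of hooks and home reroutings. In particular one must argue that two \emph{distinct} critical/safe edges, coming from different crossings, cannot produce a new mutual crossing that falls outside cases (a)–(b); this is where Property~\ref{pr:k-crossings-disjoint} (edge-disjointness and region interior-disjointness of the $(k+1)$-crossings) does the real work, together with the observation that tips run along crossing-free edges and boundary parts stay inside a single $\D{X}$.
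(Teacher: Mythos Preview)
Your overall approach matches the paper's: a case analysis on the three edge categories, using the geometry of the rerouting (tips, boundary part, hook) together with Property~\ref{pr:k-crossings-disjoint}. However, your case structure has a genuine gap.

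In your \textbf{safe case} you conclude that $e_2$ must be a nonrerouted edge of $X$, arguing that ``no other critical/safe segment can be there'' inside $\R{X}$. This is false. Suppose $e_1=\edge{\f(X)}{z}$ is safe, drawn in $\R{X}$, and $e_2=\g(X')$ is critical with $\f(X')=z$. The hook of $e_2$ goes around $z$, which lies on $\partial\R{X}$, and crosses the stub of $e_1$ at $z$; this is exactly the configuration of Lemma~\ref{lem:nosafeedges}\eqref{nosafeedges-iii} and \figurename~\ref{fig:nosafeedges}. Here the correct outcome is (b) after swapping roles, not (a). Your argument that interior-disjointness of the regions $\R{X}$ prevents this does not apply, because hooks live in small neighborhoods of vertices on $\partial\R{X}$, not in the interior of any single region.

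A related issue lurks in your \textbf{critical case}. Your claim that the tips of $e_1$ ``contribute no crossing at all'' is only valid against edges in their $G$-drawing. If $e_2$ is also critical, its hook sits near $\f(X_2)$ and can cross a tip of $e_1$ whenever $\f(X_2)$ is an endpoint of $e_1$; then $e_2$ need not be incident to $w$ at all. The conclusion is still (b), but with the roles swapped, and your argument does not reach it because it only looks at $e_1$'s hook.

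The clean fix is the paper's ordering of cases: first assume at least one of $e_1,e_2$ is critical and name \emph{that} one $e_1$. Then argue that the hook of $e_1$ is the only place a new crossing can occur with a nonrerouted or safe $e_2$ (giving (b)), and that for a critical $e_2$ the two hooks cannot meet (by injectivity of $\f$), so one hook meets the other's tip, again (b) possibly after swapping. Only afterwards treat the case where neither edge is critical; then one is safe, the other is safe or nonrerouted, and interior-disjointness of the $\R{X}$ does the work you wanted it to do, yielding (a) or ruling out safe--safe.
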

\begin{proof}
  Suppose first that at least one of $e_1$ and $e_2$ is critical, say $e_1$.
  Since $e_1$ is critical, there is a $(k+1)$-crossing $X \in \mathcal X$ such that
  $e_1=\g(X)=(u,v)$ is rerouted around the vertex $w=\f(X)$ in the global rerouting operation.
  We can assume that $u$ is adjacent to $w$ along $\partial\R{X}$.
  Recall that the tip of $e_1$ incident to $v$ is the same in $G'$ as in $G$,
  while the tip of $e_1$ incident to $u$ follows an edge that is crossing-free in $G$.
  Also, $e_1$ does not cross any edge of $G'$ that has been rerouted in $X$ by a home rerouting operation.
  Then, if $e_2$ is a nonrerouted edge or a safe edge, $e_2$ is incident to $w$ and $e_1$ crosses $e_2$ with its hook,
  thus case (b) of the statement applies.
  If $e_2$ is critical, then the hook of $e_2$ does not cross the hook of $e_1$ because $\f$ is injective,
  and thus we are in case (b) of the statement again, as either the hook of $e_1$ crosses a tip of $e_2$ or vice-versa.

  Suppose now that none of $e_1$ and $e_2$ is critical.
  Since $e_1$ and $e_2$ do not cross in $G$, at least one of them is safe, say $e_1$.
  Let $X \in \mathcal{X}$ be the home of $e_1$ in which $e_1$ has been rerouted.
  If $e_2$ is nonrerouted, then case (a) of the statement applies.
  On the other hand, $e_2$ cannot be safe, as otherwise it would be rerouted inside another $2(k+1)$-gon $\R{X'}$,
  which would be interior-disjoint from $\R{X}$, and thus $e_1$ and $e_2$ would not cross each other.
\end{proof}

\begin{lemma}\label{lem:nosafeedges}
  Let $e$ be a safe edge in $G'$ and let $X \in \mathcal X$ be the home in which $e$ has been rerouted.
  Assume that $e=(\f(X),z)$. The following properties hold:
  \begin{enumerate}[(i)]\itemsep -2pt
  \item\label{nosafeedges-i} $e$ is not part of a $(k+1)$-crossing;
  \item\label{nosafeedges-ii} $e$ does not cross any edge more than once; and
  \item\label{nosafeedges-iii} $e$ crosses an adjacent edge $e'$ only if $e'$ is critical, incident to
    $\f(X)$, and rerouted around $z$, where $z=\f(X')$ for some $(k+1)$-crossing
    $X'\in \mathcal X\setminus\{X\}$ with $\g(X')=e'$;
    see \figurename~\ref{fig:nosafeedges} for an illustration.
  \end{enumerate}
\end{lemma}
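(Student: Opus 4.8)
The plan is to analyze the structure of a safe edge $e=(\f(X),z)$, which by definition is drawn inside the $2(k+1)$-gon $\R{X}$ following the home rerouting description in Section~\ref{se:rerouting-strategy}: namely, $e$ follows the tip of $\g(X)$ incident to one endpoint, then runs along $\partial\D{X}$, then part of the hook of $\g(X)$ until reaching $z=\f(X)$... wait, here I need to be careful: in the statement $e=(\f(X),z)$ where $z$ is the ``other'' endpoint, so $e=\edge{v}{w}$ in the notation of the home rerouting with $w=\f(X)$ and $z$ playing the role of $v$. The key geometric fact, established in the ``Homes and home rerouting'' paragraph and in Lemma~\ref{lem:reroute-properties}, is that $e$ crosses neither $\g(X)$ nor the edge $\edge{w}{w'}$ of $X$ incident to $\f(X)$, and that all crossings of $e$ lie inside $\R{X}$ and are crossings that were already incurred by a portion of $\g(X)$ before its rerouting. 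Since those crossings of $\g(X)$ lie in $\inte{\D{X}}\subseteq\inte{\R{X}}$ and involve only the other $k$ edges of $X$ (plus possibly the fan-hook crossings, but the hook portion that $e$ copies stops at $w=\f(X)$), I can bound the number and nature of $e$'s crossings precisely.

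\medskip

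For \eqref{nosafeedges-i}, I would argue that $e$ is crossed at most $k-1$ times in $G'$: $e$ copies the first three parts of $\g(X)=(u,v)$ up to $\f(X)$, so it crosses exactly the $k-1$ edges of $X$ other than $\edge{w}{w'}$ and $\g(X)$ itself (the part along $\partial\D{X}$ picks up each such edge exactly once, since the two pairs of crossing points of any two edges of $X$ with $\partial\D{X}$ alternate), plus possibly some initial segment of the hook fan at $w$; but since $e$ stops at $w$ rather than proceeding to $u$, and the tip of $\g(X)$ incident to $v$ is crossing-free (as $X$ is a $(k+1)$-crossing), the only additional candidate crossings are with the fan at $w$ — and crucially $e$ is \emph{incident} to $w=\f(X)$, so those are crossings between adjacent edges, which is where part \eqref{nosafeedges-iii} comes in. In any case $e$ is crossed strictly fewer than $k+1$ times by edges of a potential $(k+1)$-crossing through it, because any $(k+1)$-crossing is untangled and its edges are pairwise non-adjacent (so none is incident to $\f(X)$ except through $X$, but the edge of $X$ at $\f(X)$ is $\edge{w}{w'}$, which $e$ does not cross). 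A cleaner route: any edge in a $(k+1)$-crossing $Y$ is crossed exactly $k$ times by the other edges of $Y$ and by no one else (Property~\ref{pr:k-crossings-disjoint}); if $e$ were in such a $Y$ then $e$ crosses exactly $k$ edges, all pairwise crossing and pairwise non-adjacent, none incident to $\f(X)$ — but all crossings of $e$ with non-adjacent edges occur along the $\partial\D{X}$-part inside $\R{X}$ and are crossings with edges of $X$, of which there are only $k-1$ available (excluding $\edge{w}{w'}$ and $\g(X)$), a contradiction since we would need $k\ge k$ mutually-crossing such edges but they must also all cross each other, and $X$'s edges do, so the count $k-1<k$ kills it.

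\medskip

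For \eqref{nosafeedges-ii}, I would invoke Lemma~\ref{lem:reroute-properties}\eqref{reroute-properties-iii}-style reasoning: $e$ is composed of the $v$-tip of $\g(X)$ (crossing-free in $G'$), the $\partial\D{X}$-arc (crossing each of the $k-1$ relevant edges of $X$ exactly once, by the alternation property), and a partial hook at $w$ (crossing a fan at $w$, each member at most once since the crossings are confined to a small neighborhood of $w$). No edge can appear in two of these three parts: a fan edge at $w$ cannot cross the $\partial\D{X}$-arc and separately the hook without violating simplicity in $G$ of $\g(X)$... — more directly, the three parts lie in disjoint regions (the $v$-tip outside $\D{X}$ near $v$; the arc on $\partial\D{X}$; the hook in a punctured neighborhood of $w$), so each edge meets at most one part, hence $e$ crosses it at most once.

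\medskip

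For \eqref{nosafeedges-iii}, the main obstacle: I must show that if a safe $e=(\f(X),z)$ crosses an \emph{adjacent} edge $e'$, then $e'$ must be critical, incident to $\f(X)$, and rerouted around $z$ with $z=\f(X')$, $\g(X')=e'$. I would apply Lemma~\ref{lem:cross} to the pair $(e,e')$: since $e$ and $e'$ are adjacent they do not cross in $G$ (simplicity of $G$), so one of cases (a), (b) applies. In case (a), one of them is safe and the other is a nonrerouted edge of that safe edge's home; if $e'$ is nonrerouted and $e$'s home is $X$, then $e'$ is a nonrerouted edge incident to $w=\f(X)$ crossed by $e$ — but $e$ follows $\g(X)$ whose hook only crosses the fan at $w$ in a tiny neighborhood, and $e$ stops at $w$, so $e$ actually does not cross any nonrerouted edge at $w$ (the hook crossings are incurred by $\g(X)$'s hook \emph{past} $w$, which $e$ omits), giving a contradiction; if instead $e'$ is the safe edge and $e$ is the nonrerouted one, that contradicts $e$ being safe. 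So we are forced into case (b): one of $e,e'$ is critical and rerouted around an endpoint of the other; since $e$ is safe (not critical), $e'$ is critical and rerouted around an endpoint of $e$. The two endpoints of $e$ are $\f(X)$ and $z$. If $e'=\g(X')$ is rerouted around $\f(X)$, then since $\f$ is injective and $\f(X')=\f(X)$ would force $X'=X$, but $\g(X)=e\ne e'$ — contradiction; hence $e'$ is rerouted around $z$, i.e.\ $z=\f(X')$ for the crossing $X'$ with $\g(X')=e'$, and $X'\ne X$ since $\f(X)\ne\f(X')=z$ (as $z\ne\f(X)$, these being distinct endpoints of $e$). Finally $e'$ is incident to $\f(X)$: $e'=\g(X')$ is rerouted around $z=\f(X')$, so by construction its hook goes around $\f(X')=z$ to reach the edge connecting $z$ to its $\partial\R{X'}$-neighbor; and $e$ being adjacent to $e'$ at a shared endpoint — the shared endpoint is not $z$ unless $e'$ is incident to $z$, but $\g(X')$ has an endpoint adjacent to $\f(X')=z$ along $\partial\R{X'}$, not equal to $z$ (edges of a $(k+1)$-crossing are non-adjacent to nothing special, but $z\in\V{X'}$ lies on $\partial\R{X'}$ and $e'\in X'$ joins two antipodal vertices of $\partial\R{X'}$; if $z$ were an endpoint of $e'$ we could not reroute $e'$ \emph{around} $z$) — so the shared endpoint of $e$ and $e'$ must be $\f(X)$, i.e.\ $e'$ is incident to $\f(X)$. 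This closes all three parts.

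The hard part will be verifying in case (a) that a safe edge does not cross any nonrerouted edge at its new endpoint $\f(X)$ — one must argue carefully that the home rerouting stops short of incurring the hook-fan crossings of $\g(X)$, so that the only adjacent crossings of a safe edge are the ``imported'' ones from a critical neighbor; a figure (as referenced, \figurename~\ref{fig:nosafeedges}) makes this transparent, but the written argument needs the precise description of where the partial hook of $e$ terminates relative to the fan at $w$.
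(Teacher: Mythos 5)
Your plan follows the paper's proof quite closely: classify all crossings of $e$ via Lemma~\ref{lem:cross}, treat cases (a) and (b) separately, and your derivation of \eqref{nosafeedges-iii} (case (a) cannot produce an adjacent crossing; in case (b) injectivity of $\f$ forces the critical edge to be rerouted around $z$, so the shared endpoint must be $\f(X)$) is essentially the paper's argument.

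There is, however, a genuine gap in \eqref{nosafeedges-i} and, secondarily, in \eqref{nosafeedges-ii}. Your ``cleaner route'' for \eqref{nosafeedges-i} asserts that every crossing of $e$ with a non-adjacent edge lies on the $\partial\D{X}$ arc and hence involves one of at most $k-1$ edges of $X$, so the count $k-1<k$ kills any $(k+1)$-crossing. That premise is false: a critical edge $e'=\g(X')$ rerouted around $z=\f(X')$ crosses $e$ near $z$, which is outside $\D{X}$, and $e'$ is not incident to $z$ (by construction $\g(X')$ is never incident to $\f(X')$); when $e'$ is also not incident to $\f(X)$, it is independent of $e$, and then $e$ may have $k$ pairwise non-adjacent crossing edges, so the count alone does not finish. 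The paper's proof supplies exactly the missing step: even if $e'$ exists, it cannot complete a $(k+1)$-crossing with $e$ and the $k-1$ edges of $X$, because $e'$ crosses none of those $k-1$ edges --- not in $G$ (Property~\ref{pr:k-crossings-disjoint}, since $X\cap X'=\emptyset$) and not after rerouting either, since all new crossings of $e'$ lie at its hook around $z$, and $z$ is not an endpoint of any of those $k-1$ edges. Likewise, your argument for \eqref{nosafeedges-ii} handles the $X$-edges (one crossing each on the $\partial\D{X}$ arc) and worries about fan crossings at $\f(X)$, but never rules out $e'$ crossing $e$ twice; the paper does so by observing that $e$ lies entirely in $\R{X}$ while only the hook of $e'$ enters $\R{X}$, so they intersect at most once. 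As a side note, the fan-at-$\f(X)$ worry is moot: the partial hook of $e$ terminates at $\f(X)$ by following the crossing-free portion of $\edge{\f(X)}{w'}$, so $e$ does not pick up any fan crossings at $\f(X)$.
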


\begin{figure}[htbp]
	\centering
		\includegraphics[page=2]{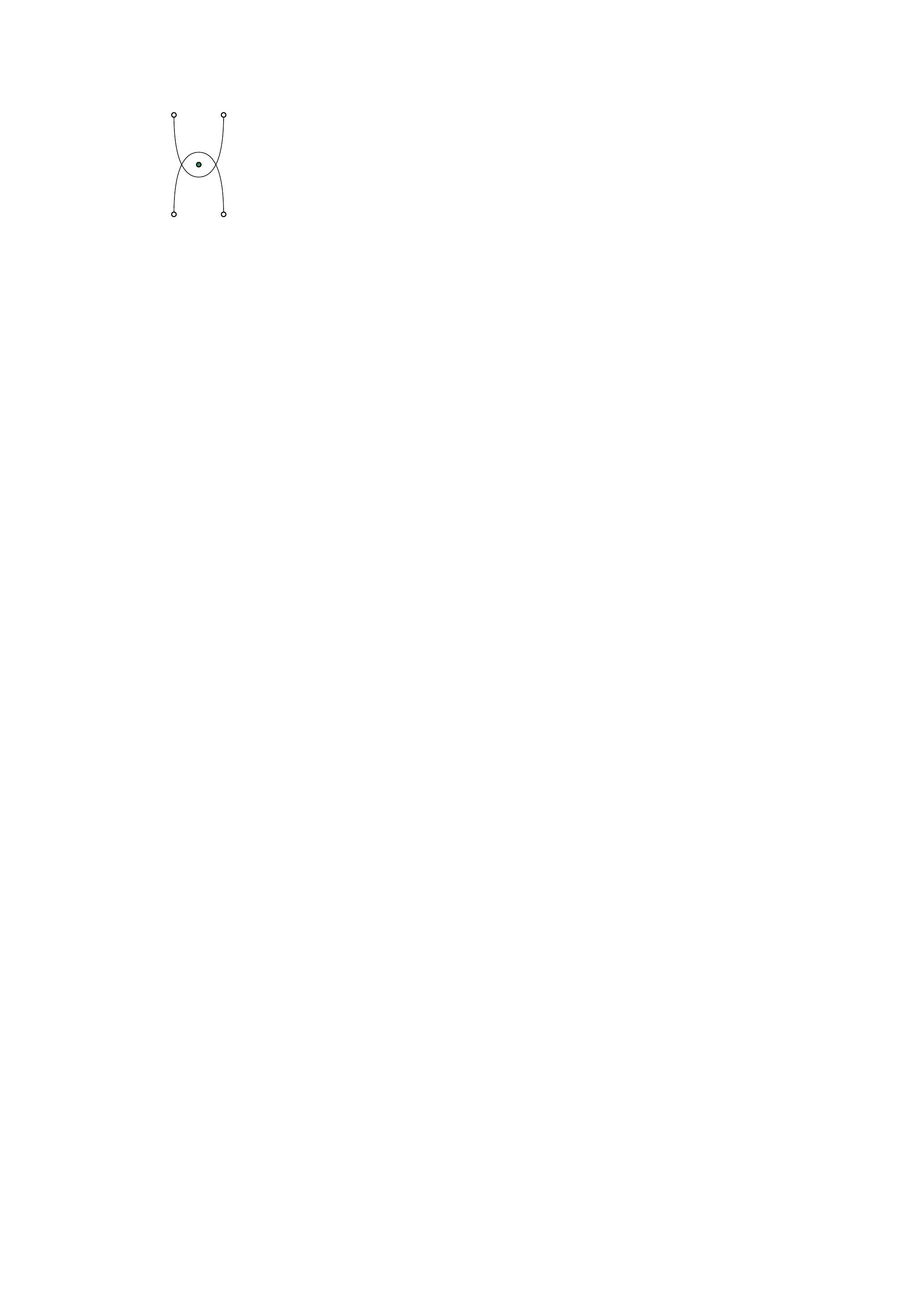}
                \caption{A safe edge $e$ crosses an adjacent edge $e'$; see
                  Lemma~\ref{lem:nosafeedges}(iii).\label{fig:nosafeedges}}
\end{figure}

\begin{proof}
  By definition, $e$ is the only edge that has $X$ as a home. By
  Lemma~\ref{lem:cross} there are only two types of crossings involving $e$:

  (a) inside $\R{X}$ the edge $e$ crosses only the edges of $X$ that have not
  been rerouted and it does not cross the edge incident to $\f(X)$. Since
  $\g(X)$ has been rerouted, $e$ crosses at most $k-1$ such edges. Also, $e$
  crosses these 
  edges only once and it is not adjacent to any of them.

  (b) $e$ crosses an edge $e'$ that has been rerouted around an endpoint of
  $e$. However, by construction $e$ does not cross the edge $\g(X)$ that is
  rerouted around $\f(X)$. As $\f$ is injective, $\g(X)$ is the only edge that
  is rerouted around $\f(X)$. Hence, there is only one more choice~for~$e'$: to
  be rerouted around the other endpoint $z$ of $e$. That is, $e'$ is critical
  and there is a $(k+1)$-crossing $X'\in\mathcal X\setminus\{X\}$ so that
  $e'=\g(X')$ and $\f(X')=z$.
  Since $e$ lies in $\R{X}$, and only the hook of $e'$ enters $\R{X}$,
  the edges $e$ and $e'$ cross only once.

  This proves (ii) and (iii), it remains to prove that $e$ is not part of a $(k+1)$-crossing.
  Recall that $e$ is crossed by at most $k-1$ edges in $X$ plus at most another edge $e'$ that
  has been rerouted around an endpoint $z=\f(X')$ of $e$. It follows that if $e$ is part of a $(k+1)$-crossing,
  then all $k-1$ edges in $X$ that cross $e$ (and that also pairwise cross) also cross $e'$.
  But $e'$ does not cross any of these $k-1$ edges in $G$ (because $X'$ and $X$ are disjoint by Property~\ref{pr:k-crossings-disjoint}), and in $G'$ it is rerouted around $z$, which is not an endpoint of any of these $k-1$ edges. This proves (i) and completes the proof of the lemma.
\end{proof}

\begin{lemma}\label{lem:noadjcross2}
  No two adjacent critical edges cross in $G'$.
\end{lemma}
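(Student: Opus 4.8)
The plan is to argue by contradiction. Suppose two adjacent critical edges $e_1$ and $e_2$ cross in $G'$. Since both are critical, there are $(k+1)$-crossings $X_1, X_2 \in \mathcal X$ with $e_i = \g(X_i)$, and $e_i$ is rerouted around $\f(X_i)$ in the global rerouting; since $e_1, e_2$ are critical they are not subsequently rerouted by a home rerouting. First I would note that $X_1 \neq X_2$: the two edges of a single $(k+1)$-crossing selected as $\g$-images would coincide, and in any case $\g$ picks a single edge per crossing. Also, because $\f$ is injective, $\f(X_1) \neq \f(X_2)$.

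Next I would appeal to Lemma~\ref{lem:cross} to pin down the shape of the crossing between $e_1$ and $e_2$. Since $e_1$ and $e_2$ are adjacent in the simple graph $G$, they do not cross in $G$ (edges of $G$ are drawn with the simple-graph property, and adjacent edges share exactly their common endpoint). Hence their crossing in $G'$ is a \emph{new} crossing, and Lemma~\ref{lem:cross} applies: after possibly swapping names, $e_1$ is critical and rerouted around an endpoint of $e_2$ — case (b), since case (a) requires a safe edge. So $\f(X_1)$ is an endpoint of $e_2 = \g(X_2)$. Now I recall the anatomy of a critical (globally rerouted) edge from the rerouting definition and Lemma~\ref{lem:reroute-properties}: $e_2$ consists of a tip at $v$ (unchanged from $G$, hence inside $\R{X_2}$ and crossing-free among the relevant edges), a portion along $\partial\D{X_2}$, a hook near $\f(X_2)$, and a tip along the crossing-free edge $\edge{u}{\f(X_2)}$ (Property~\ref{pr:gonedge}), where $\{u,v\}$ are the endpoints of $e_2$. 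The crucial point is that the tip of $e_2$ incident to $u$ runs along an edge that is crossing-free in $G$; but this tip can still be crossed in $G'$ by the hook of another critical edge rerouted around $u$. So $\f(X_1)$ being an endpoint of $e_2$ means $\f(X_1) \in \{u,v\}$.

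Now I would examine how $e_1$, rerouted around $w_1 := \f(X_1)$, can meet $e_2$. The hook of $e_1$ lies in a small neighborhood of $w_1$; it crosses precisely a fan of edges incident to $w_1$ (Lemma~\ref{lem:reroute-properties}(ii)). For $e_2$ to be crossed by the hook of $e_1$ near $w_1$, $e_2$ must pass through a neighborhood of $w_1$ — but $e_2$ is incident to $w_1$ only if $w_1 = v$ (tip at $v$) or $w_1 = u$ (tip along $\edge{u}{\f(X_2)}$), which is consistent. The tip of $e_1$ incident to its own $v$-endpoint is unchanged from $G$ and crossing-free with respect to edges outside $X_1$; the portion of $e_1$ along $\partial\D{X_1}$ crosses only edges of $X_1$; the tip of $e_1$ incident to $u_1 := $ (the $\partial\R{X_1}$-neighbor of $w_1$) follows the crossing-free edge $\edge{u_1}{w_1}$. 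The only freely-crossing pieces are the hook of $e_1$ (a fan at $w_1$) and, symmetrically, the hook of $e_2$ (a fan at $w_2 := \f(X_2)$); but $w_1 \neq w_2$ by injectivity of $\f$, so the two hooks lie in disjoint neighborhoods and cannot cross each other. Hence the crossing between $e_1$ and $e_2$ must be: hook of $e_1$ versus a non-hook piece of $e_2$ (or vice versa). The main obstacle is to rule all of these out. I would argue: if $e_1$'s hook crosses the tip of $e_2$ at $v$ (so $w_1 = v$), then in $G$ already $e_2$'s $v$-tip is a neighborhood of $v = w_1$; but the hook of $e_1$ crosses only edges incident to $w_1$ drawn near $w_1$, and $e_2$ being adjacent to $e_1$ at their common vertex would force that common vertex to be $w_1$; tracing the rotation at $w_1$ shows the hook is inserted between $\edge{w_1}{w_1'}$ and $\edge{u_1}{w_1}$ and hence does not separate $v$-tip of $e_2$ from the rest — so no crossing, or exactly this configuration is precisely case (iii) of Lemma~\ref{lem:nosafeedges} which concerns a \emph{safe} edge, contradicting $e_2$ critical. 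The remaining subcase — $e_1$'s hook crosses the $u_2$-tip of $e_2$ (the part along the crossing-free edge $\edge{u_2}{w_2}$) — would require $w_1$ to be an endpoint of $\edge{u_2}{w_2}$, i.e. $w_1 \in \{u_2, w_2\}$; since $w_1 \neq w_2$ we get $w_1 = u_2$, and then the hook of $e_1$ at $u_2$ crossing the edge $\edge{u_2}{w_2}$ is impossible because that hook only crosses edges near $u_2$ in the rotation between $\edge{w_1'}{w_1}$-side arcs and $\edge{u_1}{w_1}$, and by the definition of the rerouting the hook goes "around $w_1$ until reaching edge $\edge{u_1}{w_1}$" on the side avoiding $\edge{w_1}{w_1'}$ — I would make precise that $\edge{u_2}{w_2} = \edge{w_1}{w_2}$ is not one of the edges the hook sweeps, because $e_2$ leaves $w_1$ toward $w_2 = \f(X_2)$ which is a vertex of $\V{X_1}$ and thus lies outside $\D{X_1}$ on the $\partial\R{X_1}$ side that the hook does not enter. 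Collecting these cases yields a contradiction in every case, proving that no two adjacent critical edges cross in $G'$.
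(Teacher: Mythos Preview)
Your plan goes off the rails once you try to rule out the crossing by a direct geometric case analysis of hooks and tips. Several of the intermediate claims do not hold: you write that ``$e_2$ being adjacent to $e_1$ at their common vertex would force that common vertex to be $w_1$'', but $w_1=\f(X_1)$ is by construction \emph{not} an endpoint of $e_1=\g(X_1)$, so it can never be the shared vertex. You also assert $w_2=\f(X_2)\in\V{X_1}$, which is unjustified. More importantly, the underlying strategy cannot succeed: the hook of $e_1$ around $w_1$ really \emph{can} cross the arc of $e_2$ emanating from $w_1$; nothing in the rerouting prevents that geometrically. So you will not reach a contradiction by showing the crossing ``doesn't actually happen''.

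The paper's argument avoids all of this and instead uses the \emph{home} mechanism. Write $e_1=\edge{u}{v_1}$ and $e_2=\edge{u}{v_2}$ with common vertex $u$. By Lemma~\ref{lem:cross}(b), after possibly swapping names, $e_1$ is rerouted around an endpoint of $e_2$; since $\f(X_1)\notin\{u,v_1\}$, this endpoint is $v_2$, i.e.\ $\f(X_1)=v_2$. Then both endpoints of $e_2$ lie in $\V{X_1}$ (namely $u$ as an endpoint of $\g(X_1)$, and $v_2=\f(X_1)$). The endpoint of $e_1$ consecutive to $v_2$ along $\partial\R{X_1}$ must be $v_1$: if it were $u$, then $e_2=\edge{u}{v_2}$ would be a crossing-free boundary edge by Property~\ref{pr:gonedge}, hence could not belong to any $(k+1)$-crossing, contradicting that $e_2=\g(X_2)$ is critical. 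So by the definition of a home, $X_1$ is a home for $e_2$; therefore $e_2$ is safe, contradicting the assumption that it is critical. That is the entire proof. You actually brush against this idea once (``case~(iii) of Lemma~\ref{lem:nosafeedges}\ldots contradicting $e_2$ critical''), but only as a vague alternative buried inside a geometric subcase; make it the main argument and the proof collapses to a few lines.
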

\begin{proof}
  Suppose to the contrary that there are two critical edges $e_1=(u,v_1)$ and
  $e_2=(u,v_2)$ that cross in $G'$. Then by Lemma~\ref{lem:cross} one edge must
  have been rerouted around an endpoint of the other. As no edge is rerouted
  around its own endpoints, we may assume without loss of generality that $e_1$ has
  been rerouted around $v_2$. Then there exists a $(k+1)$-crossing $X\in \mathcal{X}$
  such that $e_1\in X$: namely $u$ is an endpoint of $e_1=\g(X)$
  and $v_2=\f(X)$ are vertices~of~$X$.
  Under these conditions, 
  $X$ is a home for $e_2$.
  It follows that $e_2$ is safe, which
  contradicts our assumption that it is critical.
\end{proof}

The next two lemmas describe properties related to edges that cross in $G'$ but not in $G$.

\begin{lemma}\label{lem:non-rerouted-3-crossings}
Every nonrerouted edge $e$ is crossed by at most three critical edges in~$G'$.  Further, if $e$ is crossed by exactly three critical edges, then two of them have been rerouted around distinct endpoints of $e$.
\end{lemma}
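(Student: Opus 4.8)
The strategy is to enumerate, using Lemma~\ref{lem:cross}, the possible ways a critical edge can acquire a crossing with a nonrerouted edge $e$, and to show that each such crossing "costs" a distinct endpoint of $e$ or a distinct $(k+1)$-crossing containing $e$, so that at most three can coexist — with the equality case forcing the claimed structure. Concretely, let $e=(a,b)$ be nonrerouted and let $c$ be a critical edge crossing $e$ in $G'$ but not in $G$. By Lemma~\ref{lem:cross}(b), $c=\g(X)$ for some $X\in\mathcal X$ and $c$ is rerouted around $\f(X)$, and moreover $\f(X)$ is an endpoint of $e$, i.e.\ $\f(X)\in\{a,b\}$. So each critical edge $c$ crossing $e$ "newly" is associated with a pair $(X, \f(X))$ where $\f(X)\in\{a,b\}$.

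\textbf{Counting via injectivity of $\f$.} Because $\f$ is injective, at most one $(k+1)$-crossing $X$ has $\f(X)=a$ and at most one has $\f(X)=b$; hence at most two critical edges can be of this "newly crossing via the hook around an endpoint of $e$" type. This already bounds the number of critical edges that cross $e$ in $G'$ but not in $G$ by two. To reach the bound of three total, I must account for critical edges that also cross $e$ in $G$. A critical edge $c'=\g(X')$ crosses $e$ in $G$ only if $c'$ is one of the (at most $k$) edges crossing $e$ in $G$; but the key point is that $c'$ being critical means $e\in\V{X'}$ is impossible in general, so I need to bound how many members of $(k+1)$-crossings can cross $e$ in $G$. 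Here is where Property~\ref{pr:k-crossings-disjoint} and Property~\ref{pr:kgon} enter: if $e$ is crossed in $G$ by an edge of $X'$, either $e\in\EE{X'}$ (so $e$ belongs to the $(k+1)$-crossing $X'$), or — by Property~\ref{pr:k-crossings-disjoint} — that edge of $X'$ is crossed only by other edges of $X'$, forcing $e\in\EE{X'}$ as well. So any critical edge crossing $e$ \emph{in $G$} must lie in a $(k+1)$-crossing $X'$ with $e\in\EE{X'}$; since $e$ is nonrerouted, $e\neq\g(X')$, and in fact $e$ belongs to at most one such $(k+1)$-crossing (by Property~\ref{pr:k-crossings-disjoint} two distinct $(k+1)$-crossings are edge-disjoint, so $e$ lies in at most one). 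That $(k+1)$-crossing contributes at most one critical edge $\g(X')$ that crosses $e$ in $G$. Combining: at most two critical edges cross $e$ anew (hook around $a$ or around $b$), plus at most one critical edge $\g(X')$ crossing $e$ already in $G$ (from the unique $X'\ni e$), for a total of at most three.

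\textbf{The equality case.} If exactly three critical edges cross $e$ in $G'$, then by the above accounting all three "slots" must be used: one is $\g(X_a)$ with $\f(X_a)=a$, one is $\g(X_b)$ with $\f(X_b)=b$, and one is $\g(X')$ with $e\in\EE{X'}$. The first two are precisely edges rerouted around distinct endpoints $a$ and $b$ of $e$, which is exactly the asserted conclusion. (One must double-check that $X_a$, $X_b$, $X'$ are genuinely distinct and that $\g(X_a)\neq\g(X_b)\neq\g(X')$, which follows from $\f$ injective and $\g$ injective together with $a\neq b$.)

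\textbf{Main obstacle.} The delicate point I expect to grind on is cleanly separating "crosses $e$ in $G$" from "crosses $e$ anew in $G'$" for critical edges, and making sure the bound of three, rather than something larger, really holds — in particular ruling out that a single critical edge $\g(X')$ could simultaneously cross $e$ via its hook around an endpoint of $e$ \emph{and} via its (unchanged) tip because $e$ also lay in $\EE{X'}$. This cannot happen: if $e\in\EE{X'}$ then $e$ and $\g(X')$ are both edges of the same $(k+1)$-crossing, hence adjacent would contradict simplicity — they cross in $G$ as members of $X'$ — so $e$ is not incident to $\f(X')$, meaning $\g(X')$ is not rerouted around an endpoint of $e$; thus the "in $G$" contribution and the "anew" contributions come from genuinely disjoint sets of critical edges, and the count is additive. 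Verifying this disjointness carefully, together with invoking Lemma~\ref{lem:cross} to guarantee that every \emph{new} crossing of a nonrerouted edge is of type (b), is the crux; the rest is bookkeeping with the injectivity of $\f$ and $\g$ and with Properties~\ref{pr:k-crossings-disjoint} and~\ref{pr:kgon}.
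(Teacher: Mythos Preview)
Your proposal is correct and follows the same decomposition as the paper: at most two critical edges can be rerouted around endpoints of $e$ (one per endpoint, by injectivity of $\f$), and any further critical edge crossing $e$ must already cross $e$ in $G$, hence equals $\g(X')$ for the unique $X'\in\mathcal X$ containing $e$. Your ``main obstacle'' is not actually an obstacle---the two categories are disjoint by definition of ``crosses anew''---and the justification you sketch there (deducing that $e$ is not incident to $\f(X')$ from the non-adjacency of $e$ and $\g(X')$) is a non-sequitur, but harmless, since the disjointness and the count need no such claim; you may simply drop that paragraph.
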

\begin{proof}
Since at most one edge has been rerouted around each vertex, by construction, it suffices to prove that there exists at most one critical edge crossing $e$ that has not been rerouted around an endpoint of $e$.

For this, note that any edge with this property crosses $e$ also in $G$, by Lemma~\ref{lem:cross}, and thus it belongs to the same $(k+1)$-crossing as~$e$. Since, by construction, at most one edge per $(k+1)$-crossing is critical, the statement follows.
\end{proof}

\begin{lemma}\label{lem:single-non-rerouted}
If $G'$ contains a $(k+1)$-crossing $X'$, then $X'$ contains at most one nonrerouted~edge.
\end{lemma}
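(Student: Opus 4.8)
The plan is to suppose, for contradiction, that a $(k+1)$-crossing $X'$ in $G'$ contains two distinct nonrerouted edges $e_1$ and $e_2$, and to derive a contradiction by showing that $e_1$ and $e_2$ must then already cross in $G$, together with enough of the remaining edges of $X'$, to exhibit a $(k+1)$-crossing in $G$ that avoids an edge rerouted in the full rerouting — contradicting either $k$-planarity of $G$ (via Property~\ref{pr:k-crossings-disjoint}) or the fact that we have actually rerouted every edge of every $(k+1)$-crossing of $G$. First I would recall the classification of edges: each edge of $X'$ is either nonrerouted, safe, or critical. The key input is Lemma~\ref{lem:cross}: if two edges cross in $G'$ but not in $G$, then one is safe (and the other a nonrerouted edge of its home) or one is critical (and rerouted around an endpoint of the other). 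In particular, \emph{two nonrerouted edges that cross in $G'$ must already cross in $G$}, since neither of the two alternatives in Lemma~\ref{lem:cross} can apply to a pair of nonrerouted edges.

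So the crossing between $e_1$ and $e_2$ is inherited from $G$. Now I would look at the whole $(k+1)$-crossing $X'=\{e_1,e_2,h_1,\dots,h_{k-1}\}$. The nonrerouted edges among $h_1,\dots,h_{k-1}$ also cross $e_1$ and $e_2$ already in $G$. For the rerouted ones, the idea is to show there can be at most one, and that even that one causes trouble: by Lemma~\ref{lem:non-rerouted-3-crossings}, a nonrerouted edge is crossed by at most three critical edges and, more importantly, the critical edges that cross $e_1$ "for a new reason" are rerouted around an endpoint of $e_1$; such an edge does not cross $e_2$ in $G$ (its original drawing) unless it was in a common $(k+1)$-crossing with $e_1$ in $G$. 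Since $G$ is $k$-planar and $e_1$ is nonrerouted, $e_1$ was not part of any $(k+1)$-crossing of $G$ (otherwise it would have been rerouted by the global rerouting, as $\g$ selects an edge from every $(k+1)$-crossing), so no edge crossing $e_1$ in $G$ is crossed $k$ times in $G$. Combining: all edges of $X'$ that are nonrerouted or safe already cross $e_1$ in $G$; and $e_1$ is crossed at most $k$ times total in $G$, hence at most $k$ edges cross it in $G$, so at most $k-1$ of them besides nothing — this forces the number of nonrerouted plus safe edges of $X'$ to be small, and I would push this counting (using that safe edges, by Lemma~\ref{lem:nosafeedges}(i), are not part of any $(k+1)$-crossing of $G'$, hence cannot lie in $X'$ at all) to conclude at most one nonrerouted edge.

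Concretely, the cleanest route: safe edges cannot belong to $X'$ by Lemma~\ref{lem:nosafeedges}\eqref{nosafeedges-i}, so every edge of $X'$ is nonrerouted or critical. If $X'$ had two nonrerouted edges $e_1,e_2$, they cross in $G$ by Lemma~\ref{lem:cross}. Every other edge of $X'$ crossing both $e_1$ and $e_2$ in $G'$ is either nonrerouted — hence crosses them in $G$ — or critical and rerouted around an endpoint of (say) $e_1$; by Lemma~\ref{lem:non-rerouted-3-crossings} at most one critical edge of $X'$ crosses $e_1$ without being rerouted around its endpoint, and that one crosses $e_1$ in $G$. So at least $k$ of the $k+1$ edges of $X'$ cross $e_1$ in $G$, meaning $e_1$ is crossed $\ge k$ times in $G$; but then those $k$ edges plus $e_1$, restricted to their $G$-drawings where $\ge k$ of them pairwise cross (they pairwise cross in $X'\subseteq G'$, and the nonrerouted/"inherited" ones pairwise cross in $G$ too) yields a $(k+1)$-crossing of $G$ through a nonrerouted edge $e_1$ — contradicting that the global rerouting reroutes $\g(X)$ for every $(k+1)$-crossing $X$ of $G$. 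The main obstacle I expect is bookkeeping the "pairwise" condition: I must ensure that the edges I claim form a $(k+1)$-crossing in $G$ really do pairwise cross \emph{in $G$}, not merely in $G'$; this is where I will need to be careful about which pairs among $\{e_1,e_2,h_1,\dots\}$ are guaranteed by Lemma~\ref{lem:cross} to have their crossing inherited from $G$ versus newly created, and I would organize the argument around $e_1$ (one fixed nonrerouted edge) so that every relevant crossing with $e_1$ is forced to be present in $G$.
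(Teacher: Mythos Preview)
Your argument rests on a false premise. You write that ``$e_1$ is nonrerouted, so $e_1$ was not part of any $(k+1)$-crossing of $G$ (otherwise it would have been rerouted by the global rerouting, as $\g$ selects an edge from every $(k+1)$-crossing)''. But $\g$ selects \emph{one} edge $\g(X)$ from each $(k+1)$-crossing $X\in\mathcal X$; the other $k$ edges of $X$ are, in general, nonrerouted. So a nonrerouted edge can perfectly well lie in a $(k+1)$-crossing of $G$, and your terminal contradiction (``a $(k+1)$-crossing of $G$ through a nonrerouted edge $e_1$'') is no contradiction at all. The counting step ``at least $k$ of the $k+1$ edges of $X'$ cross $e_1$ in $G$'' is also unjustified: critical edges of $X'$ that were rerouted around an endpoint of $e_1$ (there may be two, one per endpoint) do \emph{not} cross $e_1$ in $G$, so you only get $\ge k-2$ such crossings. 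Finally, even if you had $k$ edges crossing $e_1$ in $G$, you would still need them to pairwise cross in $G$; you flag this yourself, but your proposal does not close the gap.

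The paper takes a different and cleaner route. Instead of trying to rebuild a $(k+1)$-crossing in $G$, it first shows (by a short case analysis on the number of crossings of $e_1$ in $G$) that some edge $e_3\in X'$ does \emph{not} cross $e_1$ in $G$. Then $e_3$ must be critical and, by Lemma~\ref{lem:cross}, rerouted around an endpoint of $e_1$. Since the $(k+1)$-crossing of $G$ containing $e_3$ cannot contain $e_1$ or $e_2$ (Property~\ref{pr:k-crossings-disjoint} plus $e_1,e_2$ crossing in $G$), $e_3$ does not cross $e_2$ in $G$ either, hence is also rerouted around an endpoint of $e_2$. That would force $e_1$ and $e_2$ to share an endpoint, which is impossible because they cross in the simple graph $G$. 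You might salvage your attempt by redirecting it toward this endpoint argument: once you have your $e_2$, the key is not to count crossings with $e_1$ but to exploit that any ``new'' crossing of a critical edge with $e_1$ and with $e_2$ pins the rerouting vertex to two disjoint endpoint sets.
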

\begin{proof}
Assume to the contrary that a $(k+1)$-crossing $X'$ in $G'$ contains at least two nonrerouted edges $e_1$ and $e_2$.
By Lemma~\ref{lem:nosafeedges}\eqref{nosafeedges-i}, every edge in $X'$ is critical or nonrerouted.

We first claim that there exists an edge $e_3\in \EE{X'}$ that does not cross $e_1$ in $G$.
If $e_1$ has fewer than $k$ crossings in $G$, then the claim follows by the pigeonhole principle.
If $e$ is part of a $(k+1)$-crossing $X$ in $G$, the claim follows from the fact that the edges of $X$ do not form a $(k+1)$-crossing in $G'$, due to a rerouting of one of its edges.
Finally, assume that $e_1$ has $k$ crossings in $G$ but it is not part of any $(k+1)$-crossing.
Then none of the edges crossing $e_1$ in $G$ can be part of a $(k+1)$-crossing in $G$, otherwise they each would have $k$ crossings in the $(k+1)$-crossing and an additional crossing with $e_1$, contradicting the $k$-planarity of $G$.
Hence none of the edges crossing $e_1$ in $G$ is critical, they all are nonrerouted, hence $X'$ is a $(k+1)$-crossing in $G$, contradicting our assumption that $e_1$ is not part of any $(k+1)$-crossing in $G$.
This completes the proof of~the~claim.

Note that $e_3$ is critical by Lemma~\ref{lem:nosafeedges}\eqref{nosafeedges-i}, which means that $e_3$ is part of a $(k+1)$-crossing in $G$ containing neither $e_1$ nor $e_2$. Hence, $e_2$ and $e_3$ do not cross in $G$ by Property~\ref{pr:k-crossings-disjoint}. We prove that they do not cross in $G'$, either, a contradiction to the assumption that $X'$ is a $(k+1)$-crossing. By Lemma~\ref{lem:cross}(b), all new crossings of $e_3$ are on its hook; however, since $e_1$ and $e_2$ cross in $G$ (which is simple), they do not share any endpoint, and the statement follows.
\end{proof}

\section{Proving $(k+1)$-quasiplanarity}\label{se:kquasiplanarity}

Denote by $G'$ the topological graph obtained from a \kplanarstg $G$ by executing a full rerouting operation (assuming that the functions $\f$ and $\g$ are injective).
In this section we first prove that, if $k \ge 3$, then $G'$ does not contain $(k+1)$-crossings.
If $k=2$, the current choice of $\f$ and $\g$ may not avoid the presence of $3$-crossings.
Thus, when $k=2$, we modify $\f$ and $\g$ to obtain quasiplanarity.

\begin{lemma}\label{lem:nobigcrossings}
Let $G$ be a \kplanarstg, where $k \ge 3$.
The topological graph $G'$ does not contain any $(k+1)$-crossing.
\end{lemma}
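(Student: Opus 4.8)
The plan is to argue by contradiction: suppose $G'$ contains a $(k+1)$-crossing $X'$. The strategy is to count how many edges of $X'$ can be of each type (nonrerouted, safe, critical) and derive a contradiction from the constraints established in Section~\ref{se:properties-rerouted}, using $k \ge 3$ crucially. First, by Lemma~\ref{lem:nosafeedges}\eqref{nosafeedges-i} no safe edge lies in any $(k+1)$-crossing, so every edge of $X'$ is either nonrerouted or critical. By Lemma~\ref{lem:single-non-rerouted}, $X'$ contains at most one nonrerouted edge; hence $X'$ has at least $k$ critical edges.

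Next I would fix a critical edge $e = \g(X) = (u,v) \in \EE{X'}$, rerouted around $w = \f(X)$ with $u,w$ consecutive along $\partial\R{X}$. By Lemma~\ref{lem:reroute-properties}, the new crossings of $e$ all lie on its hook and form a fan at $w$; moreover, the tip of $e$ incident to $v$ is unchanged from $G$ and the tip incident to $u$ follows the crossing-free edge $\edge{u}{w}$. So the edges crossing $e$ in $G'$ split into (1) edges that also crossed $e$ in $G$ — all of which cross the ``$\partial\D{X}$-following'' middle part, so they lie in $\EE{X}$ minus $\edge{w}{w'}$ — and (2) the fan at $w$. Since $X$ and $X'$ are edge-disjoint $(k+1)$-crossings (Property~\ref{pr:k-crossings-disjoint}), none of the type-(1) edges can be in $\EE{X'}$. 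Therefore every edge of $X' \setminus \{e\}$ that crosses $e$ belongs to the fan at $w$, i.e.\ is incident to $w$. Applying this to every critical edge of $X'$: each critical $e_i \in \EE{X'}$ has an associated vertex $w_i = \f(X_i)$ such that all other edges of $X'$ crossing $e_i$ pass through $w_i$.

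Now comes the counting. In a $(k+1)$-crossing all $k+1$ edges pairwise cross, so for any critical $e_i$, the remaining $k$ edges of $X'$ all pass through $w_i$. If $k \ge 3$ and there were two distinct critical edges $e_1, e_2 \in \EE{X'}$ with $w_1 \neq w_2$, then a third edge $e_3 \in \EE{X'} \setminus \{e_1, e_2\}$ (which exists because $k+1 \ge 4$) would be incident to both $w_1$ and $w_2$, and also $e_1$ would be incident to $w_2$ while $e_2$ is incident to $w_1$ — but then $e_1$ and $e_2$ would both be incident to some common vertex, or more directly, three edges $e_1,e_2,e_3$ each passing through the two-element set $\{w_1,w_2\}$ forces a multi-edge or a vertex of degree conflicting with simplicity; in any case $\{e_1,e_2,e_3\}$ would be three pairwise-crossing edges all incident to $\{w_1, w_2\}$, hence two of them share an endpoint, contradicting that they pairwise cross in the simple graph $G'$ (or contradicting Lemma~\ref{lem:noadjcross2}). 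Hence all critical edges of $X'$ share a single associated vertex $w := w_1 = \cdots = w_r$, where $r \ge k$ is the number of critical edges. But $\f$ is injective, so at most one edge is rerouted around $w$; thus $r = 1$, i.e.\ $X'$ has at most one critical edge. Combined with at most one nonrerouted edge, $X'$ has at most $2 < k+1$ edges when $k \ge 3$ — wait, we need $2 < k+1$, i.e.\ $k > 1$, which holds, but more carefully: $X'$ has $k+1 \ge 4$ edges, yet at most $1$ critical $+$ at most $1$ nonrerouted $+ 0$ safe $= 2$ edges, a contradiction.

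The main obstacle will be making the ``three edges through two vertices'' step fully rigorous: I must carefully distinguish between an edge being \emph{incident to} $w$ (having $w$ as an endpoint) versus its hook merely passing near $w$, and rule out the degenerate configurations cleanly using simplicity together with Lemma~\ref{lem:noadjcross2} (no two adjacent critical edges cross) and Lemma~\ref{lem:non-rerouted-3-crossings} (a nonrerouted edge meets at most three critical edges, with a structural restriction). In particular, if $X'$ has its one allowed nonrerouted edge $e_0$ together with $k \ge 3$ critical edges all pairwise crossing $e_0$, Lemma~\ref{lem:non-rerouted-3-crossings} already caps the critical edges crossing $e_0$ at three, and if exactly three then two are rerouted around distinct endpoints of $e_0$ — this is the lever that, for $k \ge 3$, immediately kills the case with a nonrerouted edge, while the all-critical case is handled by Lemma~\ref{lem:noadjcross2} and the fan structure above. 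I would organize the final write-up around these two cases (whether or not $X'$ contains a nonrerouted edge), invoking Lemma~\ref{lem:non-rerouted-3-crossings} in the first and the fan/injectivity argument in the second.
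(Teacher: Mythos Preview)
Your central claim --- that for a critical edge $e=\g(X)$ rerouted around $w=\f(X)$, \emph{every} edge of $X'\setminus\{e\}$ crossing $e$ in $G'$ is incident to $w$ --- is false, and the rest of your counting argument (forcing all $w_i$ equal, then invoking injectivity of $\f$ to get $r=1$) rests on it. Lemma~\ref{lem:cross}(b) is symmetric: when two critical edges $e,e'$ cross in $G'$ but not in $G$, it may be that $e$ was rerouted around an endpoint of $e'$ (so $e'$ is incident to $w$), \emph{or} that $e'$ was rerouted around an endpoint of $e$ (the hook of $e'$ crosses a tip of $e$). In the second alternative $e'$ need not touch $w$ at all. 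Your dichotomy ``(1) crossed in $G$'' versus ``(2) fan at $w$'' omits exactly this case, and your later aside about ``distinguishing incident to $w$ from hook passing near $w$'' does not repair it.

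Your final paragraph points at the right organization (two cases: $X'$ has a nonrerouted edge, or $X'$ is all critical) and the right lemmas, but stops short of the actual contradiction. In both cases the paper's mechanism is the same and is what you are missing: one locates two critical edges $d,h\in X'$ that were rerouted around the two \emph{distinct} endpoints of a fixed edge $e\in X'$ (in the first case $e$ is the nonrerouted edge and Lemma~\ref{lem:non-rerouted-3-crossings} gives $d,h$; in the all-critical case, at most one edge of $X'$ is incident to $w$ by Lemma~\ref{lem:noadjcross2}, so with $k\ge 3$ at least two of the remaining $k-1\ge 2$ edges were rerouted around endpoints of $e$, and injectivity of $\f$ forces these endpoints to be distinct). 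Then $d$ and $h$ lie in different $(k+1)$-crossings of $G$, hence do not cross in $G$; by Lemma~\ref{lem:cross}(b) they could cross in $G'$ only if one were rerouted around an endpoint of the other, i.e.\ $d$ or $h$ would share an endpoint with $e$ --- impossible since $d,h$ cross $e$ and $G$ is simple (and $\g(X)$ is never incident to $\f(X)$). So $d$ and $h$ do not cross, contradicting $d,h\in X'$. This pair-that-cannot-cross step is the missing idea; replace your ``all $w_i$ equal'' argument with it.
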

\begin{proof}
Assume for a contradiction that $G'$ contains a $(k+1)$-crossing $X'$. By Lemma~\ref{lem:nosafeedges}\eqref{nosafeedges-i}, $X'$ does not contain any safe edge, and by Lemma~\ref{lem:single-non-rerouted}, $X'$ contains at most one nonrerouted edge.

Suppose that $X'$ contains one such edge $e$. By Lemma~\ref{lem:non-rerouted-3-crossings}, there are at most three critical edges crossing $e$ in $G'$. If there are less than $3$, then the claim follows, as $k \geq 3$. If there are three, say $d$, $h$, and $l$, then we may assume by Lemma~\ref{lem:non-rerouted-3-crossings} that $d$ and $h$ have been rerouted around (distinct) endpoints of $e$. Thus, $d$ and $h$ do not cross in $G$, by Property~\ref{pr:k-crossings-disjoint}, as they belong to different $(k+1)$-crossings. Hence, they can cross in $G'$ only if one of them has been rerouted around an endpoint of the other, by Lemma~\ref{lem:cross}. This is impossible since neither $d$ nor $h$ shares an endpoint with $e$, as $G$ is simple.

Suppose that $X'$ contains only critical edges. Let $e$ be any edge of $X'$ and assume that is has been rerouted around vertex  $w$. Since at most one edge in $X'$ can be incident to $w$ by Lemma~\ref{lem:noadjcross2} and since $k \ge 3$, there are two edges in $X'$, say $d$ and $h$, that have been rerouted around distinct endpoints of $e$. As in the previous case, $d$ and $h$ do not cross.
\end{proof}

In the remainder of this section, we assume that $k=2$.
In this case,
Lemma~\ref{lem:nobigcrossings} does not hold, as some $3$-crossings may still appear after the global rerouting; see \figurename~\ref{fig:twin-fan} for examples. Next we characterize the $3$-crossings in $G'$. The characterization allows us to avoid these $3$-crossings by choosing suitable functions $\f$ and $\g$.

\begin{definition}
  Three edges $e_1,e_2,e_3\in E$ form a \emph{twin} configuration
  (\figurename~\ref{fig:twin}) in $G'$ if they are in two distinct $3$-crossings
  $X_1,X_2\in \mathcal X$, where $e_1=\g(X_1)$, $e_2=\g(X_2)$ and
  $e_3\in X_2\setminus\{e_2\}$, such that edge $e_1$ is incident to $\f(X_2)$,
  edge $e_3$ is incident to $\f(X_1)$ but not to $\f(X_2)$, and $e_3$ is drawn
  inside $\R{X_2}$.
\end{definition}

\begin{definition}
  Three edges $e_1,e_2,e_3\in E$ form a \emph{whirl} configuration
  (\figurename~\ref{fig:fan}) in $G'$ if they are in three pairwise distinct
  $3$-crossings $X_1,X_2,X_3\in \mathcal X$, where $e_1=\g(X_1)$, $e_2=\g(X_2)$,
  and $e_3=\g(X_3)$, such that edge $e_1$ is incident to $\f(X_2)$, edge $e_2$
  is incident to $\f(X_3)$, and edge $e_3$ is incident to $\f(X_1)$.
\end{definition}


\begin{figure}[hbtp]
  \centering%
  \begin{minipage}[b]{.38\textwidth}
    \includegraphics[width=\textwidth]{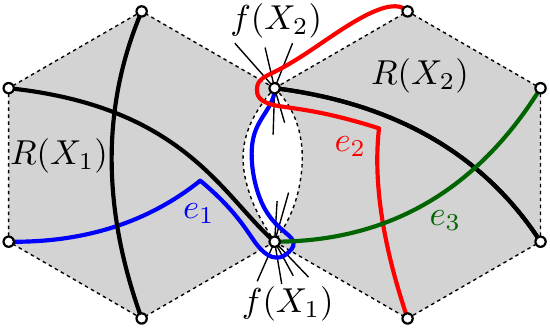}\subcaption{twin}\label{fig:twin}
  \end{minipage}
  \hfil
  \begin{minipage}[b]{.38\textwidth}
    \includegraphics[width=\textwidth]{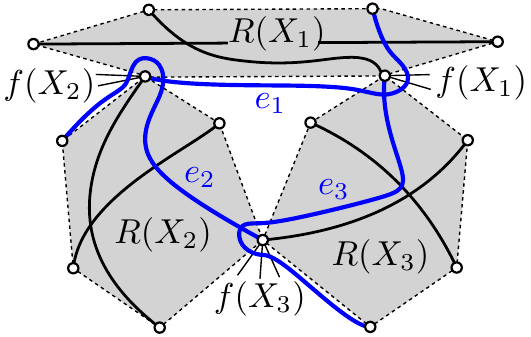}\subcaption{whirl}\label{fig:fan}
  \end{minipage}
  \caption{The global rerouting may produce 3-crossings in form of twins or whirls.\label{fig:twin-fan}}
\end{figure}

\begin{lemma}\label{lem:fan}
  Every $3$-crossing in $G'$ forms a twin or a whirl configuration.
\end{lemma}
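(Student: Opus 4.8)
The plan is to take an arbitrary $3$-crossing $X'=\{e_1,e_2,e_3\}$ in $G'$ and distill, step by step, enough structural information from the preceding lemmas to force $X'$ to be either a twin or a whirl. First I would invoke Lemma~\ref{lem:nosafeedges}\eqref{nosafeedges-i} to conclude that no edge of $X'$ is safe, and Lemma~\ref{lem:single-non-rerouted} to conclude that at most one edge of $X'$ is nonrerouted. So $X'$ has either three critical edges, or two critical edges and one nonrerouted edge. I would also recall, via Property~\ref{pr:k-crossings-disjoint} applied in $G$, that any two critical edges coming from \emph{distinct} $(k+1)$-crossings of $G$ do not cross in $G$, hence by Lemma~\ref{lem:cross}(b) they can cross in $G'$ only if one is rerouted around an endpoint of the other; and two critical edges cannot come from the \emph{same} $3$-crossing of $G$ since $\g$ picks at most one edge per crossing. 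Finally Lemma~\ref{lem:noadjcross2} says two adjacent critical edges never cross, so whenever ``$e_i$ is rerouted around an endpoint of $e_j$'' holds, $e_i$ and $e_j$ are nonadjacent and $e_i$ is a $\g$-edge of some crossing $X_i$ with that endpoint equal to $\f(X_i)$.

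\textbf{Case A: $X'$ has one nonrerouted edge $e$ and two critical edges $e_1,e_2$.} Each of $e_1,e_2$ crosses $e$ in $G'$; I want to show one of them, say $e_3$ in the statement's notation, already crosses $e$ in $G$ while the other is rerouted around an endpoint of $e$. By Lemma~\ref{lem:non-rerouted-3-crossings}, since $e$ is crossed by only two critical edges, I cannot immediately invoke its ``exactly three'' clause; instead I argue directly: if both $e_1$ and $e_2$ crossed $e$ in $G$, then all three of $e,e_1,e_2$ would form a $3$-crossing in $G$, but then a $\g$-edge from that crossing (one of $e_1,e_2$, as $e$ is nonrerouted) would have been rerouted, contradicting both being critical unless they come from \emph{other} crossings — here I need to be careful: a critical edge is a $\g$-edge of whichever crossing it was rerouted for, so if $e_1\in\g$-image it is the chosen edge of its own crossing $X_1\ni e_1$. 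If $X_1=\{e,e_1,e_2\}$ then $e$ or $e_2$ is not rerouted-for-$X_1$, fine, but then the argument of Lemma~\ref{lem:single-non-rerouted} (two nonrerouted edges in a $G$-crossing) kicks in only if two of them are nonrerouted. The clean way: at most one of $e_1,e_2$ crosses $e$ in $G$, because if both did, the $3$-crossing $\{e,e_1,e_2\}$ exists in $G$, so by our global rerouting one of its three edges is a $\g$-edge and gets rerouted in the global step — but $e$ is nonrerouted, so WLOG $e_1=\g(\{e,e_1,e_2\})$; then $e_1$'s hook is routed around $\f(\{e,e_1,e_2\})$, a vertex of $e$ or $e_2$, not of itself — and one checks $e_1$ no longer crosses $e$ or $e_2$ after this rerouting (its tips follow crossing-free edges and its $\partial\D{}$-part crosses only the remaining crossing edge), contradicting $e_1\in X'$. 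Hence at most one of $e_1,e_2$ crosses $e$ in $G$; the other, by Lemma~\ref{lem:cross}(b), is rerouted around an endpoint of $e$. So (relabeling to match the definition: $e_2:=\g(X_2)$ the one rerouted around $\f(X_2)\in V(e)$, $e_3:=e$ the nonrerouted edge, $e_1:=\g(X_1)$ the critical edge crossing $e_3$ in $G$, where $X_1\ni e_1$ in $G$ actually contains $e_3$) I get exactly a twin: $e_3\in X_1\setminus\{e_1\}$, $e_1$ incident to $\f(X_2)$? — no, I must re-match names so that $e_1$ is the one incident to $\f(X_2)$ and $e_3$ the one incident to $\f(X_1)$; I set $e_1:=$ the edge rerouted around an endpoint of the nonrerouted edge, $X_2:=$ its source crossing so $e_1=\g(X_2)$ and that endpoint $=\f(X_2)$; I set $e_3:=$ the nonrerouted edge, and $X_1:=$ the $G$-crossing containing $e_3$ and the \emph{other} critical edge $e_2:=\g(X_1)$ — then $e_3\in X_1\setminus\{e_2\}$, $e_2$ critical crossing $e_3$ in $G$, and I must check $e_3$ is incident to $\f(X_1)$ and drawn inside $\R{X_1}$ and $e_1$ incident to $\f(X_2)$, which hold by how the rerouting of $\g(X_1)=e_2$ around $\f(X_1)$ creates (or fails to destroy) the crossing with $e_3$; after this renaming the definition of twin is matched verbatim.

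\textbf{Case B: $X'=\{e_1,e_2,e_3\}$ all critical,} say $e_i=\g(X_i)$ rerouted around $\f(X_i)$, for (a priori not necessarily distinct) crossings $X_i$. Since the $X_i$ are the crossings \emph{for which} each edge was rerouted, and $\g$ is injective, the $X_i$ are pairwise distinct. For each pair $i\neq j$, Property~\ref{pr:k-crossings-disjoint} gives that $e_i,e_j$ don't cross in $G$, so by Lemma~\ref{lem:cross}(b) one of them is rerouted around an endpoint of the other; by Lemma~\ref{lem:noadjcross2} and the fact that no edge is rerouted around its own endpoint, this ``around-an-endpoint'' relation is a tournament on $\{1,2,3\}$: for each unordered pair exactly one direction holds (it can't be both, since ``$e_i$ rerouted around $\f(X_i)\in V(e_j)$'' and ``$e_j$ rerouted around $\f(X_j)\in V(e_i)$'' simultaneously would make $e_i,e_j$ lie in a common crossing/home situation — I'd spell this out, but morally both directions can't hold because $\f$ is injective and the hooks would then be forced to be disjoint, or because it would force $e_i \in X_j$ and $e_j\in X_i$ giving a contradiction with edge-disjointness of $(k+1)$-crossings). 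A tournament on $3$ vertices is either a $3$-cycle or has a source/sink. If it is the directed $3$-cycle $1\to 2\to 3\to 1$ (meaning $e_1$ rerouted around an endpoint of $e_2$, i.e.\ $\f(X_1)\in V(e_2)$, etc.), that is exactly the whirl configuration after matching indices (the definition says $e_1$ incident to $\f(X_2)$, $e_2$ incident to $\f(X_3)$, $e_3$ incident to $\f(X_1)$ — same cyclic data up to relabeling direction). If instead the tournament has a vertex, say $1$, that is a source ($e_2,e_3$ both rerouted around endpoints of $e_1$), then $\f(X_2),\f(X_3)\in V(e_1)$; since $e_1$ has two endpoints and $\f$ is injective, $\f(X_2)\neq\f(X_3)$, so they are the two distinct endpoints of $e_1$; then I claim $e_2$ and $e_3$ cannot cross in $G'$: their new crossings are on their hooks (Lemma~\ref{lem:cross}(b)), the hook of $e_2$ is a small neighborhood of $\f(X_2)$ and of $e_3$ near $\f(X_3)$, distinct endpoints of $e_1$ — and $e_2,e_3$ don't cross in $G$ (edge-disjoint $G$-crossings) and neither is rerouted around an endpoint of the other (source/sink), contradicting $\{e_1,e_2,e_3\}$ being a $3$-crossing; a sink is symmetric. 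Hence the tournament is the $3$-cycle and $X'$ is a whirl.

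\textbf{Main obstacle.} The delicate part is the bookkeeping in Case~B that rules out the transitive tournament, and the careful index-relabeling in Case~A to land exactly on the stated definitions of twin/whirl — in particular verifying the ``$e_3$ is drawn inside $\R{X_2}$'' and the precise incidence conditions, which requires recalling exactly which subcurves of a rerouted $\g$-edge can create crossings (the hook near $\f(\cdot)$, per Lemma~\ref{lem:cross}(b)) and that a $G$-crossing survives in $G'$ only if one edge is rerouted. I would also double-check the subtle point in Case~A that the two critical edges of $X'$ genuinely come from two \emph{different} source crossings $X_1\neq X_2$ (so the twin's $X_1,X_2$ are distinct): if both $e_1,e_2$ were $\g$-edges of the \emph{same} crossing that is impossible since $\g$ selects one edge per crossing, so distinctness is automatic.
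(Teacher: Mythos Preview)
Your case split and toolkit match the paper's, but both cases have real gaps. In Case~A, the step ``if both $e_1,e_2$ crossed $e$ in $G$ then $\{e,e_1,e_2\}$ is a $3$-crossing in $G$'' is false: $e_1=\g(X_1)$ and $e_2=\g(X_2)$ lie in distinct $3$-crossings of $G$ (since $\g$ is injective) and hence do \emph{not} cross each other in $G$ by Property~\ref{pr:k-crossings-disjoint}. More importantly, you never rule out that \emph{neither} of them crosses $e$ in $G$, and your index-juggling never verifies the twin clauses ``$e_3$ not incident to $\f(X_2)$'' and ``$e_3$ drawn inside $\R{X_2}$''. The paper's route is different and cleaner: writing $X_i=\{c_i,d_i,e_i\}$ with $c_i$ incident to $\f(X_i)$, it first fixes ``$e_1$ incident to $\f(X_2)$'' from the $e_1$--$e_2$ crossing, then observes that a nonrerouted edge crossed by the critical edge $e_i$ must be either $d_i$ or incident to $\f(X_i)$. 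This forces $e_3\in\{d_1,d_2,(\f(X_1),\f(X_2))\}$; it then rules out $d_1$ (since $e_2$ cannot cross $d_1$: $d_1$ shares no endpoint with $e_1$, hence is not incident to $\f(X_2)$) and $(\f(X_1),\f(X_2))$ (since $X_1$ would be a home for it, making it safe). The only survivor $e_3=d_2$ immediately certifies every twin condition.

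In Case~B, the tournament idea is sound in principle, but both justifications you give are wrong. ``Both directions cannot hold for a pair'' is not because it would force $e_i\in X_j$; the real reason is that if $\f(X_1)\in V(e_2)$ and $\f(X_2)\in V(e_1)$, then analysing the crossings $e_1$--$e_3$ and $e_2$--$e_3$ forces $\f(X_3)\in V(e_1)\cap V(e_2)$ (otherwise some $\f(X_i)$ becomes a common endpoint of two of the edges), making $e_1,e_2$ adjacent and contradicting Lemma~\ref{lem:noadjcross2}. And your transitive-tournament step (``neither of $e_2,e_3$ is rerouted around an endpoint of the other'') flatly ignores the third arc that a tournament on $\{2,3\}$ must have. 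The paper bypasses all of this with a one-line count: by Lemma~\ref{lem:noadjcross2} the six endpoints are distinct, so each rerouting creates at most one crossing among $e_1,e_2,e_3$; three needed crossings and three reroutings force a cyclic assignment, which is exactly the whirl.
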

\begin{proof}
  Let $e_1$, $e_2$, and $e_3$ be three edges that form a $3$-crossing in
  $G'$. By Lemma~\ref{lem:nosafeedges} we know that none of the three edges is
  safe. Hence, we may assume without loss of generality that $e_1$, $e_2$, and $e_3$ are either nonrerouted or critical. By Lemma~\ref{lem:single-non-rerouted}, at most one of them is nonrerouted, and thus at least two are critical. We assume that $e_1$ and $e_2$ are critical, and distinguish two cases, based on whether $e_3$ is critical or nonrerouted.\smallskip

We first consider the case in which $e_3$ is nonrerouted. Recall that every $3$-crossing reroutes at most
one critical edge in the global rerouting. For $i\in\{1,2\}$, let $X_i=\{c_i,d_i,e_i=\g(X_i)\}$ be the
$3$-crossing that triggered the rerouting of $e_i$ around the endpoint $\f(X_i)$ of
  $c_i$.

  Then by construction and Lemma~\ref{lem:cross} the only edges crossed by $e_i$, for $i\in\{1,2\}$,
  in $G'$ are $d_i$, edges incident to $\f(X_i)$, and at most two edges
  rerouted around an endpoint of $e_i$.

  Since $e_1$ and $e_2$ cross and are both critical, one of them has been rerouted
  around an endpoint of the other. Without loss of generality suppose that $e_2$ has been rerouted around an endpoint of $e_1$, that is, $e_1$ is incident to $\f(X_2)$.

  Since $e_3$ crosses both $e_1$ and $e_2$ and is nonrerouted, we have the following. On one hand, $e_3$ is either $d_1$ or incident to $\f(X_1)$; on the other hand, $e_3$ is either $d_2$ or incident to $\f(X_2)$. Thus, $e_3\in\{d_1,d_2,(\f(X_1),\f(X_2))\}$. We claim that $e_3=d_2$.

  To prove the claim, we first argue that $e_3\ne d_1$. By definition, $e_1$
  and $d_1$ do not share an endpoint, and $d_1$ is nonrerouted. The only
  nonrerouted edges that $e_2$ crosses in $G'$ are $d_2$ and edges incident
  to $\f(X_2)$. Since $\f(X_2)$ is an endpoint of $e_1$, it is not an endpoint of
  $d_1$. Therefore,~$e_2$ does not cross $d_1$, which implies $e_3\ne d_1$.

  It remains to prove that $e_3 \ne (\f(X_1),\f(X_2))$. Suppose, for a contradiction, that $e_3 = (\f(X_1),\f(X_2))$.
  This implies that the $3$-crossing $X_1$ is a home for $e_3$; namely, both $\f(X_1)$ and $\f(X_2)$ are vertices of $X_1$ (the former by definition and the latter as an endpoint of $e_1$), and $\f(X_2)$ is incident to $e_1$. However, this contradicts the assumption that $e_3$ is nonrerouted.
  Altogether it follows that $e_3=d_2$, as claimed, and so $e_1,e_2,e_3$ form a
  twin configuration.\smallskip

	We then consider the case in which $e_3$ is critical. Since only
  one edge of each $3$-crossing in~$\mathcal X$ is rerouted, $e_1,e_2,e_3$ come from
  pairwise distinct $3$-crossings $X_1,X_2,X_3$, with $e_i=\g(X_i)$, for
  $i\in\{1,2,3\}$. By Lemma~\ref{lem:cross} two of these edges cross if and only
  if one is rerouted around an endpoint of the other. By
  Lemma~\ref{lem:noadjcross2} the edges $e_1,e_2,e_3$ are spanned by six
  pairwise distinct endpoints. Therefore, every rerouting generates at most one
  crossing among $e_1,e_2,e_3$ and so every rerouting must generate a crossing
  between a different pair of segments. It follows that $e_1,e_2,e_3$ form a
  whirl configuration (with a suitable permutation of indices).
\end{proof}

%

In order to suitably select functions $\f$ and $\g$ so that $G'$ does not contain any twin or whirl configurations, we exploit an auxiliary conflict graph, which we define in the next subsection.

\subsection{Conflict digraph\label{ssec:conflict}}

We define a plane digraph $K$ on the same vertex set $V$ as $G$ that represents the interactions between the $3$-crossings in $\mathcal X$. The conflict digraph depends on $G$ and on the function $\f:\mathcal X\rightarrow V$, but it does not depend on the function $\g$. For every $3$-crossing $X\in\mathcal X$, we create five directed edges that are all directed towards $\f(X)$ and drawn inside $\R{X}$. These edges start from the five vertices on $\partial\R{X}$ other than $\f(X)$; see \figurename~\ref{fig:ctwin-cfan}. Note that two vertices in $V$ may be connected by two edges with opposite orientations lying in two different $3$-crossings (for instance, in a twin configuration as shown in \figurename~\ref{fig:ctwin}). However, $K$ contains neither loops nor parallel edges with the same orientation because $\f$ is injective and so every vertex can have incoming edges from at most one $3$-crossing.

\begin{figure}[hbtp]
  \centering%
  \begin{minipage}[b]{.4\textwidth}
    \includegraphics{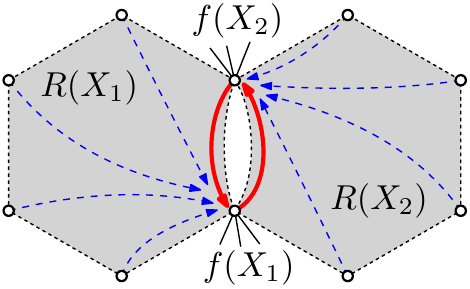}\subcaption{twin}\label{fig:ctwin}
  \end{minipage}
  \hfil
  \begin{minipage}[b]{.4\textwidth}
    \includegraphics{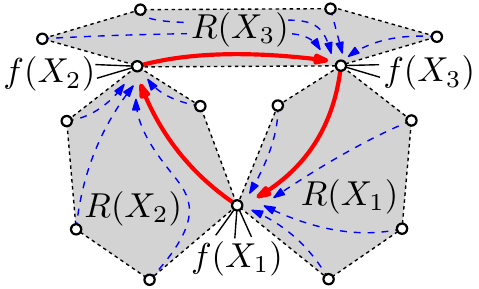}\subcaption{whirl}\label{fig:cfan}
  \end{minipage}
  \caption{Twin and whirl configurations induce cycles in the conflict graph.}\label{fig:ctwin-cfan}
\end{figure}

\begin{property}\label{pr:cycle}
The following properties hold for digraph $K$:
\begin{enumerate}[\rm (i)]
\item $K$ is a directed plane graph.
\item At every vertex $v\in V$, the incoming edges in $K$ are consecutive in the
  cyclic order of incident edges around $v$.
\item If $e_1=(v_1,v_2)$, $e_2=(v_2,v_3)$, and $e_3=(v_3,v_1)$ form a whirl configuration in $G'$, then $K$ contains a $3$-cycle $(v_1,v_2,v_3)$.
\item If $e_1=\g(X_1)$, $e_2=\g(X_2)$, and $e_3\in X_2$ form a twin configuration in $G'$, then the conflict digraph contains a $2$-cycle $(\f(X_1),\f(X_2))$.
\end{enumerate}
\end{property}
\begin{proof}
  (i) Each edge of $K$ lies in a region $\R{X}$, for some $X\in\mathcal
  X$. Since these regions are interior-disjoint, by
  Property~\ref{pr:k-crossings-disjoint}, edges from different regions do not
  cross. All edges in the same region $\R{X}$ are incident to $\f(X)$; so they
  can be drawn in the interior of $\R{X}$ without crossing each other. (ii) For
  each vertex $v\in V$, there is at most one $3$-crossing $X\in \mathcal X$ such
  that $v=\f(X)$, since $\f$ is injective. Since all incoming edges of $v$ lie
  in the region $\R{X}$, and all edges lying in $\R{X}$ are directed towards
  $v=\f(X)$, by construction, the statement follows. (iii--iv) Both claims
  follow directly from the definition of  twin and whirl configurations and the definition~of~$K$.
\end{proof}

\paragraph{Relations between cycles in $K$}
We observed that $K$ is a plane digraph, where every twin configuration induces
a 2-cycle and every whirl configuration induces a 3-cycle. So in order to prevent
the creation of twin and whirl configurations in $G'$, we need
to understand the structure of 2- and 3-cycles in the conflict digraph $K$. In
the following paragraphs we introduce some terminology and prove some structural
statements about cycles in $K$.

A cycle in the conflict digraph $K$ is \emph{short} if it has length two or
three. For a cycle $c$~in~$K$, let $\inte(c)$ denote the interior of $c$, let
$\exte(c)$ denote the exterior of $c$, let $\R{c}$ denote the compact
region bounded by $c$, and let $\V{c}$ denote the vertex set
of $c$. We use the notation $i \oplus 1:=1+(i\;\mathrm{mod}\;k)$ and
$i \ominus 1:=1+((k+i-2)\;\mathrm{mod}\;k)$ to denote successors and
predecessors, respectively, in a circular sequence of length $k$ that is indexed
$1,\ldots,k$. Let $c_1$ and $c_2$ be two cycles in the conflict graph $K$.  We
say that $c_1$ and $c_2$ are \emph{interior-disjoint} if
$\inte(c_1)\cap\inte(c_2)=\emptyset$. We say that $c_1$ \emph{contains} $c_2$
if $\R{c_2}\subseteq\R{c_1}$. In both cases, $c_1$ and $c_2$ may share vertices
and edges, but they may also be vertex-disjoint.  See \figurename~\ref{fig:contains}
for an example.

\begin{figure}[thbp]
  \centering
  \includegraphics[width=.39\textwidth]{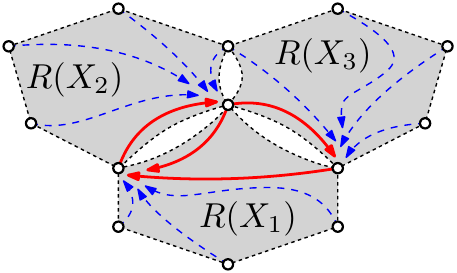}
  \caption{A (ghost) $3$-cycle that contains a $2$-cycle.\label{fig:contains}}
\end{figure}

\begin{lemma}\label{lem:V0}
  If a vertex $v\in V$ is incident to two interior-disjoint cycles in $K$, then
  these cycles have opposite orientations (clockwise vs.~counterclockwise).
  Consequently, every vertex $v\in V$ is incident to at most two
  interior-disjoint cycles in $K$.
\end{lemma}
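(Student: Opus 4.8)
\textbf{Proof plan for Lemma~\ref{lem:V0}.}

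The plan is to argue about the local rotation at $v$ together with the planarity of $K$. By Property~\ref{pr:cycle}(i) the digraph $K$ is drawn in the plane without crossings, and by Property~\ref{pr:cycle}(ii) the edges of $K$ incident to $v$ split, in the cyclic order around $v$, into a consecutive block of incoming edges (all lying in the single region $\R{X}$ with $v=\f(X)$, if such an $X$ exists) followed by a consecutive block of outgoing edges. Let $c$ be a cycle of $K$ through $v$. Since $c$ is a directed cycle, it enters $v$ along one incoming edge $a$ and leaves $v$ along one outgoing edge $b$; thus $c$ ``uses up'' one edge from the incoming block and one from the outgoing block at $v$. First I would record that the two edges $a,b$ appear in the rotation at $v$ with every incoming edge of $v$ lying (weakly) on one side of the path $a,v,b$ and every outgoing edge on the other side — this is exactly the content of Property~\ref{pr:cycle}(ii).

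Next, suppose $v$ is incident to two interior-disjoint cycles $c_1$ and $c_2$, entering/leaving $v$ via edge pairs $(a_1,b_1)$ and $(a_2,b_2)$ respectively, with $a_i$ incoming and $b_i$ outgoing. The key step is to use the rotation at $v$ to determine the local picture of $c_1\cup c_2$ near $v$. Because all incoming edges are consecutive and all outgoing edges are consecutive, the four edges $a_1,a_2,b_1,b_2$ (with possible coincidences) occur around $v$ in an order of the form ``incoming, incoming, outgoing, outgoing''. Consequently, near $v$ the two curves $c_1$ (through $a_1,v,b_1$) and $c_2$ (through $a_2,v,b_2$) are ``nested'' at $v$ rather than ``linked'': one of the two small sectors at $v$ cut out by $c_1$ contains no edge of $c_2$ incident to $v$. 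Combining this local nesting with the hypothesis $\inte(c_1)\cap\inte(c_2)=\emptyset$ forces a relationship between the orientations: if $c_1$ and $c_2$ had the same orientation (say both counterclockwise as seen from their respective interiors), then near $v$ their interiors would lie on the same side and would therefore overlap, contradicting interior-disjointness. Hence the two cycles must wind in opposite senses around $v$; I would phrase this carefully in terms of which of the two sectors at $v$ (the ``incoming side'' or the ``outgoing side'') is the local interior of each cycle, and note that for a directed cycle traversed so that one enters along an incoming edge and leaves along an outgoing edge, the side on which the interior lies is determined by the global orientation, so ``opposite local sense at $v$'' is the same as ``opposite orientation (clockwise vs.\ counterclockwise).''

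For the ``Consequently'' part: if $v$ were incident to three pairwise interior-disjoint cycles, then by the first part any two of them would have opposite orientations; but ``opposite orientation'' is not a relation that three objects can satisfy pairwise (it would force two of the three to have the same orientation), a contradiction. So $v$ lies on at most two pairwise interior-disjoint cycles.

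\textbf{Main obstacle.} The delicate point is making the phrase ``opposite orientation'' precise and global rather than merely local, i.e., justifying that the local winding sense of a directed cycle at a vertex $v$ (which side of the path $a,v,b$ through $v$ is $\inte(c)$) is consistent with the cycle's global clockwise/counterclockwise orientation, and that this is exactly what lets us rule out three pairwise interior-disjoint cycles. I expect to handle this by fixing, once and for all, the convention that a directed cycle is \emph{counterclockwise} if its interior lies to the left of the direction of traversal; then at $v$, traversing from the incoming edge $a$ to the outgoing edge $b$, ``interior on the left'' pins down which sector (the one swept going counterclockwise from $b$ to $a$, which — by Property~\ref{pr:cycle}(ii), since incomings and outgoings are blocked — is the sector \emph{not} containing the other $K$-edges at $v$ on the ``wrong'' side) is the interior. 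The interior-disjointness hypothesis then says the two designated sectors are disjoint, and a short case check on the cyclic order of $a_1,a_2,b_1,b_2$ shows this is possible only when the two traversal directions disagree, i.e., the orientations are opposite.
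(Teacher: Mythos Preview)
Your proposal is correct and follows essentially the same approach as the paper: both arguments combine Property~\ref{pr:cycle}(ii) (incoming edges are consecutive around $v$) with interior-disjointness (which forces each cycle's own in/out pair to be consecutive among the four edges at $v$) to pin down the cyclic order and hence the orientations. The paper's version is more terse---it simply observes that these two consecutiveness constraints force the counterclockwise order $(e_1^{\rm out}, e_1^{\rm in}, e_2^{\rm in}, e_2^{\rm out})$ and reads off the orientations directly---whereas you spend more effort on the ``local versus global orientation'' point, which the paper treats as self-evident; your handling of the ``Consequently'' clause via the impossibility of three pairwise-opposite orientations is exactly the intended reasoning.
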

\begin{proof}
  Let $v$ be incident to cycles $c_1$ and $c_2$ in $K$, and assume without loss
  of generality that $c_1$ is counterclockwise.  For $i\in \{1,2\}$, the cycle
  $c_i$ has an edge $e_i^{\rm in}$ directed into $v$ and an edge $e_i^{\rm out}$
  directed out of $v$ (possibly $e_1^{\rm in}=e_2^{\rm in}$ or
  $e_1^{\rm out}=e_2^{\rm out}$).

  By Property~\ref{pr:cycle}(ii), the edges directed to (resp., from) $v$ are consecutive in the
  rotation order of all edges incident to $v$. The edges $e_1^{\rm out}$ and $e_1^{\rm in}$
  (resp., $e_2^{\rm out}$ and $e_2^{\rm in}$) are also consecutive because the two cycles
  are interior-disjoint. It follows that the counterclockwise order of the four edges around
  $v$ is $(e_1^{\rm out}, e_1^{\rm in}, e_2^{\rm in}, e_2^{\rm out})$. So the
  cycle $c_2$ is clockwise, as required.
\end{proof}

\begin{lemma}\label{lem:cycdet}
  A short cycle in $K$ is uniquely determined by its vertex set.
\end{lemma}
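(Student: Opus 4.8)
A short cycle in $K$ is uniquely determined by its vertex set.

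The plan is to argue by contradiction, assuming two distinct short cycles $c_1 \neq c_2$ in $K$ share the same vertex set $S=\V{c_1}=\V{c_2}$, and to derive a contradiction with the injectivity of $\f$ (which is encoded in Property~\ref{pr:cycle}(ii): the incoming edges at any vertex all come from the single $3$-crossing $X$ with $\f(X)=v$, hence form a consecutive block in the rotation). I first observe that since $|S|\le 3$, there are only two shapes to consider. If $|S|=2$, say $S=\{u,v\}$: a $2$-cycle on $\{u,v\}$ needs one edge $u\to v$ and one edge $v\to u$; but by Property~\ref{pr:cycle}(ii) there is at most one edge into $v$ coming from the unique crossing with $\f(X)=v$, and likewise at most one edge into $u$. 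Actually the conflict digraph can have \emph{several} edges directed into $v$ from $\R{X}$ (five of them), but all incoming edges at $v$ lie in the single region $\R{X}$; two distinct $2$-cycles on $\{u,v\}$ would require two distinct edges $u\to v$, hence two edges from $u$ into $v$ — which is fine in principle — so the real constraint is planarity together with the block structure: the two parallel edges $u\to v$ together with the return edge(s) would force a vertex of $\V X$ strictly inside, contradicting that $\R{X}$ has no graph parts strictly inside beyond those of $X$. Cleaner: I will instead show directly that a $2$-cycle $(u,v)$ in $K$ is determined by $S$ because the edge $u\to v$ lies in $\R{X_v}$ (the crossing with $\f=v$) and the edge $v\to u$ lies in $\R{X_u}$; since by Property~\ref{pr:k-crossings-disjoint} these regions are interior-disjoint, and since inside $\R{X_v}$ all five edges to $v$ go to pairwise distinct sources on $\partial\R{X_v}$, there is at most one edge $u\to v$ in all of $K$. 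Hence the $2$-cycle on $\{u,v\}$, if it exists, is unique.

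For $|S|=3$, say $S=\{v_1,v_2,v_3\}$, a $3$-cycle is an orientation of a triangle, so up to the two cyclic orientations there are only two candidates, $(v_1,v_2,v_3)$ and $(v_1,v_3,v_2)$; and each directed edge $v_i\to v_j$ is, by the argument above, unique in $K$ (it must lie in $\R{X}$ with $\f(X)=v_j$, and there is at most one such edge from $v_i$ since the five sources on $\partial\R{X}$ are distinct). So the only way to have two distinct $3$-cycles on $S$ is to have the reverse triangle $(v_1,v_2,v_3)$ \emph{and} $(v_1,v_3,v_2)$ both present in $K$, i.e.\ all six directed edges $v_i\to v_j$ present. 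I will rule this out: the six edges would give a planar drawing of $K_3$ with both orientations, and pairing each $v_i\to v_j$ with $v_j\to v_i$ yields a $2$-cycle on every pair of the three vertices; two of these $2$-cycles are interior-disjoint (or one contains another), and Lemma~\ref{lem:V0} bounds the number of interior-disjoint cycles at a vertex by two — but here $v_1$ would be incident to the $2$-cycle on $\{v_1,v_2\}$, the $2$-cycle on $\{v_1,v_3\}$, and one of the $3$-cycles, with at least two of these pairwise interior-disjoint, contradicting Lemma~\ref{lem:V0}. (Alternatively, more elementarily: the edge $v_2\to v_3$ and the edge $v_3\to v_2$ lie in interior-disjoint regions $\R{X_3}$ and $\R{X_2}$, forcing the triangle $v_1v_2v_3$ to be drawn with $v_1$ on a prescribed side; doing the same for the pair $v_1,v_3$ pins $v_2$ to the opposite side, and the two constraints are incompatible since a triangle has only two sides.)

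The step I expect to be the main obstacle is the $|S|=3$ case, specifically showing that the ``both orientations of the triangle'' configuration cannot occur: one has to be careful that the five edges emanating into $\f(X)$ inside $\R{X}$ really are drawn without crossings to \emph{distinct} boundary sources (which is exactly the content of the construction of $K$ together with Property~\ref{pr:cycle}(i)), so that ``$v_i\to v_j$ is unique in $K$'' is legitimate, and then to assemble the planarity/orientation contradiction cleanly — most naturally by invoking Lemma~\ref{lem:V0}, since three short cycles through a common vertex, at least two of them interior-disjoint, overflow its budget of two. The $|S|=2$ case and the enumeration of orientations are routine once the uniqueness-of-directed-edge observation is in place.
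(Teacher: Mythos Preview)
Your reduction is correct and matches the paper up to the key step: there is at most one directed edge $(u,v)$ in $K$ (since any such edge lies in $\R{X}$ with $\f(X)=v$ and has distinct sources on $\partial\R{X}$), so the $2$-cycle case is immediate, and for $|S|=3$ the only way to get two distinct $3$-cycles is to have both orientations $(v_1,v_2,v_3)$ and $(v_3,v_2,v_1)$ present.

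The gap is in your final contradiction. You write that at $v_1$ there are three cycles ``with at least two of these pairwise interior-disjoint, contradicting Lemma~\ref{lem:V0}.'' But Lemma~\ref{lem:V0} \emph{allows} exactly two interior-disjoint cycles through a vertex; you would need three pairwise interior-disjoint cycles through $v_1$ to get a contradiction, and you have not established this (nor is it clear that Property~\ref{pr:cycle}(ii) alone forces it). Your alternative ``triangle has only two sides'' sketch is too vague to stand as a proof.

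The paper closes the argument differently and more directly: if both orientations are present, then for the $3$-crossing $X_i$ with $\f(X_i)=v_i$, the incoming edges $(v_{i\ominus 1},v_i)$ and $(v_{i\oplus 1},v_i)$ witness that all of $v_1,v_2,v_3$ lie in $\V{X_i}$, for each $i\in\{1,2,3\}$. Hence three distinct $3$-crossings $X_1,X_2,X_3$ share three vertices, contradicting Property~\ref{pr:kgon}\eqref{obs:kgon:2}. This replaces your Lemma~\ref{lem:V0} appeal with a one-line application of Property~\ref{pr:kgon}\eqref{obs:kgon:2}, which is exactly the structural fact (no planar $K_{3,3}$) that your informal alternative was groping toward.
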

\begin{proof}
  Recall that between any ordered pair $(u,v)$ of vertices there is at most one
  directed edge $(u,v)$ in $K$ because such an edge corresponds to a $3$-crossing
  $X \in\mathcal X$ with $u,v\in \V{X}$ and~$\f(X)=v$. As $\f$ is injective, there
  is at most one such $3$-crossing.

  So the statement is obvious for $2$-cycles. Consider two $3$-cycles $c_1$ and
  $c_2$ in $K$ with $\V{c_1}=\V{c_2}=\{v_1,v_2,v_3\}$. Without loss of
  generality, let $c_1=(v_1,v_2,v_3)$. If $c_1$ and $c_2$ share an edge, say
  $(v_1,v_2)$, then there is a unique way to complete this edge to a directed
  $3$-cycle $(v_1,v_2,v_3)=c_1=c_2$. Hence suppose that $c_1$ and $c_2$ are
  edge-disjoint, that is, $c_2=(v_3,v_2,v_1)$.

  Let $X_i$ denote the $3$-crossing with $f(X_i)=v_i$, for $i\in\{1,2,3\}$. All edges
  directed to $v_1$ are drawn inside $\R{X_1}$ between vertices of $\V{X_1}$, and
  both $(v_3,v_1)$, as an edge of $c_1$, and $(v_2,v_1)$, as an edge of $c_2$,
  are edges of $K$. Therefore, $v_1,v_2,v_3\in \V{X_1}$.  Symmetrically, it
  follows that $v_1,v_2,v_3\in \V{X_1}\cap\V{X_2}\cap\V{X_3}$. Three distinct
  $3$-crossings $X_1,X_2,X_3$ share three distinct vertices, contradicting
  Property~\ref{pr:kgon}\eqref{obs:kgon:2}. It follows that
  $c_2$ and $c_1$ have the same orientation and therefore $c_1=c_2$.
\end{proof}

\paragraph{Ghosts}
We say that a $3$-cycle in $K$ is a \emph{ghost} if two of its vertices induce
a $2$-cycle in $K$; see, e.g., \figurename~\ref{fig:contains}.
Let $\mathcal{C}$ denote the set of all short cycles in $K$ that are not ghosts.

\begin{lemma}\label{lem:max}
  Let $c_1,c_2\in\mathcal{C}$. If there is a vertex of $\V{c_1}$ in $\inte(c_2)$, then
  $c_2$ contains $c_1$.
\end{lemma}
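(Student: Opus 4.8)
The plan is to argue by contradiction, exploiting the planarity of the conflict digraph $K$ together with the structural facts already established about short cycles. Suppose $c_1,c_2\in\mathcal{C}$ and there is a vertex $v\in\V{c_1}$ lying in $\inte(c_2)$, but $c_2$ does not contain $c_1$, i.e.\ $\R{c_1}\not\subseteq\R{c_2}$. Then some portion of $c_1$ must escape $\R{c_2}$, so $c_1$ either crosses $c_2$ (as closed curves in the plane) or shares vertices with $c_2$ in a way that forces an edge crossing nearby. Since $K$ is a plane digraph by Property~\ref{pr:cycle}(i), two edges of $K$ cannot cross; hence the only way $c_1$ can leave $\R{c_2}$ is by passing through a vertex of $\V{c_2}$ that it shares. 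So the first step is to reduce to the case where $c_1$ and $c_2$ share at least one vertex, and in fact where some vertex $u\in\V{c_1}\cap\V{c_2}$ is such that $c_1$ uses $u$ to transition from $\inte(c_2)$ to $\exte(c_2)$ or $\partial\R{c_2}$.

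Next I would analyze the local picture at such a shared vertex $u$ using Property~\ref{pr:cycle}(ii): the incoming edges of $K$ at $u$ are consecutive in the rotation at $u$, coming from the unique $3$-crossing $X$ with $\f(X)=u$, and are drawn inside $\R{X}$. Both the edge of $c_1$ into $u$ and the edge of $c_2$ into $u$ (if $u$ is not the ``source side'' of both) therefore lie inside $\R{X}$, which constrains how $c_1$ and $c_2$ can interleave at $u$. The key combinatorial step will be to show that, given a vertex $v\in\V{c_1}\cap\inte(c_2)$, the cycle $c_1$ cannot cross $\partial\R{c_2}$ at all: every edge of $c_1$ stays in $\R{c_2}$ because crossing edges are forbidden and the shared vertices do not permit a ``jump.'' Here I expect to use that short cycles are very constrained — length two or three — so that once one vertex is interior and the cycle cannot cross the boundary, all its vertices and edges are forced into $\R{c_2}$, giving $\R{c_1}\subseteq\R{c_2}$, i.e.\ $c_2$ contains $c_1$.

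I anticipate that the main obstacle is handling the boundary cases cleanly: when $c_1$ and $c_2$ share one or two vertices (but $c_1$ still has a vertex strictly inside $c_2$), one must rule out the configuration where $c_1$ ``pokes out'' through a shared vertex. The tools for this are Lemma~\ref{lem:cycdet} (a short cycle is determined by its vertex set, so $c_1\ne c_2$ forces $\V{c_1}\ne\V{c_2}$, and if they shared all vertices they would be equal), Property~\ref{pr:kgon}\eqref{obs:kgon:2} (three distinct $(k+1)$-crossings share at most two vertices, which bounds how many of the defining $3$-crossings of $c_1$'s and $c_2$'s edges can coincide), and Lemma~\ref{lem:V0} (a vertex lies on at most two interior-disjoint cycles, with opposite orientations). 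A natural sub-argument: if $c_1$ leaves $\R{c_2}$ through a shared vertex $u$, then $c_1$ and $c_2$ are not interior-disjoint yet also not nested, so their interiors overlap partially; I would show this overlap forces either two parallel $K$-edges with the same orientation (impossible, since $\f$ is injective) or a ghost (excluded since $c_1,c_2\in\mathcal{C}$), or a violation of Property~\ref{pr:kgon}\eqref{obs:kgon:2}. Carefully enumerating the few ways two short cycles can be positioned relative to each other — vertex-disjoint, one shared vertex, two shared vertices, one shared edge — and dispatching each with the cited properties is the bulk of the work; none of the individual cases should be hard, but assembling them into a complete case analysis is where the care is needed.
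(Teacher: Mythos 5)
Your proposal follows essentially the same route as the paper's proof: assume $c_2$ does not contain $c_1$, use planarity of $K$ to force an entire edge of the short cycle $c_1$ into $\exte(c_2)$ with both of its endpoints in $\V{c_1}\cap\V{c_2}$, and then note that shortness of $c_2$ together with the absence of parallel equally-oriented edges forces the reverse edge to lie on $c_2$, creating a $2$-cycle that makes $c_1$ a ghost --- a contradiction you correctly anticipate in your ``parallel edges or ghost'' dichotomy. The additional machinery you line up (Property~\ref{pr:cycle}(ii), Lemma~\ref{lem:V0}, Property~\ref{pr:kgon}\eqref{obs:kgon:2}) turns out to be unnecessary; the paper's argument needs only planarity, shortness, the injectivity of $\f$, and the exclusion of ghosts.
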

\begin{proof}
  Suppose to the contrary that there exist short cycles $c_1,c_2\in\mathcal{C}$
  such that there is a vertex $v_1 \in \V{c_1}$ that lies in $\inte(c_2)$ but $c_2$ does not contain $c_1$.
  Then  some point along $c_1$ lies in $\exte(c_2)$. Since $K$ is a plane graph, an
  entire edge of $c_1$ must lie in $\exte(c_2)$. Denote this edge by
  $(v_2,v_3)$. Recall that $c_1$ is short (that is, it has at most three
  vertices), consequently, $c_1=(v_1,v_2,v_3)$.  Since $c_1$ has points in both
  $\inte(c_2)$ and $\exte(c_2)$, the two cycles intersect in at least two
  points. In a plane graph, the intersection of two cycles consists of vertices
  and edges. Consequently $\V{c_1}\cap\V{c_2}=\{v_2,v_3\}$.  Recall that $c_2$
  is also short, and so it has a directed edge between any two of its
  vertices. However, $(v_2,v_3)$ lies in $\exte(c_2)$, so the reverse edge
  $(v_3,v_2)$ is present~in~$c_2$. That is, $\{v_2,v_3\}$ induces a 2-cycle in $K$. Hence $c_1$ is a ghost, contrary to our assumption $c_1\in\mathcal{C}$.
\end{proof}

\paragraph{Smooth cycles}
Next we define a special type of cycles, called \emph{smooth}, so as to control
the interaction between cycles in $K$.

\begin{definition}
  Let $c=(v_1,\ldots,v_m)\in\mathcal{C}$.
  Recall that every edge in $K$ lies in a region $\R{X}$, $X\in\mathcal X$,
  and is directed towards $\f(X)$. So the cycle $c$ corresponds to a cycle of $3$-crossings
  $(X_1,\ldots,X_m)$, such that $v_i=\f(X_i)$ and $v_i$ lies in
  the common boundary $\partial \R{X_i}\cap \partial \R{X_{i\oplus 1}}$
  for $i=1,\ldots,m$. We say that the $3$-crossings
  $X_1,\ldots,X_m$ are \emph{associated} with $c$.
  The cycle $c$ is \emph{smooth} if none of the associated $3$-crossings has a
  vertex in $\inte(c)$. For example, the $3$-cycle in \figurename~\ref{fig:smooth}
  is smooth, but the one in \figurename~\ref{fig:nonsmooth} is not.
\end{definition}

\begin{figure}[hbtp]
  \centering%
  \begin{minipage}[b]{.39\textwidth}
    \includegraphics{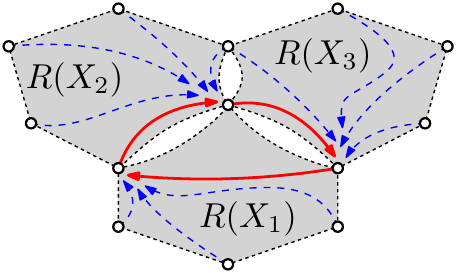}\subcaption{A smooth 3-cycle.}\label{fig:smooth}
  \end{minipage}
  \hfil
  \begin{minipage}[b]{.39\textwidth}
    \includegraphics{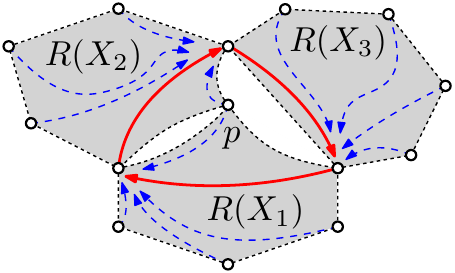}\subcaption{A nonsmooth 3-cycle.}\label{fig:nonsmooth}
  \end{minipage}
  \caption{Examples of smooth and nonsmooth cycles.\label{fig:6}}
\end{figure}

Note that a smooth cycle in $K$ may contain many vertices of various $3$-crossings in
its interior; the restrictions apply only to those (two or three) $3$-crossings that
are associated with the cycle. For instance, there might be many more $3$-crossings
in the white regions between the $3$-crossings in \figurename~\ref{fig:6}.

Let $\cs$ denote the set of all smooth cycles in $\mathcal{C}$, that is, the set
of all short nonghost cycles in $K$ that are smooth. In Section~\ref{ssec:f}, we show how
to choose $\f$ such that all cycles in $\mathcal{C}$ are smooth, that is,
$\mathcal{C}=\cs$.

\paragraph{Properties of smooth cycles}
The following three lemmas formulate important properties of smooth cycles
that hold for any injective function $\f$.

\begin{lemma}\label{lem:inbound}
  Let $c\in\cs$ and let $u$ be a vertex of $G$ that lies in $\inte(c)$. Then there is no edge
  $(u,v)$ in $K$ for any $v\in\V{c}$.
\end{lemma}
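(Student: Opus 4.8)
\textbf{Proof plan for Lemma~\ref{lem:inbound}.}
The plan is to argue by contradiction: suppose $c\in\cs$, $u\in\inte(c)$ is a vertex of $G$, and $(u,v)\in K$ for some $v\in\V{c}$. Every edge of $K$ lies inside some region $\R{X}$ and is directed towards $\f(X)$; hence the edge $(u,v)$ lies in $\R{X_v}$, where $X_v$ is the (unique, by injectivity of $\f$) $3$-crossing with $\f(X_v)=v$. In particular $u\in\V{X_v}$, so $u$ is one of the six vertices on $\partial\R{X_v}$.

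Now I would use the fact that $c$ is smooth. Since $v\in\V{c}$, the crossing $X_v$ is one of the $3$-crossings associated with $c$ (it is the one whose boundary contributes $v$; more precisely, for consecutive indices $v=v_i=\f(X_i)$, and the crossing associated with $c$ ``at'' $v$ is exactly $X_v=X_i$). By the definition of smoothness, $X_v$ has \emph{no} vertex in $\inte(c)$. But $u\in\V{X_v}$ and $u\in\inte(c)$ — contradiction. The only point that needs a little care is making sure the associated $3$-crossing at $v$ really is $X_v$ and not some other crossing; this is immediate from the construction of $K$, because the incoming edges at $v$ all lie in $\R{X_v}$, and the cycle $c$ enters and leaves $v$ along edges of $K$, so the edge of $c$ directed into $v$ lies in $\R{X_v}$, identifying the associated crossing at $v$ with $X_v$.

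The main (and essentially only) obstacle is a boundary subtlety: a vertex of $X_v$ could lie \emph{on} $c$ rather than strictly inside it, and one must check that $u$ is genuinely in the open interior $\inte(c)$, which is exactly the hypothesis. Since $u\in\inte(c)$ by assumption and $u\in\V{X_v}$, smoothness of $c$ is directly violated. I expect the whole argument to be three or four lines once the identification of the associated crossing is spelled out; no case analysis on whether $c$ is a $2$-cycle or a $3$-cycle is needed, since smoothness is defined uniformly for all cycles in $\mathcal C$.
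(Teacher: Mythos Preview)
Your proposal is correct and follows essentially the same argument as the paper's proof: assume an edge $(u,v)$ exists, observe that it must lie in $\R{X}$ for the unique $X$ with $\f(X)=v$, so $u\in\V{X}$, and since $X$ is associated with $c$ this contradicts smoothness. Your extra care in justifying that $X_v$ is the associated $3$-crossing at $v$ is fine but not strictly necessary, since by definition the associated $3$-crossings of $c=(v_1,\ldots,v_m)$ are precisely the $X_i$ with $\f(X_i)=v_i$.
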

\begin{proof}
  Suppose for the sake of a contradiction that $(u,v)$ is an edge of $K$ with
  $v\in\V{c}$. Let $X$ be the $3$-crossing with $\f(X)=v$. Every edge directed into $v$
  is directed out of some other vertex in $\V{X}$, in particular, $u\in\V{X}$.
  As $X$ is associated with $c$, this contradicts the assumption that $c$ is smooth.
\end{proof}

\begin{lemma}\label{lem:containment}
  Let $c_1,c_2\in\cs$ so that $c_1\ne c_2$ and $c_2$ contains $c_1$. Then
  $\V{c_1}\cap\V{c_2}=\emptyset$.
\end{lemma}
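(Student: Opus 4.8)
The plan is to argue by contradiction: suppose $c_1,c_2\in\cs$ are distinct with $c_2$ containing $c_1$, but there is a shared vertex $v\in\V{c_1}\cap\V{c_2}$. Since both cycles are short (length two or three), $v$ has an incoming edge and an outgoing edge on each of $c_1$ and $c_2$, and all these edges are drawn inside the region $\R{X}$ of the unique $3$-crossing $X$ with $\f(X)=v$ (uniqueness by injectivity of $\f$, exactly as in the proof of Lemma~\ref{lem:cycdet}). In particular the other endpoints of these edges all lie in $\V{X}$, i.e.\ on $\partial\R{X}$.

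The first step is to combine this with smoothness. Because $c_1$ is smooth and $v=\f(X)\in\V{c_1}$, the $3$-crossing $X$ is associated with $c_1$ and hence has no vertex in $\inte(c_1)$; likewise $X$ has no vertex in $\inte(c_2)$. So the neighbors of $v$ along $c_1$ (which lie in $\V{X}\subseteq\partial\R{X}$) lie in $\exte(c_1)\cup\V{c_1}$ and also in $\exte(c_2)\cup\V{c_2}$. The second step is to exploit planarity of $K$ together with Property~\ref{pr:cycle}(ii): the incoming edges at $v$ are consecutive around $v$ and the outgoing edges are consecutive around $v$. Since $c_2$ contains $c_1$, the region $\R{c_1}$ sits inside $\R{c_2}$ and the boundary cycles touch at $v$; the four (or fewer) edges of $c_1$ and $c_2$ incident to $v$ must then be nested in the rotation at $v$ in the pattern forced by containment, so that $c_1$ and $c_2$ are ``parallel'' at $v$ rather than crossing. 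I would use this to deduce that $c_1$ and $c_2$ have the same orientation at $v$ — but by Lemma~\ref{lem:V0}, if they are interior-disjoint they must have opposite orientations, so here we are in the genuinely ``contains'' (not interior-disjoint) case and I expect to need a separate rotational argument rather than a direct appeal to Lemma~\ref{lem:V0}.

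The heart of the argument, and the step I expect to be the main obstacle, is to derive a contradiction from the fact that all neighbors of $v$ on $c_1\cup c_2$ lie on $\partial\R{X}$ while $\R{X}$ itself cannot poke into either $\inte(c_1)$ or $\inte(c_2)$. Concretely: let $v'$ be the neighbor of $v$ along $c_1$ on the "interior side" — the edge $vv'$ of $c_1$ bounds $\R{c_1}$, so a neighborhood of this edge on one side is in $\inte(c_1)\subseteq\inte(c_2)$. Since $v'\in\V{X}$ and the edge $vv'$ of $K$ lies in $\R{X}$, a portion of $\R{X}$ meets $\inte(c_1)$ unless $v'\in\V{c_1}\cap\V{c_2}$ as well, in which case (using that $c_1,c_2$ are short and $\V$-determined, Lemma~\ref{lem:cycdet}) one pushes the same argument to the remaining vertex and eventually forces $\V{c_1}=\V{c_2}$, hence $c_1=c_2$ by Lemma~\ref{lem:cycdet} — contradicting $c_1\ne c_2$. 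So the real work is a careful case analysis on how many vertices $c_1$ and $c_2$ share (one, two, or all three) and, in each case, locating a point of $\R{X}$ that is forced into $\inte(c_1)$ or $\inte(c_2)$, contradicting smoothness. I would set this up by first reducing to the case that $c_1$ and $c_2$ share exactly one vertex (the shared-edge and shared-two-vertices cases collapsing via Lemma~\ref{lem:cycdet} and the ghost-exclusion $c_1,c_2\in\mathcal C$), and then finishing with the local picture at that single shared vertex $v$: the two outgoing edges of $c_1,c_2$ at $v$ go to points of $\V{X}$ on opposite sides of $c_1$ relative to $\R{X}$, which is impossible once $X$ avoids $\inte(c_1)$.
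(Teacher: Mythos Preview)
Your proposal has a recurring factual error that undermines the argument. You assert that at the shared vertex $v$ ``all these edges'' (both incoming and outgoing on $c_1$ and $c_2$) are drawn inside $\R{X}$ for the unique $X$ with $\f(X)=v$, and hence that all neighbors of $v$ along the cycles lie in $\V{X}$. This is false: by construction of $K$, only the \emph{incoming} edges at $v$ lie in $\R{X}$; an outgoing edge $(v,w)$ lies in $\R{X'}$ where $\f(X')=w$, and so the successor of $v$ along $c_1$ need not belong to $\V{X}$ at all. You repeat this error in the final step (``the two outgoing edges of $c_1,c_2$ at $v$ go to points of $\V{X}$''), so the closing local picture does not stand.

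Once you restrict attention to the predecessor of $v$ along $c_1$ (which genuinely lies in $\V{X}$), the argument becomes the paper's --- and it is far shorter than your case analysis. The paper first isolates the one-line Lemma~\ref{lem:inbound}: for smooth $c$, there is no $K$-edge from a vertex in $\inte(c)$ to a vertex of $\V{c}$ (because such an edge would witness a vertex of an associated $3$-crossing in $\inte(c)$). Now suppose $u\in\V{c_1}\cap\V{c_2}$. If some vertex of $c_1$ lay in $\inte(c_2)$, then walking along $c_1$ from that vertex toward $u$ produces a $K$-edge from $\inte(c_2)$ into $\V{c_2}$, contradicting Lemma~\ref{lem:inbound}. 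Hence $\V{c_1}\subseteq\V{c_2}$. If $c_1$ is a $3$-cycle then $\V{c_1}=\V{c_2}$ and Lemma~\ref{lem:cycdet} forces $c_1=c_2$; otherwise $c_1$ is a $2$-cycle inside a $3$-cycle $c_2$, making $c_2$ a ghost, contrary to $c_2\in\cs$. No rotational analysis, no reduction to ``exactly one shared vertex,'' and no appeal to Lemma~\ref{lem:V0} is needed.
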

\begin{proof}
  Suppose to the contrary that there exists a vertex $u\in\V{c_1}\cap\V{c_2}$.
  We claim that there is no vertex of $c_1$ that lies in $\inte(c_2)$.
  To see this, consider some $v\in\V{c_1}\cap\inte(c_2)$. Then following $c_1$ from $v$ to $u$ we
  find an edge $(x,y)$ of $K$ so that $x$ lies in $\inte(c_2)$ and
  $y\in\V{c_2}$. However, such an edge does not exist by
  Lemma~\ref{lem:inbound}. Hence there is no such $v\in\V{c_1}\cap\inte(c_2)$.
  Given that $c_2$ contains $c_1$, it follows  that $\V{c_1}\subseteq\V{c_2}$.

  If $c_1$ is a $3$-cycle, then the claim above implies that so is $c_2$,
  and Lemma~\ref{lem:cycdet} contradicts our assumption $c_1\ne c_2$.
  Hence $c_1$ is a $2$-cycle and $c_2$ is a $3$-cycle.
  But then $c_2$ is a ghost, in contradiction to $c_2\in\cs$.
\end{proof}

\begin{lemma}\label{lem:C0}
  Any two cycles in $\cs$ are interior-disjoint or vertex disjoint.
\end{lemma}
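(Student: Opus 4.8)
The goal is to show that any two cycles $c_1, c_2 \in \cs$ are either interior-disjoint or vertex-disjoint. Suppose for contradiction that neither holds: there is a vertex $u \in \V{c_1}\cap\V{c_2}$ and a point common to $\inte(c_1)$ and $\inte(c_2)$. The plan is to use the structural lemmas already established — principally Lemma~\ref{lem:max} (a vertex of one short nonghost cycle in the interior of another forces containment), Lemma~\ref{lem:containment} (containment of smooth cycles forces vertex-disjointness), Lemma~\ref{lem:inbound} (a vertex interior to a smooth cycle has no $K$-edge to the cycle), and Lemma~\ref{lem:V0} (a vertex on two interior-disjoint cycles sees them with opposite orientations) — to derive a contradiction.

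First I would dispose of the case where one cycle contains the other: if $\R{c_1}\subseteq\R{c_2}$ or vice versa, then Lemma~\ref{lem:containment} gives $\V{c_1}\cap\V{c_2}=\emptyset$, contradicting the existence of $u$. So from now on neither $c_1$ contains $c_2$ nor $c_2$ contains $c_1$. By Lemma~\ref{lem:max}, since $c_2$ does not contain $c_1$, no vertex of $\V{c_1}$ lies in $\inte(c_2)$; symmetrically no vertex of $\V{c_2}$ lies in $\inte(c_1)$. Now I would argue that the two cycles are in fact interior-disjoint: if not, then since $K$ is a plane graph and neither cycle contains the other, each cycle must have an edge strictly in the exterior of the other, and the two closed curves cross, so their intersection (in the plane graph $K$) consists of shared vertices and edges, at least two of which are vertices — meaning $c_1$ and $c_2$ share at least two vertices. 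As in the proof of Lemma~\ref{lem:max}, an edge $(x,y)$ of $c_1$ lying in $\exte(c_2)$ between two shared vertices forces the reverse edge into $c_2$, so $\{x,y\}$ induces a $2$-cycle in $K$; then whichever of $c_1,c_2$ is a $3$-cycle is a ghost, contradicting membership in $\mathcal{C}$, and if both are $2$-cycles then $c_1=c_2$ by Lemma~\ref{lem:cycdet}. Hence $c_1$ and $c_2$ are interior-disjoint, and we still have the shared vertex $u$.

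At this point $c_1$ and $c_2$ are interior-disjoint cycles both incident to $u$, so by Lemma~\ref{lem:V0} they have opposite orientations — say $c_1$ is counterclockwise and $c_2$ is clockwise — and the proof of that lemma tells us the counterclockwise cyclic order of the four relevant edges around $u$ is $(e_1^{\mathrm{out}}, e_1^{\mathrm{in}}, e_2^{\mathrm{in}}, e_2^{\mathrm{out}})$, where $e_i^{\mathrm{in}}$ enters $u$ and $e_i^{\mathrm{out}}$ leaves $u$ along $c_i$. The incoming edges $e_1^{\mathrm{in}}, e_2^{\mathrm{in}}$ at $u$ both lie in the same region $\R{X}$, where $X$ is the unique $3$-crossing with $\f(X)=u$; so both edges emanate (reversed) from vertices of $\V{X}$, i.e.\ the other endpoints of $e_1^{\mathrm{in}}$ and $e_2^{\mathrm{in}}$ are in $\V{X}$ and lie on $\partial\R{X}$. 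The main obstacle — the part I expect to require the most care — is to extract a contradiction from this local picture around $u$ combined with the global interior-disjointness and smoothness: I would show that traversing $c_1$ from $u$ into $\R{X}$ and traversing $c_2$ from $u$ into $\R{X}$, together with the boundary $\partial\R{X}$, confines one cycle's interior inside the other's region or inside $\R{X}$, forcing a containment relation we have already excluded, or forcing a vertex of one cycle into $\inte$ of the other (contradicting Lemma~\ref{lem:max} or Lemma~\ref{lem:inbound}), or forcing three $3$-crossings to share three vertices (contradicting Property~\ref{pr:kgon}\eqref{obs:kgon:2}). Carefully bookkeeping which of the (at most three) vertices of each short cycle lie on $\partial\R{X}$, and using that a short cycle is determined by its vertex set (Lemma~\ref{lem:cycdet}), should close this last case; the upshot is that $u$ cannot exist, so $c_1$ and $c_2$ are vertex-disjoint whenever their interiors meet.
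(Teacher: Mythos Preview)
Your argument is essentially complete, and essentially the paper's argument, by the end of your second paragraph --- but you do not seem to realize it. You set up the contradiction by assuming \emph{both} that the interiors meet and that there is a shared vertex $u$. In the no-containment case you then argue (correctly, modulo one small point below) that the cycles must in fact be interior-disjoint. That already contradicts your standing assumption, and the proof ends there. Everything from ``and we still have the shared vertex $u$'' onward is superfluous: interior-disjoint with a shared vertex is perfectly consistent with the lemma's disjunction. The ``main obstacle'' you flag in the last paragraph does not exist.

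One small gap to close in the second paragraph: you assert that there is an edge $(x,y)$ of $c_1$ lying in $\exte(c_2)$ \emph{between two shared vertices}, but the edge of $c_1$ that lies in $\exte(c_2)$ might be incident to the non-shared vertex. What you actually need is an edge of one short cycle between two shared vertices that is \emph{not an edge of the other cycle}; then injectivity of $\f$ forces the reverse edge into the other cycle and you get the $2$-cycle/ghost contradiction. The paper's proof gets this edge directly: from $\inte(c_1)\cap\inte(c_2)\neq\emptyset$ and planarity of $K$ it picks an edge $(u_1,u_2)$ of $c_2$ drawn in $\inte(c_1)$ (hence not an edge of $c_1$), argues via Lemmas~\ref{lem:max} and~\ref{lem:containment} that $u_1,u_2\in\V{c_1}\cap\V{c_2}$, rules out one cycle's vertex set containing the other's (ghost via Lemma~\ref{lem:cycdet}), so both are $3$-cycles, and then the reverse edge $(u_2,u_1)$ must be the $c_1$-edge between $u_1$ and $u_2$ --- giving the $2$-cycle and the ghost contradiction. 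Your route through ``edge in the exterior'' works too once you handle the shared-edge subcase, but the interior framing is more direct.
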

\begin{proof}
  Let $c_1,c_2\in \cs$ so that $c_1\neq c_2$. Suppose, to the contrary,
  that $\inte(c_1)\cap\inte(c_2)\ne\emptyset$ and $\V{c_1}\cap\V{c_2}\neq\emptyset$.
  Without loss of generality, an edge $(u_1,u_2)$ of $c_2$ lies in $\inte(c_1)$.

  We may assume that $u_1$ and $u_2$ are common vertices of $c_1$ and $c_2$.
  Indeed, if $u_1$ or $u_2$ were not common vertices of the cycles, then a
  vertex of $c_2$ would lie in the interior of $c_1$. Then $c_1$ contains $c_2$
  by Lemma~\ref{lem:max}, and $\V{c_1}\cap\V{c_2}=\emptyset$ by
  Lemma~\ref{lem:containment}.

  We may further assume that both $c_1$ and $c_2$ are 3-cycles. Indeed, if the
  vertex set of one of them contains that of the other, then one of them is a
  3-cycle and the other is a 2-cycle by Lemma~\ref{lem:cycdet}.
  Hence the $3$-cycle is ghost, contradicting the assumption that
  both $c_1$ and $c_2$ are present in $\mathcal{C}$, $\cs\subseteq \mathcal{C}$.

  Since $(u_1,u_2)$ is a directed edge of $c_2$ that lies in the interior of
  $c_1$, since $c_1$ is a 3-cycle that has an edge between any two of its vertices,
  and since $u_2$ can have incoming edges from at most one $3$-crossing (because $\f$ is injective), it follows that
  the edge $(u_2,u_1)$ is present in $c_1$. This implies that $c_3=(u_1,u_2)$
  is a 2-cycle in $K$. Therefore $c_3 \in \mathcal{C}$, and both
  $c_1$ and $c_2$ are ghost cycles in~$\cs\subseteq \mathcal{C}$,
  contradicting the definition of $\mathcal{C}$. Thus $c_1$ and $c_2$ are
  interior-disjoint or vertex disjoint,
  as claimed.
\end{proof}

\subsection{How to choose the function $\f$}\label{ssec:f}

As a next step, we show how to define the function $f$ so that in the resulting
conflict graph all nonghost cycles are smooth. The idea is to incrementally
modify the function $\f$ so that the number of vertices that are contained in
nonsmooth cycles decreases.

\begin{lemma}\label{lem:nested}
  Let $G=(V,E)$ be a \twoplanarstg, and let $\mathcal{X}$ be the set of $3$-crossings of $G$.
  There exists an injective function $\f: \mathcal X \rightarrow V$ such that
  every short cycle that is not a ghost in the conflict digraph $K$ of $G$ is smooth
  (that is, $\mathcal{C}=\cs$).
\end{lemma}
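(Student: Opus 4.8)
The plan is to prove Lemma~\ref{lem:nested} by a potential-reduction argument: I would start with an arbitrary injective $\f$ (one exists by Lemma~\ref{lem:global}), and then repeatedly modify it so that the number of vertices lying in the interior of some nonsmooth short nonghost cycle strictly decreases. Concretely, I would define a potential $\Phi(\f) = \sum_{c} |\inte(c) \cap V|$ where the sum ranges over all nonsmooth cycles in $\mathcal{C}$ (or some cleaner variant that counts, say, vertices enclosed by nonsmooth cycles). Since $\Phi$ is a nonnegative integer, once I exhibit a local modification that always decreases it whenever a nonsmooth cycle exists, the lemma follows: after finitely many steps $\mathcal{C}=\cs$.

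The heart of the argument is the local move. Suppose $c\in\mathcal{C}$ is nonsmooth, so one of its associated $3$-crossings $X_i$ has a vertex $p$ in $\inte(c)$; since $v_i=\f(X_i)$ lies on $\partial\R{X_i}$ and $p\in\V{X_i}$, we could instead set $\f(X_i):=p$ (or some other vertex of $\V{X_i}$ inside $c$). The key observations I would need: (a) reassigning $\f(X_i)$ only changes the five edges of $K$ drawn inside $\R{X_i}$, so it is a very local surgery on the conflict digraph; (b) $\f$ remains injective after the change, provided I choose the new target to be a vertex not already of the form $\f(X')$ — here I must be careful, but since $\R{X_i}$ and $\R{X'}$ are interior-disjoint (Property~\ref{pr:k-crossings-disjoint}) and at most two $3$-crossings share more than two vertices (Property~\ref{pr:kgon}), there is room to pick such a vertex, or one argues that a conflicting assignment would itself force a short-cycle contradiction; (c) the move ``shrinks'' the relevant region: the old cycle $c$ is destroyed or replaced by cycles properly inside $\R{c}$, and any newly created short cycles through the new vertex $\f(X_i)=p$ lie strictly inside $\R{X_i}\subseteq \R{c}$, hence enclose strictly fewer vertices than $c$ did. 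That monotonicity — new nonsmooth cycles are strictly nested inside old ones — is what makes $\Phi$ drop.

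For the bookkeeping I would lean on the structural lemmas already proved: Lemma~\ref{lem:cycdet} (a short cycle is determined by its vertex set) to rule out duplicate cycles appearing, Lemma~\ref{lem:max} and Lemma~\ref{lem:containment} to control nesting of nonghost cycles, and the ghost definition to discard the $3$-cycles that carry a $2$-cycle inside (these never obstruct the process). I would also use Property~\ref{pr:cycle}(ii) (incoming edges at a vertex are cyclically consecutive) and Lemma~\ref{lem:V0} to ensure that after reassignment the local rotation structure at $p$ is still consistent, so that $K$ stays a plane digraph and the notion of ``cycle interior'' is well defined throughout.

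The main obstacle I anticipate is (b) combined with (c): guaranteeing that the reassignment neither breaks injectivity nor creates a new nonsmooth cycle that is \emph{not} strictly nested inside the one we destroyed — i.e., that the chosen potential really is monotone. A naive choice of the new $\f(X_i)$ could conflict with an existing assignment, and handling that may require choosing $p$ more cleverly (e.g., the vertex of $\V{X_i}\cap\inte(c)$ that is innermost, or using a minimal-counterexample framing where we pick the \emph{innermost} nonsmooth cycle $c$ first). I expect the clean route is: among all nonsmooth cycles pick one minimal under containment, argue its associated offending $3$-crossing $X_i$ has all its relevant interior vertices ``free'' for reassignment because any obstruction would produce an even smaller nonsmooth or ghost configuration contradicting minimality or Property~\ref{pr:kgon}\eqref{obs:kgon:2}, and then show the reassignment strictly reduces $\Phi$. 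Making that minimality argument airtight — precisely which of the earlier lemmas kills each bad case — is where the real work lies.
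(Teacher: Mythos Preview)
Your high-level strategy---start from an injective $\f$ via Lemma~\ref{lem:global} and iteratively reassign so that a nonnegative integer potential drops---is exactly what the paper does (its potential is $|\vns|$, the number of vertices lying in $\R{c}$ for some nonsmooth $c\in\mathcal{C}$). But two of your local claims are where the argument actually breaks, and the fixes are the real content of the proof.

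First, claim~(b) cannot be handled by changing only $\f(X_i)$. The vertex $p\in\V{X_i}\cap\inte(c)$ you pick may already equal $\f(X')$ for some $3$-crossing $X'$ with $\R{X'}\subseteq\R{c}$; nothing in Property~\ref{pr:kgon} rules this out. The paper resolves injectivity by redefining $\f$ for \emph{all} $3$-crossings $X$ with $\R{X}\subseteq\R{c}$ simultaneously, re-invoking Hall's condition (Lemma~\ref{lem:matching2}) inside $c$ with the at most three vertices of $\V{c}$ excluded. So the surgery is not local to $\R{X_i}$.

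Second, and more seriously, claim~(c) that new short cycles through $p$ lie inside $\R{X_i}\subseteq\R{c}$ is false. Incoming edges at $p$ do lie in $\R{X_1}$, but an edge \emph{out of} $p$ goes to $\f(X')$ for any $X'$ with $p\in\V{X'}$, and $\R{X'}$ may sit in $\exte(c)$. The paper shows that such a ``bad'' cycle necessarily has the form $b=(p,v_j,\beta_2)$ with $v_j\in\V{c}$ and $\beta_2\in\exte(c)\cap\V{X_1}$, and then eliminates it via a three-case analysis (depending on how many of the $X_i$ the vertex $p$ is incident to, and on whether two interior-disjoint maximal nonsmooth $3$-cycles share an edge). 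Crucially, the paper works with a \emph{maximal} $c\in\cns$, not a minimal one: maximality is what guarantees that any cycle in $\cns'$ containing $v_1$ must be new, which is how ``new nonsmooth cycles are contained in $c$'' yields ``$v_1$ leaves $\vns$'' and the potential strictly drops. Your proposed innermost/minimality framing does not supply this; the obstructing bad cycles are \emph{larger} than $c$, not smaller, so a minimal-counterexample argument of the kind you sketch does not close the gap.
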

\begin{proof}
  Let $\f:\mathcal X \rightarrow V$ be an arbitrary injective function that maps
  every $3$-crossing $X\in\mathcal X$ to a vertex $v\in\V{X}$. Such a function
  exists by Lemma~\ref{lem:global}. We repeatedly modify the function $\f$ to
  achieve the desired property.

  Following the notation defined above, let $K$ be the conflict digraph of $G$
  determined by $\f$, and let $\mathcal C$ be the set of short cycles in $K$
  that are not ghosts.  Also, let $\cns\subseteq\mathcal{C}$ denote the subset
  of cycles in $\mathcal{C}$ that are not smooth. If $\cns=\emptyset$, then the
  proof is complete. As long as $\cns\ne\emptyset$, we repeatedly modify $\f$
  for some vertices in the region $\R{c}$ of a cycle $c\in \cns$. This
  modification of $\f$ correspondingly changes the conflict digraph $K$ (and hence
  the set $\cns$). As a measure of progress we maintain that the cardinality of
  the set $\vns$ decreases, where $\vns$ is the set of vertices that lie in the
  regions bounded by the cycles in $\cns$, that is,
  $\vns=V\cap(\bigcup_{c\in\cns}\R{c})$.

  A cycle $c\in\cns$ is \emph{maximal} if there exists no cycle $c'\in\cns\setminus\{c\}$ such that $c'$ contains $c$.  
  Recall that if a cycle $c\in\cns$ is not smooth, then there exists a vertex of some associated $3$-crossing that 
  lies in the interior of $c$. 
  
  \paragraph{One incremental modification of $\f$}
  Given an injective function $\f:\mathcal X \rightarrow V$ such that $\f(X)\in V(X)$ for every 
  $X\in\mathcal X$, a maximal cycle $c\in \cns$, an associated 3-crossing $X_1\in X$, 
  and a vertex $v\in\V{X_1}\cap\inte(c)$, we define a new function $\f'=F(f,c,X_1,v)$, 
  which is an injective function $\f':\mathcal X \rightarrow V$ such that $\f(X)\in V(X)$ for every
  $X\in\mathcal X$; and in particular $\f'(X_1)=v$. Later we will argue how to select 
  $c$, $X_1$, and $v$ more carefully so as to guarantee certain properties for $\f'$.
 
  Let $c=(v_1,\ldots,v_m)$ for $m\in\{2,3\}$ and let $X_1,\ldots,X_m$ denote the 
  associated $3$-crossings such that $v\in\V{X_1}\cap\inte(c)$.

  We define a new injective function $\f':\mathcal{X}\rightarrow V$ as follows.
  We set $\f'(X_1)=v$. For all 3-crossings $X\in \mathcal{X}$, $X\neq X_1$,
  for which $\R{X}\not\subseteq\R{c}$, we set $\f'(X)=\f(X)$. In particular,
  $\f(X_j)=\f(X_j)$ for $j=2,\ldots, m$ along the cycle $c$.
  For all remaining $X\in \mathcal{X}$, where $\R{X}\subseteq \R{c}$, we define
  $\f'(X)\in V\in \inte{c}$ using Hall's theorem. For these 3-crossings,
  Hall's condition is still satisfied by Lemma~\ref{lem:matching2} even if we 
  exclude up to 3 vertices along the cycle $c$, and we find an injective function
  $\f'$ as in Lemma~\ref{lem:global}. This completes the definition of $\f'=F(f,c,X_1,v)$.

  The modified function $\f'$ defines a new conflict digraph that we denote by $K'$.
  Note that both $K$ and $K'$ have the same vertex set, namely $V$.
  Since $\f(X_1)\neq \f'(X_1)=v$, the cycle $c$ of $K$ is not present in $K'$.
  Let $\mathcal{C}'$, $\cns'$ and $\vns'$ be defined analogously to
  $\mathcal{C}$, $\cns$ and $\vns$ in~$K'$. We claim that:

  \begin{enumerate}[(A)]
  \item\label{cl:A} every cycle in $\cns'\setminus\cns$ is contained in $c$, and
  \item\label{cl:B} $v_1\notin\R{c'}$ for any cycle $c'\in \cns'$.
  \end{enumerate}

  The combination of \eqref{cl:A} and \eqref{cl:B} immediately establishes
  $\vns'\subsetneq\vns$, our measure of progress. We call a cycle $b$ \emph{bad}
  if it violates \eqref{cl:A}, that is, $b\in\cns'\setminus\cns$ and $c$ does
  not contain $b$.

  We first show that \eqref{cl:A} implies \eqref{cl:B}. Note that $v_1$ is a
  vertex of every cycle $d\in\cns$ for which $v_1\in\R{d}$. To see this, let
  $d\in\cns$ with $v_1\in\R{d}$. If $v_1\in\inte(d)$, then $d$ contains $c$ by
  Lemma~\ref{lem:max}, and so $d=c$ by the maximality of $c$. Hence
  $v_1\in \V{d}$, as claimed. As $v_1$ has no incoming edge in $K'$, it follows
  that setting $\f(X_1)=v$ destroys \emph{all} cycles in $\cns$ that contain
  $v_1$. Therefore, if there is a cycle $b\in\cns'$ for which $v_1\in\R{b}$,
  then $b$ is a new cycle, that is, $b\in \cns'\setminus\cns$.  In fact, in order to
  contain $v_1$, the cycle $b$ must be bad: If $v_1\in\R{b}\subseteq\R{c}$, then
  $v_1$ is a vertex of $b$, which is impossible because $v_1$ has no incoming
  edge in $K'$.  By \eqref{cl:A}, there is no bad cycle, and so no cycle in
  $\cns'$ contains $v_1$ and \eqref{cl:B} holds, as claimed.

  To prove \eqref{cl:A}, we distinguish some cases and argue separately in each case.
  Before the case distinction, we give a common characterization of bad cycles.

   Recall that we do not change $\f(X_i)$, for $i\in\{2,\ldots,m\}$, i.e., $\f(X_i)=\f'(X_i)$, for $i\in\{2,\ldots,m\}$.
   Therefore, the conflict digraph has the same edges
  inside $\R{X_i}$, for $i\in\{2,\ldots,m\}$, before and after the modification of $\f$. In particular,
  as $K'$ is plane, its edges can only cross the edge $(v_m,v_1)$ of $c$
  (because it is in $K$ but not in $K'$),
  which lies in $\R{X_1}$. All edges of $K'$ inside~$\R{X_1}$ are
  directed to vertex $v\in \inte(c)$. Therefore, every edge in $K'$ that crosses
  $(v_m,v_1)$ has one endpoint in $\inte(c)$ and one endpoint in $\exte(c)$.

\begin{figure}[hbtp]
  \centering%
  \begin{minipage}[b]{.39\textwidth}
    \includegraphics{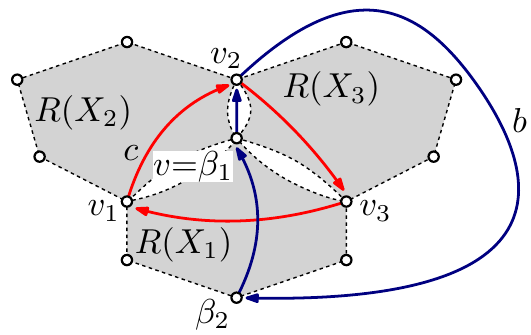}\subcaption{A bad cycle.}\label{fig:bad-cycle}
  \end{minipage}
  \hfil
  \begin{minipage}[b]{.39\textwidth}
    \includegraphics[page=2]{badcycle-new}\subcaption{Case 1.}\label{fig:case1}
  \end{minipage}
  \\
  \begin{minipage}[b]{.39\textwidth}
    \includegraphics[page=3]{badcycle-new}\subcaption{Case 2.1.}\label{fig:case2.1}
  \end{minipage}
    \hfil
    \begin{minipage}[b]{.39\textwidth}
    \includegraphics[page=4]{badcycle-new}\subcaption{Case 2.2.}\label{fig:case2.2}
  \end{minipage}
  \caption{Illustrations for the proof of Lemma~\ref{lem:nested}.}
\end{figure}

  \paragraph{Characterization of bad cycles}
  Let $b$ denote a bad cycle in $K'$; refer to Fig.~\ref{fig:bad-cycle}. Note that $\f'$ is unchanged with respect to $\f$ for $3$-crossings in the exterior of $c$.
  Therefore, every cycle in $K'$ that involves only vertices in $c\cup\exte(c)$ is also a cycle in $K$. This implies that every new cycle in
  $\cns'\setminus\cns$ must have a vertex in $\inte(c)$,
  and so  $b$ has a vertex $\beta_1\in\inte(c)$.

  We claim that $b$ also has a vertex $\beta_2\in\exte(c)$.  To see this,
  suppose to the contrary that $\V{b}\subset\R{c}$.  Since $b$ is bad, $c$ does
  not contain $b$, that is, $\R{b}\not\subset\R{c}$. Hence $b$ has an edge
  $(u_1,u_2)$ that passes through $\exte(c)$. As noted above, every edge of $K'$
  that crosses $(v_m,v_1)$ has a vertex in $\exte(c)$. Therefore, $b$ does not
  cross $(v_m,v_1)$ and so $u_1,u_2\in \V{b}\cap \V{c}$ and
  $b=(u_1,u_2,\beta_1)$. As $v_1$ has no incoming edge in $K'$, we have
  $v_1\notin \V{b}$ and so $m=3$ and $\{u_1,u_2\}=\{v_2,v_3\}$.  Since $c$ is
  short, it contains an edge between $u_1$ and $u_2$, and as the edge
  $(u_1,u_2)$ passes through $\exte(c)$, the reverse edge $(u_2,u_1)$ is an edge
  of $c$. But then $\{u_1,u_2\}=\{v_2,v_3\}$ induce a $2$-cycle in $K'$, and $b$
  is a ghost, contradicting our assumption that $b\in \cns'$.  This proves the
  claim that $b$ has a vertex $\beta_2\in\exte(c)$.

  Given the position of $\beta_1$ and $\beta_2$, it follows that $b$ crosses~$c$. 
  As noted above, $(v_m,v_1)$ is the only edge of $c$ that can be crossed
  by an edge of $K'$. This leaves only four options for $b$ to cross~$c$:
  the at most three vertices of $c$ and the edge $(v_m,v_1)$. As all edges of $K'$
  that cross $(v_m,v_1)$ are directed towards $v$, the cycle $b$ crosses the edge
  $(v_m,v_1)$ of $c$ at most once. Moreover, if $b$ crosses $(v_m,v_1)$, then
  the crossing edge starts from a vertex of $X_1$ and goes to the vertex $v$.
  As~$b$ has at most three vertices and due to the position of $\beta_1$ and $\beta_2$,
  the cycles $b$ and $c$ can share at most one vertex.

  Altogether it follows that a bad cycle $b$ has exactly three vertices: the
  vertex $\beta_1=v\in\inte(c)$, a vertex $\beta_2\in\exte(c)\cap\V{X_1}$ that
  precedes $v$ in $b$, and a third vertex $\beta_3\in\{v_2,v_m\}$. Hence
  $b=(v,v_j,\beta_2)$, for some $j\in\{2,m\}$. (We cannot have $\beta_3=v_1$
  because $v_1\notin \f'(\mathcal X)$ after the reassignment.)  Note that
  $\beta_3=v_j$, for $j\in\{2,m\}$, implies that $v$ is a vertex of $X_j$ because $(v,v_j)$ is an edge of $b$.

  \paragraph{Case analysis}
  In order to prove \eqref{cl:A}, we distinguish three cases: Case 1, Case 2.1,
  and Case 2.2 below.

  \paragraph{Case~1: There is a maximal cycle $c\in\cns$ and a vertex
    $v\in \inte(c)$ such that $v$ is incident to exactly one of
    $X_1,\ldots , X_m$; refer to Fig.~\ref{fig:case1}} We may assume that $v$ is incident to $X_1$ (by
  cyclically relabeling $X_1,\ldots, X_m$ if necessary). We set $\f'=F(f,c,X_1,v)$. 
  By the discussion above, the cycle $c$ is destroyed and no bad cycle is created
  (because the existence of a bad cycle implies that $v$ is also a vertex of at
  least one of the other $3$-crossing(s) $X_j$, for $j\in\{2,m\}$).

  \paragraph{Case~2: For every maximal cycle $c\in \cns$, every vertex of
    $X_1,\ldots , X_m$ in $\inte(c)$ is incident to at least two $3$-crossings
    in $\{X_1,\ldots , X_m\}$} We consider two subcases.

  \paragraph{Case~2.1: There are two interior-disjoint maximal $3$-cycles in
    $\cns$ that share an edge; refer to Fig.~\ref{fig:case2.1}} Denote these two cycles by $c_1=(v_1,v_2,v_3)$
  and $c_2=(v_1,v_4,v_3)$, and let $X_1,\ldots,X_4$ denote the associated
  $3$-crossings so that $v_i=\f(X_i)$, for $i\in \{1,2,3,4\}$. Note that
  $v_2\notin \V{X_1}$ because then $v_1$ and $v_2$ would induce a $2$-cycle in
  $K$, which contradicts the fact that $c_1$ is not a ghost. Analogously, it
  follows that $v_4\notin \V{X_1}$. The union of the edges $(v_1,v_2)$,
  $(v_2,v_3)$, $(v_1,v_4)$, and $(v_4,v_3)$ forms an (undirected) closed Jordan
  curve $\hat{c}$. On one hand, none of the four edges that form $\hat{c}$ is
  oriented towards $v_1=\f'(X_1)$, and so the curve $\hat{c}$ lies in the
  exterior of $\R{X_1}$. On the other hand, the (closed) region $\R{\hat{c}}$
  bounded by $\hat{c}$ contains the edge $(v_3,v_1)$ in $\R{X_1}$. It follows that
  $\R{\hat{c}}\supset \R{X_1}$. Consequently, all four vertices in
  $\V{X_1}\setminus\{v_1,v_3\}$ lie in $\inte(c_1)\cup\inte(c_2)$. Without loss
  of generality, we may assume that at least two vertices of $\V{X_1}$ lie in
  $\inte(c_1)$. By Property~\ref{pr:kgon}\eqref{obs:kgon:2}, at most one vertex 
  in $\inte(c_1)$ is incident to all of $X_1,X_2,X_3$, there exists a vertex 
  $v\in \V{X_1}\cap \inte(c_1)$ incident to either~$X_2$ or $X_3$ (but not both).

  We select $c=c_1$ and set $\f'=F(f,c,X_1,v)$. 
  As noted above, any bad cycle is of the form $b=(v,v_j,\beta_2)$, 
  where $j\in \{2,3\}$ and $\beta_2\in\exte(c_1)\cap\V{X_1}$.

  If $j=2$, we have $v_2\in\exte(c_2)$ and
  $\beta_2\in\exte(c_1)\cap\V{X_1}\subset \inte(c_2)\subset\inte(\hat{c})$. In
  particular, the edge $(v_2,\beta_2)$ of $b$ crosses~$c_2$, in contradiction to
  the fact that every edge in $K'$ that crosses~$c_2$ crosses the edge
  $(v_3,v_1)$ of $c_2$ and therefore goes to $v$.

  Otherwise, $j=3$ and the edge $(v,v_3)$ of $b$ together with $v_3\in\V{X_1}$
  and therefore an edge $(v_3,v)$ in $K'$ makes $b$ a ghost, in contradiction to
  $b\in\cns'$.

  Therefore, in either case both maximal cycles $c_1$ and $c_2$ are destroyed,
  and no bad cycle is created.

  \paragraph{Case~2.2: There are no two interior-disjoint maximal $3$-cycles in
  $\cns$ that share an edge; refer to Fig.~\ref{fig:case2.2}} Let $c\in\cns$ be an arbitrary maximal cycle, where 
  $c=(v_1,\ldots,v_m)$ for $m\in\{2,3\}$, and let $X_1$ be an arbitrary associated 
  3-crossing for which there exists a vertex $v\in\V{X_1}\cap\inte(c)$.
  We set $\f'=F(f,c,X_1,v)$. 
  
  Assume first that $m=2$.
  As noted above, any bad cycle $b$ has exactly
  three vertices: $b=(v,v_2,\beta_2)$, where $v=\f'(X_1)$, $v_2=\f'(X_2)$, and
  $\beta_2=\f'(X_3)$ for some $3$-crossing $X_3$ in the exterior of $c$. Note that, by the condition of Case~2 we know that
  $v\in\V{X_1}\cap\V{X_2}$.
  Further, $v,v_2\in \V{X_1}\cap\V{X_2}$ implies that $(v,v_2)$ is a $2$-cycle in $K'$.
  That is, $b$ is a ghost in $K'$, in contradiction to $b\in\cns'$.

  Assume next that $m=3$ (as depicted in Fig.~\ref{fig:case2.2}).
  By the condition of Case~2 we may assume (by cyclically relabeling $(X_1,X_2,X_3)$ if
  necessary) that $v\in\V{X_1}\cap\V{X_2}$ (and possibly, $v\in\V{X_3}$).
  As noted above, any bad cycle $b$ has exactly three vertices:
  $b=(v,v_j,\beta_2)$, where $v=\f'(X_1)$, $v_j=\f'(X_j)$ for $j\in \{2,3\}$, and
  $\beta_2=\f'(X_4)$ for some $3$-crossing $X_4$ in the exterior of $c$. Assume that $b$
  is maximal with these properties.

  If $j=2$, then $c'=(v_1,v_2,\beta_2)$ is a maximal $3$-cycle in the original
  conflict digraph $K$.  We claim that the cycle $c'$ does not contain
  $c$. Suppose to the contrary that $c'$ contains $c$. Then
  $v,v_3\in\inte(c')$. Hence $c'$ is not smooth, contradicting our assumption
  that $c\in\cns$ is maximal. This proves the claim. It follows that $c$ and
  $c'$ are interior-disjoint maximal cycles in $\cns$ that share the edge
  $(v_1,v_2)$, contradicting our assumption in Case~2.2.

  Otherwise, $j=3$ and then $v$ is incident to $X_3$. In this case,
  $v,v_3\in \V{X_1}\cap \V{X_3}$, and we create a $2$-cycle $(v,v_3)$ in $K'$.
  Hence $b=(v,v_3,\beta_2)$ is a ghost, contradicting our assumption
  $b\in \cns'$.

  Consequently, there are no bad cycles when $m=3$ and $j\in \{2,3\}$.

\medskip
  In all three cases, we have shown that no bad cycle is created,
  which confirms \eqref{cl:A}. By \eqref{cl:A} and \eqref{cl:B}, each
  incremental modification of the initial function $\f$ strictly decreases the set $\vns$.
  After at most $|V|$ repetitions, we obtain an injective function
  for which $\cns=\emptyset$, as required.
\end{proof}

\subsection{How to choose the function $\g$}

Let $\f:\mathcal X \rightarrow V$ be a function such that $\mathcal{C}=\cs$
(that is, all short nonghost cycles in the corresponding conflict digraph $K$
are smooth), which exists by Lemma~\ref{lem:nested}.  As a first step, we will
use Hall's theorem to show that there is a matching of the cycles in
$\mathcal{C}$ to the vertices in $V$ such that every cycle $c\in\mathcal{C}$ is
matched to an incident vertex $s(c)$. Then, our plan is to break the cycle $c$
at the 3-crossing $X\in\mathcal{X}$ for which $\f(X)=s(c)$, by choosing the
value of $\g(X)$ appropriately.

For a subset $\mathcal{B}\subseteq\mathcal{C}$, let $\V{\mathcal{B}}$ denote the
set of all vertices incident to some cycle in $\mathcal{B}$.

\begin{lemma}\label{lem:disjoint}
  For every set $\mathcal{B}_0\subseteq \mathcal{C}$ of pairwise
  interior-disjoint cycles, $|\mathcal{B}_0| \le |\V{\mathcal{B}_0}|$.
\end{lemma}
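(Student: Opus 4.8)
The statement to prove is a Hall-type counting inequality: for any family $\mathcal{B}_0\subseteq\mathcal{C}$ of pairwise interior-disjoint short nonghost cycles in the conflict digraph $K$, we have $|\mathcal{B}_0|\le|\V{\mathcal{B}_0}|$. Since these cycles are plane and pairwise interior-disjoint, they behave like the faces of a planar subdivision, so I would try to set up a charging (or discharging) argument that assigns to each cycle $c\in\mathcal{B}_0$ a distinct ``private'' vertex in $\V{c}$. The natural first move is to build an auxiliary plane graph $P$ whose vertex set is $\V{\mathcal{B}_0}$ and whose edges are the (undirected versions of the) edges appearing on the cycles in $\mathcal{B}_0$; this $P$ is planar because $K$ is a plane digraph (Property~\ref{pr:cycle}(i)) and the cycles are interior-disjoint. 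Each $c\in\mathcal{B}_0$ bounds a face of $P$ (an inner face of the arrangement formed by these cycles), or at least lies inside one.

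**Key steps.** First I would make the containment structure explicit: by Lemma~\ref{lem:C0} any two cycles in $\mathcal{C}$ (hence in $\mathcal{B}_0$) are interior-disjoint or vertex-disjoint, and here they are assumed interior-disjoint, so their regions $\R{c}$ nest or are essentially disjoint; this gives a forest structure on $\mathcal{B}_0$ under containment. Second, I would argue that it suffices to handle the case where no cycle of $\mathcal{B}_0$ contains another (an ``antichain''): if $c_2$ contains $c_1$ with $c_1\ne c_2$, then by Lemma~\ref{lem:containment} they are vertex-disjoint, so one can peel off the innermost cycles and induct, since a vertex-disjoint pair contributes independently to both sides of the inequality. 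Third, for the antichain case, the cycles in $\mathcal{B}_0$ are the bounded faces of a plane graph $P$ in which every bounded face is bounded by a cycle of length $2$ or $3$; here I would invoke Euler's formula. For a connected plane graph with $n$ vertices, $m$ edges, and $F$ bounded faces, $n-m+F=1$; combined with the fact that each bounded face has at least two edges on its boundary and each edge borders at most two faces, one gets a bound of the form $F\le n$ (one must be a little careful with $2$-cycles, i.e. ``bigons,'' but a bigon face still has two boundary edges, so the standard bound $2m\ge 2F$ together with $m\le 2n-3$ — or the cruder $m<2n$ — still yields $F<n$, hence $F\le n$). Summing over connected components of $P$ preserves the inequality $F\le n$, and since $|\mathcal{B}_0|$ equals the number of bounded faces and $|\V{\mathcal{B}_0}|=n$, we are done.

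**Main obstacle.** The delicate point is making the face-counting rigorous when the cycles share vertices and edges: two cycles in $\mathcal{B}_0$ may share an edge (as in Lemma~\ref{lem:C0}'s phrasing ``may share vertices and edges''), so $P$ need not be simple and the faces of $P$ need not coincide bijectively with the cycles unless one is careful. I would address this by noting that interior-disjointness forces each cycle of $\mathcal{B}_0$ to be exactly the boundary of a distinct bounded face of $P$ (a shared edge is traversed once from each side, by two different cycles bounding the two faces it separates), so the map $c\mapsto(\text{bounded face bounded by }c)$ is a bijection onto the set of bounded faces of $P$. Once that bijection is pinned down, the Euler-formula count is routine. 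A secondary subtlety is that $P$ might be disconnected or have bridges/isolated components with no bounded faces; that only helps (it adds vertices without adding to $|\mathcal{B}_0|$), so it does not break the inequality. The induction in step two also needs the base case $\mathcal{B}_0=\emptyset$, which is trivial, and one should check that removing an innermost cycle and its private vertices keeps the remaining family interior-disjoint, which is immediate.
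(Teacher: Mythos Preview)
Your plan overshoots a three-line lemma and, more importantly, has a real gap.

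The paper's argument is a straight double count using Lemma~\ref{lem:V0}: if $I=\{(v,c):v\in\V{c},\ c\in\mathcal{B}_0\}$, then $|I|\ge 2|\mathcal{B}_0|$ because every short cycle has at least two vertices, and $|I|\le 2|\V{\mathcal{B}_0}|$ because every vertex lies on at most two interior-disjoint cycles in $K$. Combining gives $|\mathcal{B}_0|\le|\V{\mathcal{B}_0}|$. That is the entire proof; no auxiliary plane graph, no Euler formula, no induction on containment.

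Your Euler-formula route, as written, does not go through. Two concrete problems. First, the edge bound ``$m\le 2n-3$'' (or ``$m<2n$'') is asserted without proof, and it is false for plane multigraphs with bigon faces in general: already $k+1$ parallel arcs between two vertices give $n=2$ and $m=k+1$. In our setting this is excluded only because of Lemma~\ref{lem:V0} (or Lemma~\ref{lem:cycdet}), neither of which you invoke; once you use Lemma~\ref{lem:V0} to bound the degree of every vertex in $P$ by $4$, you are essentially doing the paper's double count anyway. Second, your claim that ``every bounded face of $P$ is bounded by a cycle of length $2$ or $3$'' is false: take four interior-disjoint triangles $T_1=(a,b,c)$, $T_2=(c,d,e)$, $T_3=(e,f,g)$, $T_4=(g,h,a)$ sharing vertices cyclically (each shared vertex is in exactly two triangles, consistent with Lemma~\ref{lem:V0}); then $P$ has the extra bounded face $aceg$ of length~$4$. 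So $|\mathcal{B}_0|$ is not the number of bounded faces of $P$, and the faces are not all short.

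A smaller issue: you appeal to Lemmas~\ref{lem:containment} and~\ref{lem:C0}, but those are stated for $\cs$, not for $\mathcal{C}$, whereas Lemma~\ref{lem:disjoint} is stated (and proved in the paper) for $\mathcal{C}$; and in any case the ``reduce to an antichain'' step is moot because interior-disjoint cycles cannot be nested in the first place. The fix is simply to use Lemma~\ref{lem:V0} directly, which collapses your whole outline to the paper's double count.
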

\begin{proof}
  We use double counting. Let $I$ be the set of all pairs
  $(v,c)\in V\times\mathcal{B}_0$ such that $v$ is incident to $c$. Every cycle
  is incident to at least two vertices, hence $|I|\geq 2|\mathcal{B}_0|$. By
  Lemma~\ref{lem:V0}, every vertex is incident to at most two interior-disjoint
  cycles. Consequently, $|I|\leq 2|\V{\mathcal{B}_0}|$.  The combination of the
  upper and lower bounds for $|I|$ yields
  $|\mathcal{B}_0|\leq |\V{\mathcal{B}_0}|$, as claimed.
\end{proof}

\begin{lemma}\label{lem:Hall}
  For every set $\mathcal{B}\subseteq \mathcal{C}$ of cycles, we have
  $|\mathcal{B}|\leq |\V{\mathcal{B}}|$.
\end{lemma}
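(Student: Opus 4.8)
The plan is to reduce Lemma~\ref{lem:Hall} from its special case for pairwise interior-disjoint families, already established in Lemma~\ref{lem:disjoint}. The obstacle is that an arbitrary subfamily $\mathcal{B}\subseteq\mathcal{C}$ need not be interior-disjoint: two cycles in $\mathcal{C}$ may be nested (one contains the other), as in \figurename~\ref{fig:contains}. However, Lemma~\ref{lem:C0} tells us that any two cycles in $\cs=\mathcal{C}$ are either interior-disjoint or vertex-disjoint. So the ``problematic'' pairs (those sharing a vertex but not interior-disjoint) are exactly the nested ones, and a nested pair is vertex-disjoint. This containment structure is the handle I want to exploit.

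\medskip

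Concretely, I would organize the cycles of $\mathcal{B}$ by the containment partial order (using ``$c_1$ contains $c_2$'' from the excerpt, i.e.\ $\R{c_2}\subseteq\R{c_1}$). Call a cycle $c\in\mathcal{B}$ \emph{outermost} if no other cycle of $\mathcal{B}$ contains it. Since $K$ is a finite plane graph, every cycle of $\mathcal{B}$ is contained in some outermost one, and any two distinct outermost cycles are interior-disjoint (if their interiors met, by Lemma~\ref{lem:C0} and Lemma~\ref{lem:max} one would contain the other, contradicting outermostness — one has to be slightly careful here, but Lemma~\ref{lem:max} says that a vertex of one cycle lying in the interior of another forces containment, and two interior-intersecting short cycles in a plane graph must have such a vertex). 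Let $\mathcal{B}_0$ be the set of outermost cycles; by Lemma~\ref{lem:disjoint}, $|\mathcal{B}_0|\le|\V{\mathcal{B}_0}|$.

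\medskip

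The key step is then a charging/induction argument that pushes this from $\mathcal{B}_0$ to all of $\mathcal{B}$. For each outermost cycle $c\in\mathcal{B}_0$, let $\mathcal{B}_c$ be the set of cycles of $\mathcal{B}$ contained in $\R{c}$ (including $c$ itself); the families $\{\R{c}:c\in\mathcal{B}_0\}$ have pairwise-disjoint interiors, so the $\mathcal{B}_c$ partition $\mathcal{B}$, and moreover by Lemma~\ref{lem:containment} the vertex sets $\V{\mathcal{B}_c}$ are pairwise disjoint (a cycle contained in $\R{c}$ has all its vertices in $\R{c}$, and Lemma~\ref{lem:containment} says $\V{c}\cap\V{c'}=\emptyset$ whenever $c$ strictly contains $c'$ — combined with interior-disjointness of the outermost cycles, no vertex is shared across two different $\mathcal{B}_c$'s). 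Hence it suffices to prove $|\mathcal{B}_c|\le|\V{\mathcal{B}_c}|$ for each outermost $c$ separately, and then sum. I expect this ``local'' inequality to follow by induction on the number of cycles: strip off the outermost cycle $c$ from $\mathcal{B}_c$, recurse on the cycles inside, and account for the at least two vertices of $c$ — crucially, by Lemma~\ref{lem:containment} these two vertices of $c$ are not vertices of any cycle strictly inside $c$, so they are ``fresh'' and pay for $c$. The main obstacle is exactly this bookkeeping: making sure that peeling outermost cycles repeatedly never double-counts a vertex, which is precisely what Lemmas~\ref{lem:max}, \ref{lem:containment}, and \ref{lem:C0} are designed to guarantee. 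Once the induction is set up cleanly, summing over $\mathcal{B}_0$ gives $|\mathcal{B}|=\sum_{c\in\mathcal{B}_0}|\mathcal{B}_c|\le\sum_{c\in\mathcal{B}_0}|\V{\mathcal{B}_c}|=|\V{\mathcal{B}}|$, as required.
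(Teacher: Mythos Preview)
Your strategy---peel off the containment-maximal (``outermost'') cycles and induct---is exactly the paper's. The paper's execution is a touch simpler and sidesteps a gap in your version: rather than partitioning $\mathcal{B}$ into blocks $\mathcal{B}_c$ indexed by the outermost cycles $c\in\mathcal{B}_0$ and then summing, the paper splits $\mathcal{B}$ into $\mathcal{B}_0$ (all outermost cycles together) and $\mathcal{B}\setminus\mathcal{B}_0$, applies Lemma~\ref{lem:disjoint} to the former, the induction hypothesis to the latter, and then uses that $\V{\mathcal{B}_0}$ and $\V{\mathcal{B}\setminus\mathcal{B}_0}$ are disjoint.

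The gap in your version is the claim that the sets $\V{\mathcal{B}_c}$, for $c\in\mathcal{B}_0$, are pairwise disjoint; you need this for the final equality $\sum_{c}|\V{\mathcal{B}_c}|=|\V{\mathcal{B}}|$. It can fail: two interior-disjoint outermost cycles $c_1,c_2\in\mathcal{B}_0$ may well share a vertex (Lemma~\ref{lem:V0} only caps the number of interior-disjoint cycles through a vertex at two; it does not forbid sharing), and then that vertex lies in both $\V{\mathcal{B}_{c_1}}$ and $\V{\mathcal{B}_{c_2}}$. Lemma~\ref{lem:containment} guarantees vertex-disjointness only between a cycle and one it \emph{strictly contains}, not between two maximal interior-disjoint cycles. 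The paper's global split avoids this precisely because it separates outermost from non-outermost rather than partitioning among the outermost: every non-outermost cycle is strictly contained in some outermost one, so Lemma~\ref{lem:containment} (with Lemma~\ref{lem:max}) gives $\V{\mathcal{B}_0}\cap\V{\mathcal{B}\setminus\mathcal{B}_0}=\emptyset$, which is all that is needed.
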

\begin{proof}
  We proceed by induction on the number of cycles in $\mathcal{B}$.  In the base
  case, we have one cycle, which has at least two vertices.

  Assume $|\mathcal{B}|\geq 2$, and let $\mathcal{B}_0\subseteq\mathcal{B}$ be
  the set of cycles in $\mathcal{B}$ that are maximal for containment.  By
  Lemma~\ref{lem:C0} the cycles in $\mathcal{B}_0$ are pairwise
  interior-disjoint, and by Lemma~\ref{lem:disjoint}, we have
  $|\mathcal{B}_0|\leq |\V{\mathcal{B}_0}|$.
  Induction for $\mathcal{B}\setminus \mathcal{B}_0$ yields
  $|\mathcal{B}\setminus \mathcal{B}_0|\leq |\V{\mathcal{B}\setminus
    \mathcal{B}_0}|$. By Lemma~\ref{lem:C0}, the vertex sets $\V{\mathcal{B}_0}$
  and $\V{\mathcal{B}\setminus \mathcal{B}_0}$ are disjoint. The combination of
  the two inequalities yields $|\mathcal{B}|\leq |\V{\mathcal{B}}|$.
\end{proof}

\begin{lemma}\label{lem:s}
  There exists an injective function $s:\mathcal{C}\rightarrow V$ that
  maps every cycle in $\mathcal{C}$ to one of its vertices.
\end{lemma}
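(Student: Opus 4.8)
The plan is to obtain $s$ as a matching in a suitable bipartite graph via Hall's theorem, in direct analogy with the construction of the function $\f$ in Section~\ref{sec:function-f}. Concretely, I would form the bipartite graph $H^\ast=(\mathcal{C}\cup V, E^\ast)$ in which a cycle $c\in\mathcal{C}$ is joined to a vertex $v\in V$ precisely when $v\in\V{c}$. (This is a finite graph, since $G$, and hence $\mathcal X$ and the conflict digraph $K$, are finite, so $\mathcal C$ is finite.) An injective function $s:\mathcal{C}\rightarrow V$ that maps each cycle to one of its own vertices is then exactly a matching from $\mathcal{C}$ into $V$ in $H^\ast$: the matching edge incident to $c$ has its $V$-endpoint in $\V{c}$ by adjacency, and the injectivity of $s$ is precisely the requirement that each vertex of $V$ be used at most once.

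By Hall's theorem, such a matching from $\mathcal{C}$ into $V$ exists if and only if $|N(\mathcal{B})|\ge|\mathcal{B}|$ for every $\mathcal{B}\subseteq\mathcal{C}$, where $N(\mathcal{B})$ denotes the neighborhood of $\mathcal{B}$ in $H^\ast$. By the definition of $H^\ast$, a vertex $v$ lies in $N(\mathcal{B})$ if and only if $v$ is incident to some cycle of $\mathcal{B}$, that is, $N(\mathcal{B})=\V{\mathcal{B}}$ in the notation introduced just before Lemma~\ref{lem:disjoint}. Hence Hall's condition for $H^\ast$ reads $|\V{\mathcal{B}}|\ge|\mathcal{B}|$ for all $\mathcal{B}\subseteq\mathcal{C}$, which is exactly the statement of Lemma~\ref{lem:Hall}. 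Applying Hall's theorem thus yields the desired matching, and reading it back as a function gives the injective $s:\mathcal{C}\rightarrow V$, completing the proof.

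I do not anticipate any genuine obstacle in this step itself: once Lemma~\ref{lem:Hall} is available, the argument is a one-line invocation of Hall's theorem, mirroring Lemma~\ref{lem:global}. The real work has already been front-loaded into Lemma~\ref{lem:Hall} — in particular into Lemma~\ref{lem:V0} (each vertex meets at most two interior-disjoint cycles of $K$), Lemma~\ref{lem:disjoint} (the bound $|\mathcal{B}_0|\le|\V{\mathcal{B}_0}|$ for families of pairwise interior-disjoint cycles), and Lemma~\ref{lem:C0} (two distinct smooth short nonghost cycles are interior-disjoint or vertex-disjoint), which together drive the induction on $|\mathcal{B}|$. The only points requiring care are bookkeeping ones: that $s$ must send a cycle to one of \emph{its own} vertices (built into the adjacency of $H^\ast$) and that $\mathcal{C}$ may contain both $2$-cycles and $3$-cycles, the $2$-cycles furnishing the tight instances of the underlying estimates — but all of this is already absorbed into Lemma~\ref{lem:Hall}.
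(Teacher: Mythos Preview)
Your proposal is correct and follows essentially the same approach as the paper: build the bipartite incidence graph between $\mathcal{C}$ and $V$, invoke Lemma~\ref{lem:Hall} as Hall's condition, and read off the matching as the desired injective function $s$. The paper's proof is in fact a terser version of exactly this argument.
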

\begin{proof}
  Consider the bipartite graph with partite sets $\mathcal{C}$ and $V$, where
  the edges represent vertex-cycle incidences. By Hall's theorem and
  Lemma~\ref{lem:Hall} (Hall's condition), there exists a matching of
  $\mathcal{C}$ into $V$, in which each cycle in $\mathcal{C}$ is matched to an
  incident vertex.
\end{proof}

We are ready to define the function $\g:\mathcal X \rightarrow E$, that maps
every $3$-crossing $X\in \mathcal X$ to one of its edges.
\begin{lemma}\label{lem:config}
Let $\f:\mathcal X \rightarrow V$ be a function obtained by Lemma~\ref{lem:nested},
and let $K$ be the corresponding conflict digraph.
There is a function $\g:\mathcal X \rightarrow E$ such that
  \begin{itemize}\itemsep -2pt
  \item for every $X\in \mathcal X$, $\g(X)\in X$ and $\g(X)$ is not incident to
    $\f(X)$;
  \item for every $2$-cycle $(\f(X_1),\f(X_2))$ in $K$, the edges $\g(X_1)$ and
    $\g(X_2)$ do not cross in $G'$;
  \item for every $3$-cycle $(\f(X_1)$, $\f(X_2)$, $\f(X_3))$ in $K$, at least two of the
    edges in $\{\g(X_1)$, $\g(X_2)$, $\g(X_3)\}$ do not cross in $G'$.
  \end{itemize}
\end{lemma}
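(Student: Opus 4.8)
The plan is to define $\g$ by reasoning separately about each short nonghost cycle of $K$, using the matching $s:\mathcal C\to V$ from Lemma~\ref{lem:s} to decide, for each cycle $c\in\mathcal C$, which of its associated $3$-crossings is ``responsible'' for breaking it. Concretely, for each $3$-crossing $X\in\mathcal X$: if $\f(X)=s(c)$ for some (necessarily unique, since $s$ and $\f$ are injective) cycle $c\in\mathcal C$, then we choose $\g(X)$ among the two edges of $X$ not incident to $\f(X)$ and with an endpoint consecutive to $\f(X)$ along $\partial\R X$ so as to ``destroy'' $c$; for all other $3$-crossings $X$, we choose $\g(X)$ arbitrarily subject to the first bullet (which is always possible, as in Lemma~\ref{lem:global}). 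The first bullet is then immediate by construction.

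For the second and third bullets, I would translate the crossing conditions into the conflict digraph language via Property~\ref{pr:cycle}. Recall that by Lemma~\ref{lem:fan} every $3$-crossing in $G'$ is a twin or a whirl, and by Property~\ref{pr:cycle}(iii)--(iv) a twin induces a $2$-cycle and a whirl a $3$-cycle in $K$. Conversely, whether a $2$-cycle $(\f(X_1),\f(X_2))$ actually produces a crossing between $\g(X_1)$ and $\g(X_2)$, and whether a $3$-cycle $(\f(X_1),\f(X_2),\f(X_3))$ produces the three pairwise crossings needed for a whirl, depends on $\g$: a crossing between $\g(X_i)$ and $\g(X_j)$ forced by an edge $(\f(X_i),\f(X_j))$ of $K$ occurs only if $\g(X_i)$ is the edge of $X_i$ incident to $\f(X_j)$ — i.e.\ only if $\g(X_i)$ points ``toward'' $\f(X_j)$. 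So for a $2$-cycle $c=(\f(X_1),\f(X_2))$ with, say, $s(c)=\f(X_1)$, we pick $\g(X_1)$ to be the edge of $X_1$ \emph{not} incident to $\f(X_2)$ (there is a second choice available because $\f(X_2)$ is consecutive to at most one of the two candidate endpoints; here I would invoke the explicit geometry of the twin configuration to confirm a valid alternative exists), which immediately kills the crossing and hence the twin. For a $3$-cycle $c=(\f(X_1),\f(X_2),\f(X_3))$ with $s(c)=\f(X_1)$, we pick $\g(X_1)$ to avoid being incident to $\f(X_2)$, which removes the $\g(X_1)$--$\g(X_2)$ crossing; this leaves at most two of the three pairwise crossings, so at least two of the three edges do not pairwise cross, as required.

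The main obstacle I anticipate is ruling out \emph{interference} between the breaking decisions of distinct cycles: the choice of $\g(X)$ made to break one cycle $c$ (because $\f(X)=s(c)$) might, as a side effect, be exactly the ``bad'' choice that fails to break \emph{another} short cycle $c'$ for which $X$ is associated but $\f(X)\ne s(c')$. For $2$-cycles this is where the work lies: I must show that the edge of $X_1$ not incident to $\f(X_2)$ — forced by $s(c)=\f(X_1)$ — cannot simultaneously be incident to $\f(X_2')$ for some other $2$-cycle $c'=(\f(X_1),\f(X_2'))$ through the same $3$-crossing $X_1$; equivalently, that $X_1$ can lie on at most one $2$-cycle whose designated vertex is $\f(X_1)$, or that when it lies on two, the two ``non-incident'' choices coincide. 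Here I would use that $\f$ is injective (so $X_1$ has incoming $K$-edges only from the unique crossing with $\f$-image $\f(X_1)$), that the two candidate values of $\g(X_1)$ correspond to the two neighbors of $\f(X_1)$ on $\partial\R{X_1}$, and Property~\ref{pr:cycle}(ii) together with the containment/smoothness structure (Lemmas~\ref{lem:C0}, \ref{lem:max}, \ref{lem:containment}) established for $\mathcal C=\cs$, to bound how many short cycles can route through a single $3$-crossing and force incompatible choices. The $3$-cycle case is milder because we only need to destroy \emph{one} of three crossings, leaving slack; but I would still need to check that the single forbidden incidence demanded by $c$ is achievable simultaneously with whatever (weaker) demand a nested or edge-sharing $3$-cycle places on $X_1$, again leaning on $\mathcal C=\cs$ and Lemma~\ref{lem:V0} (each vertex lies on at most two interior-disjoint cycles) to keep the constraint system feasible.
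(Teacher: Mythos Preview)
Your crossing criterion has the direction reversed, and this is not a cosmetic slip: it breaks the argument. An edge $(\f(X_i),\f(X_j))$ in $K$ records that $\f(X_i)\in\V{X_j}$; the resulting potential crossing between $\g(X_i)$ and $\g(X_j)$ is the hook of $\g(X_i)$ (around $\f(X_i)$) hitting a tip of $\g(X_j)$, and it occurs precisely when $\g(X_j)$ is incident to $\f(X_i)$---a condition on $\g(X_j)$, not on $\g(X_i)$. For a nonghost $3$-cycle $(\f(X_1),\f(X_2),\f(X_3))$ you have $\f(X_2)\notin\V{X_1}$, so ``$\g(X_1)$ not incident to $\f(X_2)$'' is automatic and buys nothing; what you must enforce at the vertex $s(c)=\f(X_1)$ is that $\g(X_1)$ avoid its \emph{predecessor} $q(c)=\f(X_3)$, which is exactly what the paper does.

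Even after fixing the direction, the $2$-cycle case does not go through with your scheme. For $c=(\f(X_1),\f(X_2))$ both $K$-edges are present, so $\g(X_1)$ and $\g(X_2)$ cross in $G'$ as soon as \emph{either} $\g(X_1)$ is incident to $\f(X_2)$ \emph{or} $\g(X_2)$ is incident to $\f(X_1)$. Your matching $s$ hands you control over only one of the two, say $\g(X_1)$; if $\g(X_2)$ (determined elsewhere or arbitrarily) happens to be incident to $\f(X_1)$, your choice of $\g(X_1)$ cannot prevent the crossing. The paper resolves this with a genuine case analysis you are missing: if $\g(X_2)$ is already fixed and incident to $\f(X_1)$, it takes the \emph{opposite} tack---set $\g(X_1)$ incident to $\f(X_2)$---and uses the crossing-free edge $\edge{\f(X_1)}{\f(X_2)}$ together with the two edges of $X_1,X_2$ through $\f(X_1),\f(X_2)$ as a separating path to argue geometrically that the two hooks lie on opposite sides and hence do not meet (Case~1). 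If $\g(X_2)$ is not yet fixed because $X_2$ is itself assigned to another $2$-cycle, one gets an even-length ``cycle of $2$-cycles'' and a parity/orientation argument (using Lemma~\ref{lem:V0} and smoothness) is needed to make all consecutive pairs noncrossing simultaneously (Case~3). Your interference discussion is about a different worry (one $3$-crossing on several cycles) and does not supply these ingredients; in addition, the statement asks about \emph{all} $2$- and $3$-cycles in $K$, so ghost $3$-cycles must still be covered, which the paper dispatches by reducing to the already-handled $2$-cycle they contain.
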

\begin{proof}
  By Lemma~\ref{lem:s}, there is an injective function
  $s:\mathcal{C}\rightarrow V$ that maps every cycle $c\in \mathcal{C}$ to one
  of its vertices. For each cycle $c\in \mathcal{C}$, vertex $s(c)$ is the
  endpoint of some directed edge $(q(c), s(c))$ in $K$.
  Consequently, there is a $3$-crossing $X\in \mathcal X$ such that $s(c)=\f(X)$ and
  $q(c)\in\V{X}$. We say that the $3$-crossing $X$ is \emph{assigned} to the cycle $c$.
  We define $\g:\mathcal{X}\rightarrow E$ by successively selecting $\g(X)\in X$
  for every $3$-crossing $X\in \mathcal{X}$. We distinguish between two types of
  $3$-crossings, depending on whether or not they are assigned to a $2$-cycle of $\mathcal{C}$.

  \paragraph{$3$-crossings that are not assigned to a $2$-cycle} For every
  $3$-crossing $X$ that is not assigned to any cycle, choose $\g(X)$ to be an
  arbitrary edge in $X$ that is not incident to the vertex $\f(X)$. For every
  $3$-crossing $X$ that is assigned to a $3$-cycle $c\in \mathcal{C}$, choose
  $\g(X)$ to be the (unique) edge in $X$ that is incident to neither $q(c)$ nor
  $s(c)$. If $c=(\f(X_1),\f(X_2),\f(X_3))$ and without loss of generality
  $s(c)=\f(X_2)$, then $\g(X_2)$ is not incident to $\f(X_1)=q(c)$, consequently
  $\g(X_1)$ is disjoint from $\g(X_2)$ in $G'$. (Note that $\g(X_1)$ is not incident to
  $\f(X_2)=s(c)$ because this would induce a 2-cycle $(f(X_1),f(X_2))$ in $K$,
  making $c$ a ghost.)

  \paragraph{$3$-crossings assigned to $2$-cycles} Consider a $2$-cycle
  $c\in \mathcal{C}$, and let $X_1$ and $X_2$ denote the associated
  $3$-crossings so that without loss of generality $s(c)=\f(X_1)$. Assume
  without loss of generality that $c$ is oriented clockwise. We distinguish
  three cases.

  \paragraph{Case~1: $\g(X_2)$ has already been selected and $\g(X_2)$ is
    incident to $\f(X_1)$} Then let $\g(X_1)$ be the unique edge in $X_1$
  incident to $\f(X_2)$ (see \figurename~\ref{fig:twin:a}). We claim that
  $\g(X_1)$ and $\g(X_2)$ do not cross in $G'$. As both edges are rerouted, by
  Lemma~\ref{lem:cross} they can only cross in the neighborhood of $\f(X_1)$ or
  $\f(X_2)$. Let $a_i$ be the edge of $X_i$ incident to $\f(X_i)$, for
  $i\in\{1,2\}$. The edge $\g(X_i)$, for $i\in\{1,2\}$, follows $a_i$ towards
  the neighborhood of $\f(X_i)$ and then crosses the edges incident to $\f(X_i)$
  following $a_i$ in clockwise order (the orientation of $c$) until reaching the
  edge $(\f(X_1),\f(X_2))$. Then $\g(X_i)$ follows $\edge{\f(X_1)}{\f(X_2)}$ to its
  other endpoint, without crossing the edge. Therefore, the path formed by the
  edges $a_1$, $\edge{\f(X_1)}{\f(X_2)}$, and $a_2$ splits the neighborhoods of
  $\f(X_1)$ and $\f(X_2)$ into two components so that $\g(X_1)$ and $\g(X_2)$
  are in different components. Thus $\g(X_1)$ and $\g(X_2)$ do not cross, as
  claimed.

  \begin{figure}[htbp]
    \centering%
    \begin{minipage}[b]{.48\textwidth}
    \centering
    	\includegraphics[scale=1]{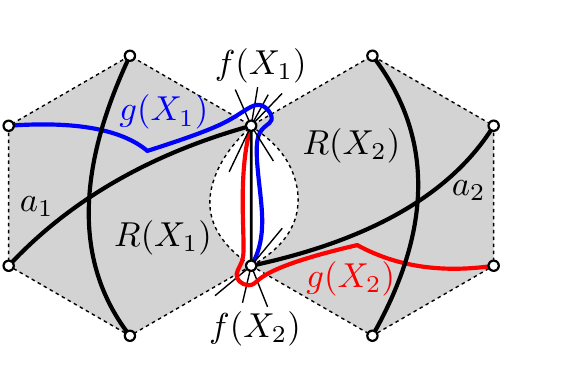}\subcaption{}\label{fig:t:a1}
  	\end{minipage}
  	\hfil
  	\begin{minipage}[b]{.48\textwidth}
  	\centering
    	\includegraphics[scale=1]{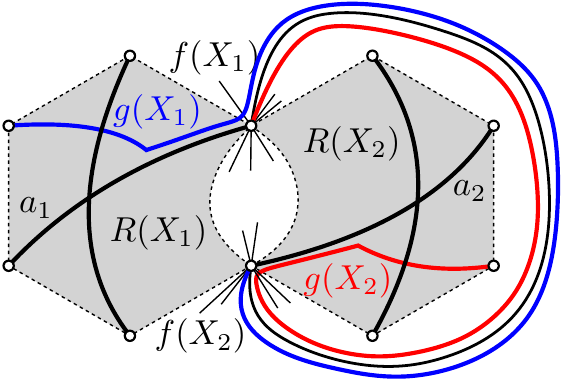}\subcaption{}\label{fig:t:a2}
  	\end{minipage}\hfil
    \caption{In Case~1, the edge $\g(X_2)$ is incident to $\f(X_1)$. We set
      $\g(X_1)$ so that it is incident to $\f(X_2)$. The edge $\edge{\f(X_1)}{\f(X_2)}$ may
      be drawn in various ways in $G$, two examples are shown above. Regardless
      of how $\edge{\f(X_1)}{\f(X_2)}$ is drawn, the edge separates $\g(X_1)$ and $\g(X_2)$
      and ensures that they are disjoint.\label{fig:twin:a}}
  \end{figure}

  \paragraph{Case~2: $\g(X_2)$ has already been selected and $\g(X_2)$ is not
    incident to $\f(X_1)$} Then let $\g(X_1)$ be the unique edge in $X_1$
  incident to neither $\f(X_1)$ nor $\f(X_2)$ (see
  \figurename~\ref{fig:t:d}). We claim that $\g(X_1)$ and $\g(X_2)$ do not cross
  in $G'$. As both edges are rerouted, by Lemma~\ref{lem:cross} they can only
  cross in the neighborhood of $\f(X_1)$ or $\f(X_2)$. But as $\g(X_1)$ is not
  incident to $\f(X_2)$, there is a neighborhood of $\f(X_2)$ that is disjoint
  from $\g(X_1)$, and so $\g(X_1)$ and $\g(X_2)$ do not cross there. Similarly,
  there is a neighborhood of $\f(X_1)$ that is disjoint from $\g(X_2)$, and so
  $\g(X_1)$ and $\g(X_2)$ do not cross there, either. It follows that $\g(X_1)$
  and $\g(X_2)$ do not cross in $G'$, as claimed.

  \begin{figure}[htbp]
    \begin{center}
    \begin{minipage}[b]{.48\textwidth}
      \centering
      \includegraphics[scale=1]{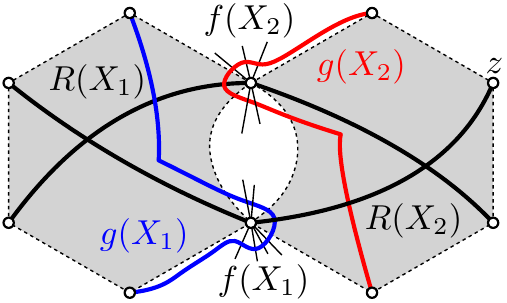}\subcaption{}\label{fig:t:d}
    \end{minipage}
    \hfil
    \begin{minipage}[b]{.48\textwidth}
    \centering
      \includegraphics[scale=1]{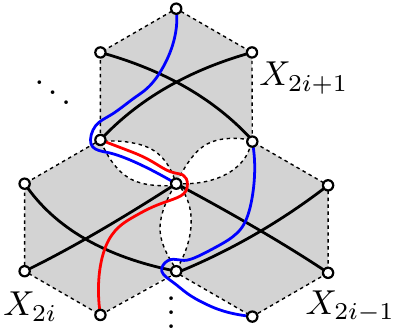}\subcaption{}\label{fig:t:c}
    \end{minipage}
    \end{center}
    \caption{(a)~In Case~2, the edge $\g(X_2)$ is not incident to $\f(X_1)$. We
      set $\g(X_1)$ so that it is not incident to $\f(X_2)$, to ensure that
      $\g(X_1)$ and $\g(X_2)$ are disjoint. (b)~In Case~3 we face a cycle of
      2-cycles. We consistently select edges to be rerouted in even (red edge)
      and odd (blue edge) $3$-crossings so that they are pairwise
      disjoint~in~$G'$.\label{fig:twin:b}}
  \end{figure}

  \paragraph{Case~3: no $3$-crossing $X_1$ is assigned to a $2$-cycle so that
    $\g(X_2)$ has already been selected} Then we are left with $3$-crossings
  that correspond to $2$-cycles and form \emph{cycles} $L=(X_1,\ldots,X_\ell)$
  such that $(\f(X_i),\f(X_{i\oplus 1}))$ is a $2$-cycle in $\mathcal{C}$, for
  $i=1,\ldots,\ell$. These cycles are interior-disjoint by Lemma~\ref{lem:C0},
  and any two consecutive cycles in $L$ have opposite orientations by
  Lemma~\ref{lem:V0}. It follows that $\ell$ is even.

  Since every 2-cycle in $L$ is smooth, the three vertices $\f(X_{i\ominus 1})$,
  $\f(X_i)$, and $\f(v_{i\oplus 1})$ are consecutive along
  $\partial\R{X_i}$. For every odd $i\in\{1,\ldots,\ell\}$, let $\g(X_i)$ be the
  (unique) edge in $X_i$ incident to $\f(X_{i\ominus 1})$ (and incident to
  neither $\f(X_i)$ nor $\f(X_{i\oplus 1})$). Similarly, for every even
  $i\in\{1,\ldots,\ell\}$, let $\g(X_i)$ be the edge in $X_i$ incident to
  $\f(X_{i\oplus 1})$ (and incident to neither $\f(X_i)$ nor
  $\f(X_{i\ominus 1})$). Refer to \figurename~\ref{fig:t:c} for an illustration.

  For every odd index $i\in\{1,\ldots,\ell\}$, the rerouted edges $\g(X_i)$ and
  $\g(X_{i\oplus 1})$ are incident to neither $\f(X_{i\oplus 1})$ nor
  $\f(X_i)$. Similarly, for every even index $i\in\{1,\ldots, \ell\}$, the
  rerouted edges $\g(X_i)$ and $\g(X_{i\oplus 1})$ are incident to
  $\f(X_{i\oplus 1})$ and $\f(X_i)$, respectively. In both cases, the rerouted
  edges $\g(X_i)$ and $\g(X_{i\oplus 1})$ are disjoint.

  \paragraph{Ghost cycles} It remains to consider ghost cycles. Let $c_1$ be a
  ghost cycle in $K$. Without loss of generality, assume that
  $c_1=(v_1,v_2,v_3)$, where $v_1=\f(X_1)$, $v_2=\f(X_2)$, $v_3=\f(X_3)$, and
  $c_2=(v_1,v_2)$ is a $2$-cycle in $\mathcal{C}$. Recall that $c_2$ is smooth
  (cf.~Lemma~\ref{lem:nested}). By construction, $\g(X_1)$ and $\g(X_2)$ do not
  cross in $G'$.  Hence at least two of the edges in
  $\{\g(X_1), \g(X_2), \g(X_3)\}$ do not cross in $G'$, as required.
\end{proof}

\subsection{Putting all the results together}

The results in this section can be used to prove that every \twoplanarstg can be transformed into a \quasi{3} topological graph by means of a suitable global rerouting operation.

\begin{lemma}\label{lem:g}
  Let $G=(V,E)$ be a \twoplanarstg, and let $\mathcal X$
  be the set of its $3$-crossings.
  There exist functions $\f:\mathcal X \rightarrow V$ and
  $\g:\mathcal X \rightarrow E$ such that the topological graph $G'$, $G' \simeq G$,
  obtained from $G$ by applying a full rerouting operation with functions $(\g,\f)$ is \quasi{3}.
\end{lemma}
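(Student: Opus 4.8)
The plan is to combine the machinery developed in this section into a single existence proof. First I would invoke Lemma~\ref{lem:nested} to obtain an injective function $\f:\mathcal X \rightarrow V$ such that in the corresponding conflict digraph $K$ every short nonghost cycle is smooth, i.e. $\mathcal C=\cs$. Then I would apply Lemma~\ref{lem:config} to this $\f$ and its digraph $K$ to obtain a function $\g:\mathcal X \rightarrow E$ with the three listed properties: $\g(X)\in X$ is not incident to $\f(X)$ for every $X$; for every $2$-cycle of $K$ the two associated rerouted edges are disjoint in $G'$; and for every $3$-cycle of $K$ at least two of the three associated rerouted edges are disjoint in $G'$. Since $\g(X)$ is not incident to $\f(X)$ and (by construction in Lemma~\ref{lem:config}) one endpoint of $\g(X)$ is consecutive with $\f(X)$ along $\partial\R{X}$, the pair $(\g,\f)$ defines a valid global rerouting, and hence a full rerouting; let $G'$ be the resulting \stg with $G'\simeq G$.

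It remains to argue that $G'$ is \quasi{3}, i.e. contains no $3$-crossing. Suppose for contradiction that $G'$ contains a $3$-crossing $Y$. By Lemma~\ref{lem:fan}, $Y$ forms either a twin or a whirl configuration. If $Y$ is a whirl, then by Property~\ref{pr:cycle}(iii) the vertices $\f(X_1),\f(X_2),\f(X_3)$ of the three associated $3$-crossings form a $3$-cycle in $K$; by Lemma~\ref{lem:config} at least two of $\g(X_1),\g(X_2),\g(X_3)$ are disjoint in $G'$, contradicting that all three edges of $Y$ pairwise cross (note the edges of $Y$ are exactly $\g(X_1),\g(X_2),\g(X_3)$ since only one edge per $3$-crossing is rerouted and a whirl consists of three critical edges). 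If $Y$ is a twin, say with $e_1=\g(X_1)$, $e_2=\g(X_2)$, and $e_3\in X_2\setminus\{e_2\}$, then by Property~\ref{pr:cycle}(iv) the pair $(\f(X_1),\f(X_2))$ is a $2$-cycle in $K$; by Lemma~\ref{lem:config} the edges $\g(X_1)=e_1$ and $\g(X_2)=e_2$ do not cross in $G'$, again contradicting that $e_1$ and $e_2$ belong to the $3$-crossing $Y$. In either case we reach a contradiction, so $G'$ has no $3$-crossing and is therefore \quasi{3}.

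One subtlety to handle carefully is the bookkeeping about which edges of $Y$ are rerouted: Lemma~\ref{lem:nosafeedges}\eqref{nosafeedges-i} guarantees no safe edge lies in a $3$-crossing of $G'$, and Lemma~\ref{lem:single-non-rerouted} shows at most one edge of $Y$ is nonrerouted — but in the twin case one of the three edges ($e_3$) is indeed nonrerouted, so I would point out that the crossing between the two rerouted edges $e_1,e_2$ already suffices for the contradiction, without needing to analyze $e_3$ at all. I expect this edge-classification bookkeeping, together with verifying that the configurations identified by Lemma~\ref{lem:fan} are exactly the ones whose associated cycles in $K$ are ruled out by Lemma~\ref{lem:config}, to be the main (though routine) obstacle; the genuinely hard work has already been done in Lemmas~\ref{lem:nested} and~\ref{lem:config}. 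Finally I would note that $G'$ being simple is deferred to Section~\ref{se:simplicity}, so this lemma only asserts the \quasi{3} property of the full rerouting.
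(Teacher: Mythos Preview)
Your proposal is correct and follows essentially the same approach as the paper, just with the logic spelled out more explicitly: the paper's own proof is a terse three sentences that cite Lemmas~\ref{lem:nested} and~\ref{lem:config} and then invoke Lemma~\ref{lem:fan}, leaving implicit the link (via Property~\ref{pr:cycle}) between twin/whirl configurations and the $2$- and $3$-cycles handled by Lemma~\ref{lem:config}, which you make explicit. One tiny remark: the fact that an endpoint of $\g(X)$ is consecutive with $\f(X)$ along $\partial\R{X}$ is not so much ``by construction in Lemma~\ref{lem:config}'' as automatic for $k=2$, since in a hexagon every edge not incident to $\f(X)$ has an endpoint adjacent to $\f(X)$.
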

\begin{proof}
 Lemmas~\ref{lem:nested} and~\ref{lem:config} imply the existence of two functions $\f$ and $\g$
 such that the graph $G'$ obtained from $G$ by applying a global rerouting operation  $(\g,\f)$
 contains neither twin nor whirl configurations. Thus, by Lemma~\ref{lem:fan}, $G'$ does not contain $3$-crossings and the statement follows.
\end{proof}

\section{Obtaining simplicity}\label{se:simplicity}

Lemmas~\ref{lem:nobigcrossings} and~\ref{lem:g} imply that, for $k \ge 2$, every \kplanarstg $G$ can be redrawn such that the resulting topological graph $G'$, $G' \simeq G$, contains no $(k+1)$-crossings and no two edges are rerouted around the same vertex. We now show that $G'$ is also simple if $k=2$. Then we handle the case $k \ge 3$, in which $G'$ may not be simple.

\begin{lemma}\label{lem:finally-simple}
For $k=2$ the topological graph $G'$ in Lemma~\ref{lem:g} is simple.
\end{lemma}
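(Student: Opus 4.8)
The plan is to show that $G'$ (for $k=2$) contains no two edges that share more than one point, where a shared point is either a common endpoint or a proper crossing. Since $G$ is simple and a rerouting only alters the drawings of the edges in $\g(\mathcal X)$ and of the safe edges, it suffices to check all pairs of edges at least one of which has been rerouted. I would organize the case analysis by the edge types introduced in Section~\ref{se:properties-rerouted}: nonrerouted, safe, and critical. First, observe that by Property~\ref{pr:k-crossings-disjoint} distinct $3$-crossings are edge-disjoint, and by the construction of $\R{X}$ the regions of distinct $3$-crossings are interior-disjoint, so a rerouted edge interacts with ``foreign'' parts of $G$ only through its hook (which lives in a small neighborhood of its anchor vertex $\f(X)$) or, for safe edges, inside a single region $\R{X}$.

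The key steps, in order. (1) \emph{Safe edges cross each edge at most once and never share two points.} This is essentially Lemma~\ref{lem:nosafeedges}: a safe edge $e=(\f(X),z)$ crosses only the (at most $k-1=1$) nonrerouted edges of $X$ inside $\R{X}$, each once and none adjacent to $e$, plus possibly one critical edge $e'$ rerouted around $z$ — and Lemma~\ref{lem:nosafeedges}(iii) says $e$ and $e'$ meet exactly once. The only adjacency to worry about is exactly that $e$–$e'$ pair, and since they share the single endpoint $z$ and cross once, they share two points — this is the one genuinely dangerous configuration, and I would argue it is avoided by the choice of $\g$ in Lemma~\ref{lem:config}. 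Concretely, $e'=\g(X')$ with $\f(X')=z$, and $X'$ being a home for $e$ (so that $e$ is safe and $z=\f(X)$'s partner) forces a $2$-cycle $(\f(X),\f(X'))$ in $K$; Lemma~\ref{lem:config}'s second bullet guarantees $\g(X)$ and $\g(X')$ do not cross in $G'$, and since $e$ is safe precisely when $\g(X)$ was rerouted inside its home $X$, a short chase shows $e$ and $e'$ cannot simultaneously be adjacent and crossing. (2) \emph{Critical edges.} A critical edge $e_1=\g(X)$ rerouted around $w=\f(X)$: by Lemma~\ref{lem:reroute-properties}(iii) it crosses no edge more than once; it shares no endpoint with the edges of $X$ it crosses (they are pairwise nonadjacent in the simple graph $G$), and its new crossings with hook edges are with edges forming a fan at $w$ — but a fan edge at $w$ adjacent to $e_1$ would force $e_1$ to be incident to $w$, which it is not. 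Crossings between two critical edges are handled by Lemma~\ref{lem:noadjcross2} (no two adjacent critical edges cross) together with the whirl analysis: Lemma~\ref{lem:g} already tells us $G'$ has no $3$-crossings, and Lemma~\ref{lem:config}'s third bullet ensures that in every $3$-cycle at least two of the three rerouted edges are disjoint, which rules out the remaining bad ``pair meets twice'' patterns. (3) \emph{Nonrerouted edges among themselves} are untouched, hence still simple.

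The main obstacle, I expect, is precisely step~(1): the interaction between a safe edge and a critical edge rerouted around its far endpoint (\figurename~\ref{fig:nosafeedges}), because this is the place where an adjacency and a crossing can coincide, and ruling it out is exactly what forced the careful, case-by-case definition of $\g$ on $2$-cycles (Cases~1–3 in the proof of Lemma~\ref{lem:config}). I would therefore state a short sublemma isolating ``if $e$ is safe with home $X$ and $e'=\g(X')$ is rerouted around the second endpoint of $e$, then $\{\f(X),\f(X')\}$ induces a $2$-cycle in $K$ and hence $\g(X)$ was chosen by one of Cases~1–3 so that it and $e'$ are disjoint in $G'$,'' and use it to close that pair; the remaining pairs are then routine applications of Lemmas~\ref{lem:reroute-properties}, \ref{lem:nosafeedges}, \ref{lem:noadjcross2}, and~\ref{lem:g}.
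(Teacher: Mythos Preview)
Your overall case split is sensible, but the key step where you handle the adjacent safe--critical pair has a genuine gap. You correctly identify that if a safe edge $e=(\f(X),z)$ crosses an adjacent critical edge $e'=\g(X')$ with $\f(X')=z$, then $(\f(X),\f(X'))$ is a $2$-cycle in $K$. However, the second bullet of Lemma~\ref{lem:config} only tells you that $\g(X)$ and $\g(X')$ do not cross; the safe edge $e$ is \emph{not} $\g(X)$ (it is the edge $(\f(X),z)$, with $z$ the far endpoint of $\g(X)$), so this gives no direct information about~$e$ and~$e'$. Your ``short chase'' does not close this. The paper's argument is different and uses Lemma~\ref{lem:nested} instead: the $2$-cycle is \emph{smooth}, which forces $\f(X)=\f(X_1)$ and $\f(X')=\f(X_2)$ to be consecutive along $\partial\R{X_2}$; but then $e=(\f(X_1),\f(X_2))$ connects consecutive vertices of $\partial\R{X_2}$, contradicting the definition of $X_2$ being a home for~$e$ (the home edge always connects $\f(X_2)$ to the \emph{far}, nonconsecutive endpoint of $\g(X_2)$). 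Also, note that the shared endpoint of $e$ and $e'$ is $\f(X)$, not $z$.

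Two smaller issues. In your Step~(2), the sentence ``a fan edge at $w$ adjacent to $e_1$ would force $e_1$ to be incident to $w$'' is false: such an edge is $(v,w)$ with $v$ an endpoint of $e_1$, and this does not make $e_1$ incident to~$w$. The paper handles this case (critical $e_1$ vs.\ nonrerouted adjacent $e_2$) by observing that then $X_1$ is a home for $e_2$, so $e_2$ would have been rerouted and could not be nonrerouted. Finally, for two independent critical edges crossing twice, Lemma~\ref{lem:reroute-properties}(\ref{reroute-properties-iii}) does not suffice because it only analyzes a single rerouting, and your invocation of $3$-cycles and the third bullet of Lemma~\ref{lem:config} is misplaced: two such crossings force each edge to be rerouted around an endpoint of the other, which yields a $2$-cycle in $K$, and then the \emph{second} bullet of Lemma~\ref{lem:config} directly says $\g(X_1)$ and $\g(X_2)$ do not cross at all.
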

\begin{proof}
  Suppose, for a contradiction, that $G'$ is not simple, i.e., either two edges cross at least twice or two adjacent edges cross each other.

  Suppose first that there exists two edges $e_1$ and $e_2$ that cross at least twice in $G'$. A
  safe edge does not cross any edge more than once by
  Lemma~\ref{lem:nosafeedges}(ii). Any two nonrerouted edges cross at most once since $G$ is simple.
  A critical edge crosses any nonrerouted edge at
  most once by construction. It remains to consider the case that both $e_1$ and
  $e_2$ are critical, that is, $e_1=\g(X_1)$ and $e_2=\g(X_2)$ for some $3$-crossings
  $X_1,X_2\in \mathcal X$.  By Lemma~\ref{lem:cross}, $\g(X_1)$ is incident to $\f(X_2)$
  and $\g(X_2)$ is incident to $\f(X_1)$.  It follows that $(\f(X_1),\f(X_2))$ is a
  $2$-cycle in the conflict digraph $K$.
  By Lemma~\ref{lem:nested}, every $2$-cycle in $K$ is smooth, and by
  Lemma~\ref{lem:config}, the edges $\g(X_1)$ and $\g(X_2)$ do not cross in $G'$.
  This contradicts our assumption that $e_1=\g(X_1)$ and $e_2=\g(X_2)$ cross twice.
  We conclude that any two edges in $G'$ cross at most once.

  Suppose next that $e_1$ and $e_2$ are adjacent and cross at least once in
  $G'$. Two adjacent nonrerouted edges do not cross because $G$ is simple.  If a
  safe edge crosses an adjacent edge~$e'$, then~$e'$ is critical by
  Lemma~\ref{lem:nosafeedges}(iii).  Therefore, we may assume that $e_1$ is
  critical, that is, $e_1=\g(X_1)$ for some $X_1\in \mathcal X$. As two adjacent
  critical edges do not cross by Lemma~\ref{lem:noadjcross2}, only two cases
  remain.

  \paragraph{Case~1: $e_2$ is safe} Then by Lemma~\ref{lem:nosafeedges}(iii),
  $e_2$ is drawn in $\R{X_2}$, for some $X_2\in \mathcal X\setminus\{X_1\}$,
  $e_2$ is incident to $\f(X_1)$, and $e_1$ is incident to $\f(X_2)$.  Since
  $e_1$ and $e_2$ are adjacent, $e_2$ is also incident to an endpoint of
  $\g(X_1)$.  By the injectivity of $\f$, $\f(X_1)\neq \f(X_2)$, and so
  $e_2=\edge{\f(X_1)}{\f(X_2)}$. It follows that $(\f(X_1),\f(X_2))$ is a $2$-cycle
  in the conflict digraph $K$. By Lemma~\ref{lem:nested}, every $2$-cycle in~$K$
  is smooth. So $e_2$ is an edge between two consecutive vertices along
  $\partial\R{X_2}$, in contradiction to $X_2$ being a home for $e_2$.

  \paragraph{Case~2: $e_2$ is nonrerouted} By Lemma~\ref{lem:cross}, if
  $e_1=\g(X_1)$ crosses $e_2$, then $e_2$ is incident to $\f(X_1)$.  So
  $e_2=\edge{f(X_1)}{z}$, where $z$ is an endpoint of $\g(X_1)$ since $e_1$ and $e_2$ share an endpoint distinct from $f(X_1)$.
  If $e_2$ belongs to $\partial\R{X_1}$, then~$e_1$ crosses $e_2$ neither in $G$ nor in $G'$, contrary to our assumption that the two edges cross each other. Otherwise, the $3$-crossing $X_1$ is a
  home for $e_2$. But then $e_2$ would have been rerouted in a home rerouting
  operation, contrary to our assumption that $e_2$ is nonrerouted.

  Both cases lead to a contradiction and hence the statement follows.
\end{proof}

We now consider the case $k \ge 3$. We first characterize in the possible configurations that yield pairs of edges that intersect in two or more points in $G'$ (Lemma~\ref{lem:reroute-simple}), and then show how to further redraw some of these edges to eliminate multiple intersections without introducing  $(k+1)$-crossings (Lemma~\ref{lem:finally-nonsimple}).

\begin{lemma}\label{lem:reroute-simple}
Let $G$ be a \kplanarstg, where $k \ge 3$, and $G'$ the topological graph obtained from $G$ by executing a full rerouting operation.
\begin{itemize}\itemsep -2pt
\item If $e_1$ and $e_2$ are independent edges that cross more than once in $G'$, then both $e_1$ and $e_2$ are critical and they each are rerouted around an endpoint of the other
    (that is, $e_1=\g(X_1)$ and $e_2=\g(X_2)$ for some $(k+1)$-crossings $X_1,X_2\in \mathcal{X}$, $X_1\neq X_2$, and $\f(e_1)$ is an endpoint of $e_2$ and $\f(e_2)$ is an endpoint of $e_1$). Furthermore, $\f(X_1)$ and $\f(X_2)$ are adjacent in neither $\partial \R{X_1}$ nor $\partial \R{X_2}$.
%
\item If $e_1=\edge{u}{v}$ and $e_2=\edge{v}{w}$ are adjacent edges that cross in $G'$, then after possibly exchanging the roles of $e_1$ and $e_2$,
    we have that $e_1$ is critical rerouted around $w$, and $e_2$ is safe rerouted in its home which is the $(k+1)$-crossing of $e_1$. Furthermore, there exists a critical edge $e_3$ such that $e_1$ and $e_2$ have each been rerouted around an endpoint of the other.
\end{itemize} 
\end{lemma}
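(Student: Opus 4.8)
The plan is to prove both items by a case distinction on the types (nonrerouted, safe, critical) of $e_1$ and $e_2$, controlling the crossings created by the full rerouting through Lemma~\ref{lem:cross} and eliminating cases with Lemmas~\ref{lem:reroute-properties}, \ref{lem:nosafeedges}, and~\ref{lem:noadjcross2}. The common starting observation is that, since $G$ is simple, two independent edges cross at most once in $G$ and two adjacent edges do not cross in $G$ at all; hence in both items at least one crossing of $e_1$ and $e_2$ in $G'$ is \emph{new}, so Lemma~\ref{lem:cross} applies to it.

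For the first item I would first rule out safe edges: by Lemma~\ref{lem:nosafeedges}(ii) a safe edge crosses no edge more than once, so neither $e_1$ nor $e_2$ is safe. Next I would rule out nonrerouted edges: two nonrerouted edges cross at most once by simplicity, and a critical edge $\g(X)$ crosses any other edge at most once, since after the global rerouting its tips introduce no new crossings and only its single hook does, and that hook meets every edge of the fan at $\f(X)$ at most once (as in the proof of Lemma~\ref{lem:reroute-properties}(iii)). Hence $e_1=\g(X_1)$ and $e_2=\g(X_2)$ are both critical with $X_1\ne X_2$; by Property~\ref{pr:k-crossings-disjoint} they do not cross in $G$, so both their crossings in $G'$ are new and, by Lemma~\ref{lem:cross}(b), each lies on the hook of one of the two edges. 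Since each hook crosses the other edge at most once, having two crossings forces $\f(X_1)$ to be an endpoint of $e_2$ \emph{and} $\f(X_2)$ to be an endpoint of $e_1$. For the ``furthermore,'' note that then $\f(X_1),\f(X_2)\in\V{X_1}\cap\V{X_2}$, so they lie on $\partial\R{X_1}$; if they were consecutive there, then since $\g(X_1)=e_1$ was chosen (Lemma~\ref{lem:global}) with exactly one endpoint consecutive to $\f(X_1)$ along $\partial\R{X_1}$ and $\f(X_2)$ is an endpoint of $e_1$, either $\f(X_2)$ is the endpoint of $e_1$ antipodal to that one on the $2(k+1)$-gon $\partial\R{X_1}$ --- impossible, since antipodal vertices are at distance $k\ge 3>1$ from $\f(X_1)$ --- or $\f(X_2)$ is the consecutive endpoint $u_1$ of $e_1$; in the latter subcase both $e_1$ (rerouted around $\f(X_1)$) and $e_2$ (rerouted around $\f(X_2)=u_1$) have a terminal tip running along the crossing-free gon edge $\edge{u_1}{\f(X_1)}$ of $\partial\R{X_1}$ (which exists by Property~\ref{pr:gonedge}), and tracing these tips together with the two hooks around $u_1$ and $\f(X_1)$ yields a third crossing of $e_1$ and $e_2$, contradicting the count above; the symmetric argument rules out consecutiveness along $\partial\R{X_2}$.

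For the second item, Lemma~\ref{lem:cross} places the new crossing of $e_1=\edge{u}{v}$ and $e_2=\edge{v}{w}$, after the renaming in the statement, in case (a) or (b). Case (a) is impossible, because a safe edge can cross an adjacent edge only if the latter is critical (Lemma~\ref{lem:nosafeedges}(iii)), contradicting the assumption that $e_2$ is nonrerouted. So we are in case (b): WLOG $e_1=\g(X_1)$ is critical and rerouted around an endpoint of $e_2$; since $e_1$ and $e_2$ share $v$ and no edge is rerouted around one of its own endpoints, $e_1$ is rerouted around $w$, i.e., $\f(X_1)=w$. Then $e_2$ is not critical, else $e_1$ and $e_2$ would be two adjacent critical edges that cross, contradicting Lemma~\ref{lem:noadjcross2}; and $e_2$ is not nonrerouted, because the endpoint of $e_1$ consecutive to $w=\f(X_1)$ along $\partial\R{X_1}$ must be $u$ (if it were $v$, the rerouted $e_1$ would merely run alongside the gon edge $\edge{v}{w}=e_2$ and not cross it), whence $X_1$ is a home for $\edge{v}{w}=e_2$ and $e_2$ is safe. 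Finally, to locate the home of $e_2$ and produce $e_3$: since home-rerouting $e_2$ inside $X_1$ makes $e_2$ run parallel to $e_1$ and hence not cross it, the crossing of $e_1$ and $e_2$ forces $e_2$ to be home-rerouted in another home $X_3$ of $e_2$; unfolding the home definition for $X_3$ (and using the injectivity of $\f$ to rule out $X_3=X_1$) shows that $\f(X_3)$ is the endpoint $v$ of $e_1$ and that $e_3:=\g(X_3)$ is incident to $w$, so that $e_1$ is rerouted around the endpoint $w$ of $e_3$ while $e_3$ is rerouted around the endpoint $v$ of $e_1$; the stated properties follow from this configuration.

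The step I expect to be the main obstacle is the bookkeeping in the second item --- correctly identifying, among the (at most two) homes of $e_2$, the one in which $e_2$ is actually rerouted, and extracting the auxiliary critical edge $e_3$ together with its $(k+1)$-crossing --- and, in the first item, the non-consecutiveness of $\f(X_1)$ and $\f(X_2)$ along the $2(k+1)$-gons, which requires the fine structure of $\partial\R{X_i}$ and the particular choice of $\g$ in Lemma~\ref{lem:global}; the remaining cases collapse quickly via Lemmas~\ref{lem:cross}, \ref{lem:reroute-properties}, \ref{lem:nosafeedges}, and~\ref{lem:noadjcross2}.
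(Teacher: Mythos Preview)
Your overall case analysis for both items follows the paper's proof closely: ruling out safe edges via Lemma~\ref{lem:nosafeedges}(ii), ruling out nonrerouted edges via Lemma~\ref{lem:reroute-properties}(iii), and then using Lemma~\ref{lem:cross}(b) together with the injectivity of $\f$ to force each critical edge to be rerouted around an endpoint of the other. Your treatment of the second item (identifying $X_1$ as a home for $e_2$, then arguing that $e_2$ must have been home-rerouted in a \emph{different} home $X_3$ with $\f(X_3)=v$, producing the critical edge $e_3=\g(X_3)$ incident to $w$) is essentially the paper's argument, written out in slightly more detail.

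The gap is in the ``furthermore'' clause of the first item. You correctly reduce to $\f(X_2)=u$ (the endpoint of $e_1$ consecutive to $\f(X_1)$ on $\partial\R{X_1}$). But your next claim---that $e_2$ also has a tip running along the gon edge $\edge{u}{\f(X_1)}$ \emph{of $\partial\R{X_1}$}---is not justified: the rerouting of $e_2=\g(X_2)$ takes place in $\R{X_2}$, and the tip of $e_2$ that is modified follows a gon edge of $\partial\R{X_2}$, not of $\partial\R{X_1}$. You are implicitly assuming that $\f(X_1)$ is also the endpoint of $e_2$ consecutive to $\f(X_2)$ on $\partial\R{X_2}$, which is precisely what is at stake (and which need not hold when you are only assuming consecutiveness along $\partial\R{X_1}$). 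Consequently the ``third crossing'' you derive is not established, and even if it were, it is unclear what it contradicts: the earlier count gave \emph{at most} two hook crossings, not exactly two.

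The paper's argument for this clause goes in the opposite direction and is simpler. From $\f(X_2)=u$ it observes that the tip of $e_1$ incident to $u$ runs along the edge $\edge{\f(X_1)}{\f(X_2)}$, and hence this tip \emph{avoids} $e_2$. Therefore the hook of $e_2$ around $\f(X_2)=u$ produces \emph{no} crossing with $e_1$, leaving at most one crossing (from the hook of $e_1$ around $\f(X_1)$), which contradicts the hypothesis that $e_1$ and $e_2$ cross more than once. You should replace your ``third crossing'' argument with this ``at most one crossing'' argument; the rest of your proof can stay as is.
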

\begin{proof}
%
Assume first that there exist two edges, $e_1$ and $e_2$, crossing two or more times in $G'$. Since $G$ is simple, at least one of them, say $e_1$, has been rerouted. By Lemma~\ref{lem:nosafeedges}\eqref{nosafeedges-i}, neither~$e_1$ nor $e_2$ is safe, and thus $e_1$ is critical. Also, if $e_2$ is nonrerouted, $e_1$ crosses $e_2$ at most once by Lemma~\ref{lem:reroute-properties}\eqref{reroute-properties-iii}. Thus, we may assume that both $e_1$ and $e_2$ are critical. This implies that~$e_1$ and $e_2$ belong to different $(k+1)$-crossings of $G$; so, they do not cross in $G$ by Property~\ref{pr:k-crossings-disjoint}. Hence, by Lemma~\ref{lem:cross}(b), at least one of them has been rerouted around an endpoint of the other, say $e_1$ around an endpoint of $e_2$. This introduces a single crossing between $e_1$ and $e_2$, namely between the hook of $e_1$ and a tip of $e_2$. Thus, the other crossing must be between the hook of $e_2$ and a tip of $e_1$. By Lemma~\ref{lem:cross}(b) and by the injectivity of $\f$, no two edges are rerouted around the same vertex. We may assume that $e_1=\g(X_1)$ and $e_2=\g(X_2)$ for some $(k+1)$-crossings $X_1,X_2\in \mathcal{X}$, $X_1\neq X_2$; and  $\f(e_1)$ is an endpoint of $e_2$ and $\f(e_2)$ is an endpoint of $e_1$. If vertices $\f(e_1)$ and $\f(e_2)$ are adjacent in $\partial \R{X_1}$ (resp., $\partial \R{X_2}$), then the tip of $e_1$ (resp., $e_2$) follows the edge $\edge{\f(X_1)}{\f(X_2)}$, consequently avoids $e_2$ (resp, $e_1$), and so $e_1$ and $e_2$ would cross at most once (see \figurename~\ref{fig:fan} and \figurename~\ref{fig:twin:a} for examples).

Assume now that there exist two adjacent edges, $e_1=\edge{u}{v}$ and $e_2=\edge{v}{w}$, that cross in $G'$. 
Since $G$ is simple and by Lemma~\ref{lem:nosafeedges}\eqref{nosafeedges-iii}, at least one of $e_1$ and $e_2$ is critical and rerouted around an endpoint of the other. In particular, we can assume that $e_1$ is critical and rerouted around the vertex $u$. But then the $(k+1)$-crossing of $e_1$ is a home for $e_2$. In particular, $e_2$ is safe. By Lemma~\ref{lem:nosafeedges}\eqref{nosafeedges-iii}, there exists another $(k+1)$-crossing that is a home for $e_2$, which implies the existence of a critical edge $e_3$ incident to $u$ that has been rerouted around $v$, as claimed.
\end{proof}

\begin{figure}
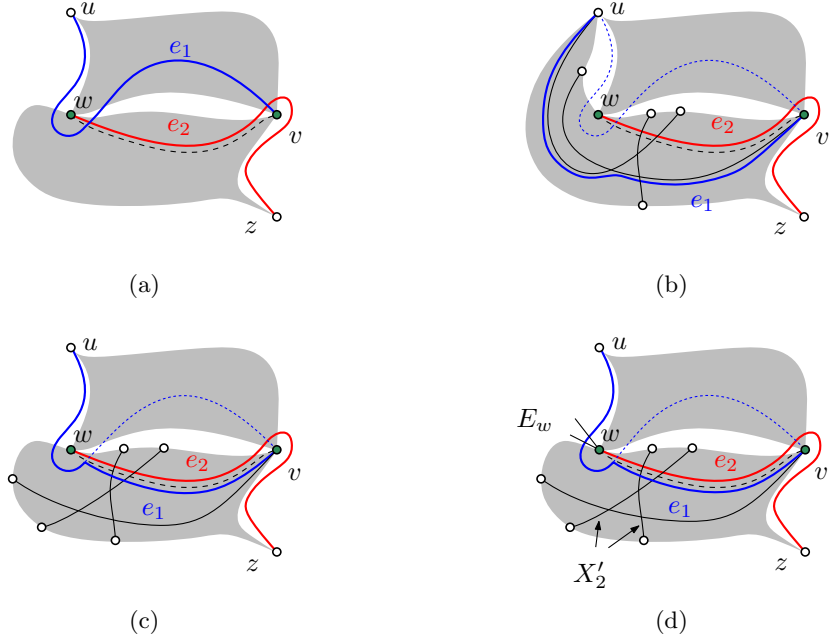

	\centering
	\begin{minipage}[b]{.4\textwidth}
		\centering
		\includegraphics[page=2,scale=1]{nonsimple.pdf}
		\subcaption{~}\label{fig:non-simple-endpoints}
	\end{minipage}
	\hfil
	\begin{minipage}[b]{.4\textwidth}
	\centering
		\includegraphics[page=6,scale=1]{nonsimple.pdf}
		\subcaption{~}\label{fig:non-simple-endpoints-u-in-xd}
	\end{minipage}\\
	\begin{minipage}[b]{.4\textwidth}
		\centering
		\includegraphics[page=4,scale=1]{nonsimple.pdf}
		\subcaption{~}\label{fig:non-simple-rerouting}
	\end{minipage}
	\hfil
	\begin{minipage}[b]{.4\textwidth}
		\centering
		\includegraphics[page=5,scale=1]{nonsimple.pdf}
		\subcaption{~}\label{fig:non-simple-rerouting-crossings}
	\end{minipage}
	\caption{%
          (a)~A double crossing between two edges $e_1$ and $e_2$ due to
          rerouting; the dashed edge $\edge{v}{w}$ may be present or not.  The
          configuration is resolved by redrawing the edge $e_1$ as in (b)~if
          $u\in\V{X_2}$ or (c)~if $u\notin\V{X_2}$.  (d)~Edges crossing $e_1$
          after the transformation.}
\end{figure}

\begin{lemma}\label{lem:finally-nonsimple}
For every \kplanarstg $G$, where $k \ge 3$, there exists a \kplusquasistg  $G^*$, $G^* \simeq G$.
\end{lemma}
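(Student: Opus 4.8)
The plan is to start from the topological graph $G'$ obtained from $G$ by the full rerouting operation with injective functions $\f$ and $\g$, which exist by Lemma~\ref{lem:global}; by Lemma~\ref{lem:nobigcrossings}, $G'$ contains no $(k+1)$-crossing. Hence it suffices to turn $G'$ into a \emph{simple} topological graph without ever creating a $(k+1)$-crossing. By Lemma~\ref{lem:reroute-simple}, the only obstructions to simplicity in $G'$ are of two kinds: \emph{(independent case)} two critical edges $e_1=\g(X_1)$ and $e_2=\g(X_2)$ that cross exactly twice, each rerouted around an endpoint of the other, with $\f(X_1)$ and $\f(X_2)$ non-consecutive along both $\partial\R{X_1}$ and $\partial\R{X_2}$; and \emph{(adjacent case)} two edges $e_1=\edge{u}{v}$, $e_2=\edge{v}{w}$ that share the endpoint $v$ and cross, where $e_1$ is critical rerouted around $w$, $e_2$ is safe rerouted in a home $X'$ distinct from the $(k+1)$-crossing $X_1$ of $e_1$, and there is a third critical edge incident to $w$ that witnesses this situation.

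In both cases I would \emph{redraw the single critical edge $e_1$ once more}, so that it still reaches the vertex $\f(X_1)$ around which it hooks but \emph{without entering the neighborhood that carries the superfluous crossing} --- for the independent case this is a neighborhood of $\f(X_2)$, an endpoint of $e_1$, and for the adjacent case it is a neighborhood of $w=\f(X_1)$ approached from the ``wrong'' side. Concretely, writing $e_1=\edge{u}{v}$ with $v=\f(X_2)$ in the independent case, we keep the hook of $e_1$ at $\f(X_1)$ and its tip along the crossing-free edge $\edge{u}{\f(X_1)}$, but starting from $v$ we now route $e_1$ directly along $\partial\R{X_1}$ (if $u\in\V{X_2}$) or along $\partial\R{X_2}$ (if $u\notin\V{X_2}$) to the vicinity of $\f(X_1)$; this is possible precisely because $\f(X_1)$ and $\f(X_2)$ are not consecutive along the relevant boundary cycle, so a corridor to pass through is available. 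See \figurename~\ref{fig:non-simple-endpoints} for the configuration and \figurename~\ref{fig:non-simple-endpoints-u-in-xd} and \figurename~\ref{fig:non-simple-rerouting} for the two redrawings. The adjacent case is handled by the analogous corridor-redrawing of $e_1$ that lets it approach $w$ from the $e_2$-side. Since only $e_1$ is redrawn, $G^*\simeq G$ is immediate, and the local nature of each redrawing --- it is confined to a neighborhood of $\R{X_1}\cup\R{X_2}$ and touches only the edge $\g(X_1)$ --- together with the interior-disjointness of the regions $\R{X}$ and the injectivity of $\f$ (Property~\ref{pr:k-crossings-disjoint}) ensures that distinct redrawings do not interfere; iterating until no obstruction of Lemma~\ref{lem:reroute-simple} survives (and observing that each step strictly decreases the number of such obstructions) produces $G^*$.

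It then remains to check two things about the redrawn edge $e_1$, which is where the real work lies. First, $e_1$ must cross every edge of $G^*$ at most once and cross no adjacent edge: this follows from a case analysis over the segments of the new $e_1$ in the spirit of Lemma~\ref{lem:reroute-properties}\eqref{reroute-properties-iii} --- the new $e_1$ meets at most $k-1$ edges of $X_1$, a fan at $\f(X_1)$, a fan around the vertex of $\partial\R{X_1}$ (or $\partial\R{X_2}$) where the corridor bends, and the edges of $X_2$ along the part of $e_1$ that is retained, each of them exactly once, because these fans occupy disjoint neighborhoods and a boundary run crosses each remaining edge of a $k$-crossing exactly once. Second, $e_1$ must not be part of a $(k+1)$-crossing in $G^*$; I would argue this exactly as in Lemma~\ref{lem:nobigcrossings}: if $k$ edges pairwise crossing all crossed $e_1$, then two of them would have to be rerouted around two distinct endpoints of a common edge, which is impossible because $G$ is simple, and the hypothesis $k\ge 3$ supplies the counting slack (an edge receives only a bounded number of ``new'' crossings from critical edges, cf.\ Lemma~\ref{lem:non-rerouted-3-crossings}). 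The main obstacle is precisely this second verification: after the corridor-redrawing, $e_1$ acquires a genuinely new set of crossing edges --- two fans plus a stretch of the boundary of some $2(k+1)$-gon --- and ruling out that these, possibly together with edges of $X_1$, contain $k+1$ pairwise crossing edges requires combining the ``two endpoints of a common edge'' obstruction with a careful accounting that genuinely uses $k\ge 3$.
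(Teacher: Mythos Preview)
Your high-level plan is right --- start from $G'$, invoke Lemma~\ref{lem:reroute-simple}, and redraw one critical edge per offending pair --- and your case split on whether $u\in\V{X_2}$ matches the paper's. But two points are off.

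First, your redrawing is more complicated than it needs to be, and that is exactly what creates the ``main obstacle'' you flag. You keep the hook of $e_1$ at $\f(X_1)$ and then route the $v$-side along $\partial\R{X_1}$ or $\partial\R{X_2}$, acquiring an extra fan at a ``corridor bend'' and still meeting edges of $X_1$. The paper's redrawing is simpler and yields a much cleaner crossing inventory. When $u\notin\V{X_2}$, it keeps the hook and the $u$-tip of $e_1$ exactly as in $G'$ and only replaces the $v$-tip by a curve running alongside the $z$-tip of $e_2$ (the tip that the hook of $e_1$ already crosses), avoiding both $e_2$ and $\edge{v}{w}$. When $u\in\V{X_2}$, it redraws $e_1$ entirely inside $\R{X_2}\setminus\D{X_2}$, home-rerouting style. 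Either way, the redrawn $e_1$ crosses only a set $X_2'\subset X_2$ of size at most $k-1$ together with the fan $E_w$ at $w$ that it already crossed in $G'$; in particular it crosses no edge of $X_1$. Since no edge of $X_2'$ meets any edge of $E_w$ (the only edge of $X_2$ incident to $w$ is $e_2\notin X_2'$), the absence of a $(k+1)$-crossing through $e_1$ is immediate --- no Lemma~\ref{lem:nobigcrossings}-style counting is needed. Your additional corridor-bend fan is precisely what would make that verification hard, and it is avoidable.

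Second, treating the adjacent case by a separate redrawing is unnecessary and your termination claim (``each step strictly decreases the number of obstructions'') is not argued. By Lemma~\ref{lem:reroute-simple}, an adjacent crossing between a critical $e_1$ and a safe $\edge{v}{w}$ forces a second critical edge $e_3$ with $\{e_1,e_3\}$ already an independent-case pair; the paper handles both defects at once by choosing, for each such pair, \emph{which} of the two critical edges to redraw according to which home $\edge{v}{w}$ was routed into. Moreover, since $\f$ is injective the offending pairs are pairwise disjoint, so all redrawings can be performed simultaneously rather than iteratively.
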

\begin{proof}
Let $G'$ be the  \kplusquasi topological graph obtained from $G$ by executing a full rerouting operation (cf.~Lemma~\ref{lem:nobigcrossings}). We may assume that $G'$ is not simple, as otherwise the statement would follow with $G^*=G'$. By Lemma~\ref{lem:reroute-simple}, there exists a pair of critical edges that are each rerouted around an endpoint of the other; see \figurename~\ref{fig:non-simple-endpoints}. Let $\mathcal{P}$ be the set of such (unordered) pairs.
Note that the pairs in $\mathcal{P}$ are pairwise disjoint since $\f$ is injective.

\paragraph{Rerouting operations}
For every pair in $\mathcal{P}$, we reroute one of the two edges as follows. Let $\{e_1,e_2\}\in \mathcal{P}$.
We introduce some notation. Assume that $e_1\in X_1$ and $e_2\in X_2$ for some $X_1,X_2\in \mathcal{X}$.
Further assume that $e_1=\edge{u}{v}$ is rerouted around vertex $w$, and $e_2=\edge{w}{z}$ is rerouted around $v$.
Note that $v$ and $w$ are adjacent in neither $\partial \R{X_1}$ nor $\partial \R{X_2}$
by Lemma~\ref{lem:reroute-simple}. 
Hence $u$ and $w$ are consecutive along $\partial \R{X_1}$, and $v$ and $z$ are consecutive along $\partial \R{X_2}$.

If $\edge{v}{w}\notin E$, then we can arbitrarily choose $e_1$ or $e_2$ to be redrawn. Otherwise, edge $\edge{v}{w}$ has a home in both $X_1$ and $X_2$. If $\edge{v}{w}$ has been redrawn in $X_1$ due to a home rerouting operation, then we redraw $e_1$, else we redraw $e_2$. In the following we assume without loss of generality that we redraw $e_1$.

We distinguish between two cases, based on whether $u \in \V{X_2}$. Assume first that $u \in \V{X_2}$. In this case, both endpoints of $e_1=\edge{u}{v}$ are in $\V{X_d}$ for $d=1,2$, but $e_1\notin X_2$ and $X_2$ is not a home for $e_1$ (otherwise $e_1$ would be safe). Similarly to the home rerouting operation, we redraw $e_1$ in the region $\R{X_2}\setminus \D{X_2}$, such that it crosses the edges of $X_2$ at most once, and such that it crosses neither $e_2$ nor the edge of $X_2$ incident to $u$; see \figurename~\ref{fig:non-simple-endpoints-u-in-xd}. Note that the new drawing of $e_1$ does not cross the two possible safe edges in $\R{X_2}$.

Assume next that $u \notin \V{X_2}$. We redraw the portion of $e_1$ between the two crossings with $e_2$ by following $e_2$, crossing neither $e_2$ nor the possible safe edge $\edge{v}{w}$ (if it exists). More precisely, we redraw the tip of $e_1$ crossed by the hook of $e_2$ by following the tip of $e_2$ crossed by the hook of $e_1$ without crossing it and without crossing $\edge{v}{w}$; see \figurename~\ref{fig:non-simple-rerouting}.

Denote by $G^*$ the topological graph obtained by applying the operation for every pair in $\mathcal{P}$.

\paragraph{Proving that $G^*$ does not contain $(k+1)$-crossings}
Since $G'$ is \kplusquasi by Lemma~\ref{lem:nobigcrossings}, every $(k+1)$-crossings
of $G^*$ involves an edge that has been redrawn. 
Let $e_1$ be an edge that has been redrawn from a pair $\{e_1,e_2\}\in \mathcal{P}$,
let $X_1$ and $X_2$ be the $(k+1)$-crossings of $G$ containing $e_1$ and $e_2$, respectively.
The edges crossing $e_1$ in $G^*$ are (see \figurename~\ref{fig:non-simple-rerouting-crossings}):
\begin{enumerate}[(i)]
\item a set $X_2'\subset X_2$ of edges crossing the tip of $e_2$ that is used to
  redraw $e_1$ and
\item a fan $E_w$ of edges incident to the vertex $w$ around which $e_1$ has been
  rerouted (and thus these edges cross the hook of $e_1$).
\end{enumerate}
Since $u$ and $w$ are consecutive around $\R{X_1}$, it follows that $e_1$ does not
cross any edge in $X_1$. Further, note that $X_2'$ contains all the edges that cross
$e_1$ in $G^*$ and not in $G'$. This immediately implies that any $(k+1)$-crossing
of $G^*$ that contains $e_1$ also contains some edge of $X_2'$, otherwise such a
$(k+1)$-crossing would also exist in $G'$. The edges in $X_2'$ do not cross
edges in $E_w$, since $X_2$ does not contain any edge incident to $w$, other
than $e_2$. Finally, there are at most $k-1$ edges in~$X_2'$, since $X_2$ contains
$k+1$ edges and at least two of them do not cross $e_1$, namely $e_2$ and the edge
incident to $v$. Thus, the edges in~$X_2'$ are not involved in any
$(k+1)$-crossing with $e_1$. This implies that $G^*$ is $(k+1)$-quasiplanar.

\paragraph{Proving that $G^*$ is simple}
By Lemma~\ref{lem:reroute-simple} and by the choice of the edge $e_1\in \{e_1,e_2\}$, $\{e_1,e_2\}\in \mathcal{P}$,
if $G^*$ contains a pair of independent edges that cross more then once or a pair of adjacent edges cross,
then this pair must include an edge that has been rerouted. 

Let $e_1$ be an edge that has been redrawn from a pair $\{e_1,e_2\}\in \mathcal{P}$, with the notation 
used for the description of the rerouting above. 
First observe that $e_1$ does not cross any edge incident to $v$ in $G^*$: it crosses neither $\edge{v}{w}$ nor the edge of $X_2$ incident to $v$, by construction. Also, $e_1$ does not cross any edge incident to $u$ in $G^*$. In fact, $e_1$ does not cross $\edge{u}{w}$, since $u$ and $w$ are consecutive along $\partial \R{X_1}$; hence, if $e$ crosses an edge incident to $u$, then this edge belongs to~$X_2'$. However, this implies that $u \in \V{X_2}$, and thus $e_1$ has been redrawn without crossing the edge of $X_2$ incident to $u$.

It remains to prove that $e_1$ does not cross any edge more than once. First observe that
the tip of $e_1$ incident to $u$ has not been redrawn, since it has been rerouted in
$G'$ by following edge $\edge{u}{w}$, which is uncrossed in $G$ since $u$ and $w$ are
consecutive along $\partial \R{X_1}$. Also, $e_1$ does not cross any edge in $E_w$ twice,
since it crosses all these edges in $G'$, and the only edge crossed twice by $e_1$
in $G'$ is $e_2$, by Lemma~\ref{lem:reroute-simple}. Hence, if $e_1$ crosses an edge
$e_3$ twice, then $e_3$ belongs to~$X_2'$. Since $e_1$ does not cross any edge of
$X_2'$ in $G'$, by Property~\ref{pr:k-crossings-disjoint}, we have that both
crossings between $e_1$ and $e_3$ have been introduced by the redrawing of
$e_1$. However, if $u \in \V{X_2}$, then the redrawing of $e_1$, which is analogous
to the home rerouting operation, does not introduce any double crossing on $e_1$,
by construction. On the other hand, if $u \notin \V{X_2}$, then the part of $e_1$
that has been redrawn has the same crossings as $e_2$, which does not cross any
edge in $X_2$ twice. Thus, $e_1$ does not cross any edge twice.
We conclude that $G^*$ is a simple topological graph.
\end{proof}

The next theorem summarizes the main result of the paper.

\begin{theorem}\label{th:result}
Every \kplanarstg is a \kplusquasistg, for every $k \geq 2$.
\end{theorem}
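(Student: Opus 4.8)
The plan is to observe that all the substantive work has been done in the preceding lemmas, so the proof is an assembly that splits on the value of $k$. First I would put the input into the normal form assumed throughout Sections~\ref{se:rerouting-strategy}--\ref{se:simplicity}. Given an arbitrary simple $k$-planar graph $G_0$ with $k\ge 2$, I would note that deleting edges from a drawing witnessing simple $(k+1)$-quasiplanarity preserves simplicity and the absence of $k+1$ pairwise crossing edges, so every subgraph of a simple $(k+1)$-quasiplanar graph is simple $(k+1)$-quasiplanar; hence it suffices to treat an edge-maximal supergraph, i.e., we may assume $G_0$ is edge-maximal. Among all simple $k$-planar topological graphs isomorphic to $G_0$, I would pick one, $G$, with the fewest crossings, and then apply Lemma~\ref{lem:removing-tangled}; since the untangling operation creates no new crossing, $G$ stays crossing minimal, and up to a projective transformation the vertices of each $(k+1)$-crossing $X$ lie on the outer face of $A_X$. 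At this point $G$ satisfies every standing hypothesis of the earlier sections.

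Next I would do the case distinction. If $k\ge 3$, Lemma~\ref{lem:finally-nonsimple} directly produces a $(k+1)$-quasiplanar simple topological graph $G^\ast\simeq G$, and since $G\simeq G_0$ this shows $G_0$ is simple $(k+1)$-quasiplanar. If $k=2$, Lemma~\ref{lem:g} supplies functions $\f,\g$ so that the full rerouting of $G$ with $(\g,\f)$ yields a $3$-quasiplanar topological graph $G'\simeq G$, and Lemma~\ref{lem:finally-simple} certifies that this particular $G'$ is simple; again $G'\simeq G\simeq G_0$ finishes the argument. Combining the two cases proves the theorem.

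Since everything has already been proved, there is no real remaining obstacle in this final step; the only subtlety I would flag is why the two ranges of $k$ need different lemmas. For $k\ge 3$ the plain global rerouting of Section~\ref{sec:function-f} already destroys all $(k+1)$-crossings (Lemma~\ref{lem:nobigcrossings}) but may leave the graph non-simple, and Lemma~\ref{lem:finally-nonsimple} is precisely the cleanup removing the residual double crossings between mutually rerouted critical edges without recreating a $(k+1)$-crossing. For $k=2$ the same rerouting can instead create new $3$-crossings (twins and whirls), so one needs the refined choices of $\f$ (Lemma~\ref{lem:nested}) and $\g$ (Lemma~\ref{lem:config}) packaged in Lemma~\ref{lem:g}, after which Lemma~\ref{lem:finally-simple} gives simplicity for free. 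Thus the hard part — making the rerouting functions simultaneously injective, smooth, and crossing-avoiding, and controlling the cleanup — was discharged earlier; the theorem itself only needs to quote the right lemma in each regime.
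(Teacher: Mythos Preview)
Your proposal is correct and follows essentially the same route as the paper's own proof: reduce to the untangled, edge-maximal, crossing-minimal setting via Lemma~\ref{lem:removing-tangled} and the standing assumptions of Section~\ref{ssec:properties-untangled}, then invoke Lemma~\ref{lem:finally-nonsimple} for $k\ge 3$ and Lemmas~\ref{lem:g} and~\ref{lem:finally-simple} for $k=2$. The only cosmetic difference is that the paper also cites Lemma~\ref{lem:nobigcrossings} explicitly for $k\ge 3$ before Lemma~\ref{lem:finally-nonsimple}, whereas you (equivalently) let Lemma~\ref{lem:finally-nonsimple} absorb that step.
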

\begin{proof}
Let $G$ be a \kplanarstg, with $k \geq 2$. We show that there exists a \kplusquasistg $G^*$ such that $G^* \simeq G$.	
First recall that, by Lemma~\ref{lem:removing-tangled}, we can assume that $G$ does not contain any tangled $(k+1)$-crossing. By Lemma~\ref{lem:g}, for $k=2$, and by Lemma~\ref{lem:nobigcrossings}, for $k \geq 3$, there exist functions $\f$ and $\g$ such that the topological graph $G' \simeq G$ obtained by applying a global rerouting  $(\g,\f)$ to $G$ is \kplusquasi.

When $k=2$, the topological graph $G'$ is also simple, as proved in Lemma~\ref{lem:finally-simple}, and the statement follows with $G^* = G'$. For the case $k \geq 3$, instead, $G'$ need not be simple. However, in this case, we can apply an additional redrawing operation which results in a \kplusquasistg $G^*$, $G^* \simeq G'$, by Lemma~\ref{lem:finally-nonsimple}.
\end{proof}

\section{Conclusions and Open Problems}\label{sec:conclusions}

\begin{figure}
	\centering
	\includegraphics[width=0.8\textwidth]{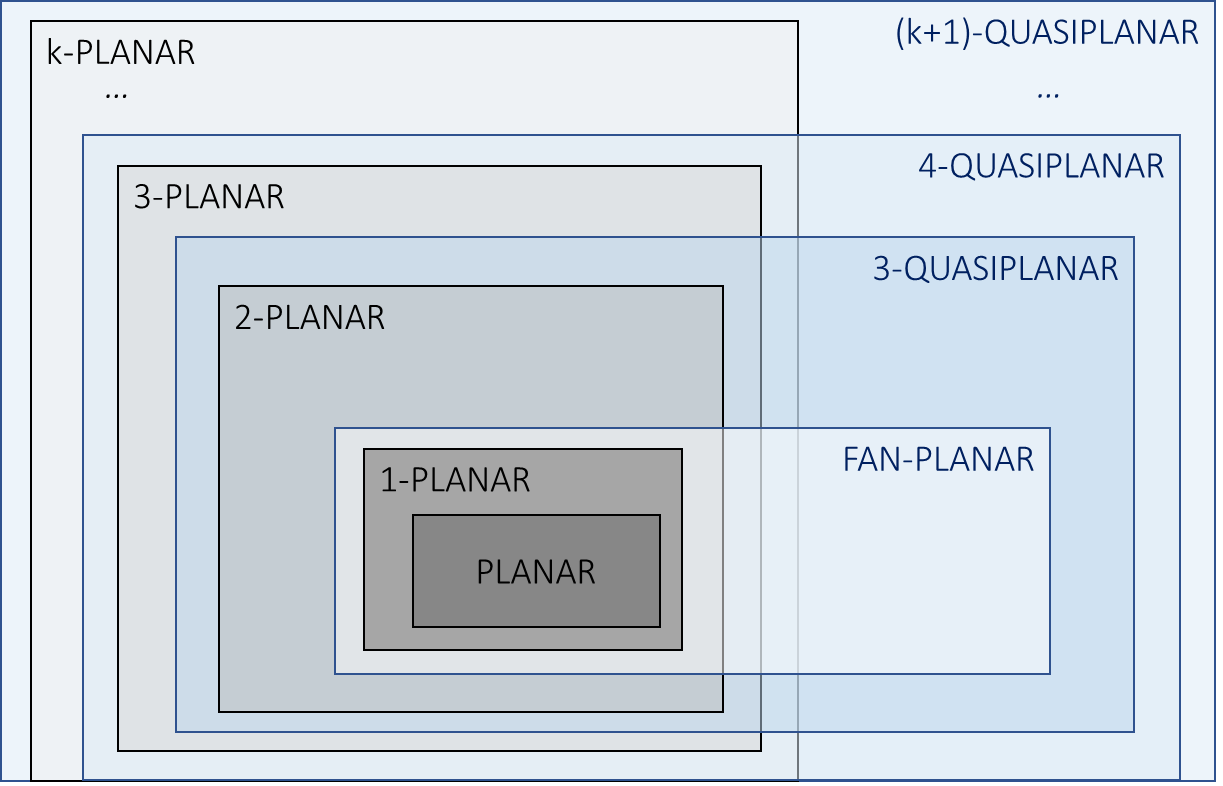}
	\caption{A Venn diagram showing that every \kplanarstg is a \kplusquasistg, for every $k \geq 2$.\label{fi:venn}}
\end{figure}

We have proved that, for any $k \geq 2$, the family of \kplanarstg{s} is included in the family of \kplusquasistg{s}, see also \figurename~\ref{fi:venn} for a diagram illustrating this relationship. This result represents the first nontrivial relationship between the $k$-planar and the $k$-quasiplanar graph hierarchies, and contributes to the literature that studies the connection between different families of beyond planar graphs (see, e.g.~\cite{DBLP:journals/jgaa/BinucciCDGKKMT17,DBLP:journals/tcs/BinucciGDMPST15,DBLP:journals/tcs/BrandenburgDEKL16,DBLP:journals/dam/EadesL13}). Several interesting problems remain open. Among them:

\begin{itemize}
\item We do not know whether our main result extends to nonsimple \planar{k} graphs for $k\geq 4$: Is every \planar{k} graph \quasi{(k+1)}? For $k\leq 3$, every \planar{k} graph is simple \planar{k}~\cite[Lemma~1.1]{DBLP:journals/dcg/PachRTT06}.
    Hence Theorem~\ref{th:result} readily implies that every \planar{k} graph is simple \quasi{(k+1)} for $k=2,3$.

\item For $k \geq 3$, one can also ask whether every $k$-planar graph is $k$-quasiplanar.
 For $k=2$ the answer is trivially negative, as $2$-quasiplanar graphs are precisely the planar graphs. On the other hand, optimal 3-planar graphs are known to be ($3$-)quasiplanar~\cite{bkr-socg17}. We recall that an $n$-vertex $3$-planar graph is optimal if it has $5.5n-11$ edges~\cite{DBLP:conf/jcdcg/PachRT02}, and that so far only multi-graphs are known to be in this family, while it is still unknown whether $3$-planar graphs with no multi-edges can have $5.5n-11$ edges (see also~\cite{bkr-socg17}). For sufficiently large values of $k$, one can even investigate whether every $k$-planar simple topological (sparse) graph $G$ is $f(k)$-quasiplanar, for some function $f(k)=o(k)$.

\item One can study \emph{non-inclusion} relationships between the $k$-planar and the $k$-quasiplanar graph hierarchies, other than those that are easily derivable from the known edge density results. For example, for any given $k > 3$, can we establish an integer function $h(k)$ such that some \planar{h(k)} graph is not \quasi{k}?

\item A long-standing open problem is to establish the computational complexity of recognizing $k$-quasiplanar graphs. Is there a polynomial-time algorithm that decides whether a given graph is quasiplanar (or $k$-quasiplanar for a given constant $k$)?

\end{itemize}

\subsection*{Acknowledgements}
The research in this paper started at the Dagstuhl Seminar 16452 ``Beyond-Planar
Graphs: Algorithmics and Combinatorics'' and at the Fifth Annual Workshop on Geometry and Graphs,
March 6--10, 2017, at the Bellairs Research Institute of McGill University.
We thank all participants, and in particular Pavel Valtr and Raimund Seidel, for useful discussions on the topic.

Research supported in part by: DFG grant Ka812/17-1, MIUR project ``MODE'' under PRIN 20157EFM5C, ``AHeAD'' under PRIN 20174LF3T8, H2020-MSCA-RISE project 734922 – ``CONNECT'', MIUR-DAAD Joint Mobility Program: N$^\circ$ 34120 and N$^\circ$ 57397196, project ``Algoritmi e sistemi di analisi visuale di reti complesse e di grandi dimensioni - Ricerca di Base 2018, Dipartimento di Ingegneria dell'Universit\`a degli Studi di Perugia'', by the Swiss National Science Foundation within the collaborative DACH project \emph{Arrangements and Drawings} as SNSF Project 200021E-171681, and by the NSF awards CFF-1422311 and CFF-1423615.

Finally, we would like to acknowledge the anonymous referees of this paper for their useful comments and suggestions, which helped us to improve the quality of this paper.

\bibliographystyle{abbrv}
\bibliography{bibliography}

\end{document}